\theoremstyle{plain}
\newtheorem{theorem}{Theorem}[section]
\newtheorem{Proposition}[theorem]{Proposition}
\newtheorem{definition}[theorem]{Definition}
\newtheorem{lemma}[theorem]{Lemma}
\newtheorem{remark}[theorem]{Remark}
\newcommand{\q}[1]{``#1''}
\DeclareMathOperator*{\esssup}{ess\,sup}
\DeclareMathOperator*{\argmax}{arg\,max}
\DeclareMathOperator*{\arginf}{arg\,inf}
\DeclareMathOperator{\spn}{span}
\newcounter{countlist}
\newenvironment{countlist}[2][]
  {\begin{enumerate}[#1]
     \setcounter{countlist}{0}%
     \def\countname{#2}%
     \let\olditem\item
     \renewcommand{\item}{\stepcounter{countlist}\olditem}}
  {  \renewcommand{\@currentlabel}{\arabic{countlist}}%
     \label{\countname}%
   \end{enumerate}}
\def\doubleunderline#1{\underline{\underline{#1}}}
\title{An analysis of linear regression and neural networks approximation for the pricing of swing options}
\author[1,2]{Christian Yeo}
\affil[1]{\footnotesize Sorbonne Université, Laboratoire de Probabilités, Statistique et Modélisation, UMR 8001, Paris, France}
\affil[2]{\footnotesize Engie Global Markets, 92400 Courbevoie, France (e-mail: christian.yeo@sorbonne-universite.fr)}
\date{}
\numberwithin{equation}{section}
\begin{document}

\maketitle

\begin{abstract}
\noindent
Linear regression, firstly introduced for the pricing of American-style options, has since been expanded to include swing options pricing. Swing options price may be viewed as the solution to a Backward Dynamic Programming Principle, which involves a conditional expectation known as the continuation value. The approximation of the continuation value using linear regression involves two levels of approximation. First, the continuation value is replaced by an orthogonal projection over a subspace spanned by a finite set of $m$ squared-integrable functions yielding a first approximation $V^m$ of the swing value function. In this paper, we prove that, with well-chosen regression functions, $V^m$ converges to the swing actual price $V$ as $m \to + \infty$. A similar result is proved when classic regression functions are replaced by neural networks. For both methods (linear regression and neural networks), we analyze the second level of approximation involving practical computation of the swing price using Monte Carlo simulations and yielding an approximation $V^{m, N}$ (where $N$ denotes the Monte Carlo sample size). Especially, we prove that $V^{m, N} \to V^m$ as $N \to + \infty$ for both methods and using a Hilbert basis assumption in the linear regression. Besides, a convergence rate of order $\mathcal{O}\big(\frac{1}{\sqrt{N}} \big)$ is proved in the linear regression case.
\end{abstract}

\textit{\textbf{Keywords} - Swing options, linear regression, neural networks approximation, convergence analysis.}

\vspace{0.3cm}

\section*{Introduction}
Swing contracts \cite{Thompson1995ValuationOP, swingAna} are commonly traded derivatives products in commodity markets and they allow to manage commodity supply. 
These contracts allow their holder to purchase amounts of energy on specific dates (called exercise dates), subject to constraints. The pricing 
\cite{Jaillet2004ValuationOC, BarreraEsteve2006NumericalMF, 10.1007/978-3-642-25746-9_15, Bardou2009OptimalQF, CarmonaTouzi, Kluge} of such a contract is a challenging 
problem that involves finding a vector that represents the amounts of energy purchased through the contract, while maximizing the gained value. This problem is 
doubly-constrained (exercise dates constraint and volume constraints) and its pricing had been addressed using two groups of methods in the literature. One group 
concerns methods that are based on the Backward Dynamic Programming Principle (BDPP) \cite{Bardou2009OptimalQF, BarreraEsteve2006NumericalMF}, which determines the 
swing price backwardly from the expiry of the contract until the pricing date. In BDPP based approach, at each exercise date, the swing value is determined as the 
maximum of the current cash flows plus the continuation value, which is the (conditional) expected value of future cash flows. To compute the continuation value, 
nested simulations may be used, but this can be time-consuming. Alternatively, an orthogonal projection over a vector space spanned by a finite set of squared-integrable 
functions may be used, based on the idea of the linear regression method introduced by Longstaff and Schwartz \cite{Longstaff2001ValuingAO} for the pricing of 
American-style options \cite{Myneni1992ThePO, Parkinson1977OptionPT, Becker2019PricingAH}. Later on, this this mehod had been used to solve more general stochastic control problems 
\cite{Belomestny2009RegressionMF, doi:10.1080/14697688.2015.1088962} and especially in the context of swing contract pricing \cite{BarreraEsteve2006NumericalMF}. 
Despite being widely used by practitioners, in the context of swing pricing, this method has received little studies in terms of convergence. 
The paper \cite{Belomestny2009RegressionMF} analyzes the convergence of general regression methods in the context of stochastic control problems. 
It appears that, for the sake of generality, they made strong assumptions to prove Monte Carlo convergence. Specifically, they imposed a boundedness assumption on 
regression coefficients, and their convergence result is contingent on selecting a particular size of the regression basis based on this boundedness assumption. 
These assumptions are very strong since, even if regression functions are bounded, there is no inherent guarantee that the same holds true for regression coefficients 
without additional assumptions. In this paper, we propose to focus on swing contracts pricing allowing to make less restrictive assumptions. 
Besides, in BDPP based approaches, swing value function depends on cumulative consumption so that to directly apply analysis performed in \cite{Belomestny2009RegressionMF}, 
one may need to include cumulative consumption as a state variable along with the Markov process driving the underlying asset price. However, this can be challenging, 
if not impossible, to implement as it requires to know the joint distribution of the underlying asset price and the cumulative consumption. Indeed, there is no insight on the 
latter joint distribution as it may be noticed in \cite{Bachouch2021DeepNN} where, in the context of storage pricing (contracts whose pricing is closed to that of swing contracts), 
the authors have used uniform sampling for cumulative consumption as a proxy. Since, in practice, we do not have access to the joint distribution of 
the underlying asset price and the cumulative consumption, the analysis of the convergence of BDPP based methods faces what we call \textit{uniform convergence issue}. The latter
just refers to that we will have to uniform convergence result with respect to cumulative consumption.

In this paper, we do not restrict ourselves to linear regression and analyze an alternative method which consist in approximating the continuation value, 
not by an orthogonal projection but, using neural networks. Both methods for approximating the swing contract price are analyzed in a common framework. 
To achieve this, we proceed as in previous works \cite{Lapeyre2019NeuralNR, elfilaliechchafiq:hal-03436046, Clement2002AnAO} by proving some convergence results into two main steps. 
We first replace the continuation value by either an orthogonal projection over a well-chosen basis of regression functions or by neural network. 
We demonstrate that the resulting swing value function, as an approximation of the actual one, converges towards the actual one as the number of functions in the regression basis 
or the number of units per hidden layer in the neural network increases. Furthermore, practically, a Monte Carlo simulation has to be performed. This is needed to compute 
the orthogonal projection coordinates in the linear regression method; which generally has no closed form while it serves as input for training the neural network. 
This leads to a second level of approximation, a Monte Carlo approximation. In this paper, we prove that, under some assumptions, this second approximation converges 
to the first one for both studied methods. Moreover, in the linear regression method, a rate of order $\mathcal{O}(N^{-1/2})$ ($N$ being the size of the Monte Carlo sample) 
of the latter convergence is proved.

\subsubsection*{Contribution of the paper}
\begin{itemize}
    \item In the firm constraints setting, we establish the continuity of the swing value function with respect to cumulative volume. To the best of our knowledge, 
    this result has not been demonstrated previously. This result is necessary, if not indispensable, to circumvent the \emph{uniform convergence issue} mentioned above in the introduction.
    \item For the best of our knowledge, in the swing pricing literature, this paper is the first to analyze theoretical evidence of linear regression (à la Longstaff-Schwartz) practically used in \cite{BarreraEsteve2006NumericalMF}. We additionally analyze a neural network based alternative. Practical evidence of this alternative will be provided in a forthcoming paper.
\end{itemize}

\subsubsection*{Organization of the paper} Section \ref{swing_gen}. provides general background on swing contracts. We thoroughly discuss his pricing and prove one of the main results of this paper concerning the continuity of the swing value function with respect to the cumulative consumption. Section \ref{continuation_approx}. We describe the methodology to approximate the swing value function using either linear regression or neural networks and fix notations and assumptions that will be used in the sequel. Section \ref{cvg_analysis}. We state the main convergence results of this paper as well as some other technical results concerning some deviation inequalities.

\subsection*{Notations}

$\bullet$ $\mathbb{R}^d$ is endowed with the Euclidean norm denoted by $|\cdot|$. $\langle \cdot, \cdot \rangle$ will denote Euclidean inner-product of $\mathbb{R}^d$.

\noindent
$\bullet$ $\mathbb{L}^2_{\mathbb{R}^d}\big(\mathbb{P}\big)$ denotes the space of $\mathbb{R}^d$-valued squared-integrable, with respect to the probability measure $\mathbb{P}$, 
random variables and is equipped with the canonical norm $\| \cdot \|_2$. 

\noindent
$\bullet$ $|\cdot|_{\sup}$ denotes the sup-norm on functional spaces. 

\noindent
$\bullet$ $\mathbb{M}_{d,q}\big(\mathbb{R}\big)$ will represent the space of matrix with $d$ rows, $q$ columns and with real coefficients. When there is no ambiguity, 
we will consider $|\cdot|$ as the Frobenius norm; the space $\mathbb{M}_{d,q}\big(\mathbb{R}\big)$ will be equipped with that norm. For $m \ge 2$, we denote 
by $\mathbb{G}L_m\big(\mathbb{R}\big)$ the subset of $\mathbb{M}_{m,m}\big(\mathbb{R}\big)$ made of non-singular matrices. 

\noindent
$\bullet$ For a metric space $(E, d)$ and a subset $A \subset E$, we define the distance between $x \in E$ and the set $A$ by,
$$d(x, A) = \underset{y \in A}{\inf} \hspace{0.1cm} d(x,y).$$
We denote by $d_H(A, B)$ the Hausdorff metric between two closed, bounded and non-empty sets $A$ and $B$ (equipped with a metric $d$) which is defined by
$$d_H(A, B) = \max\Bigg(\underset{a \in A}{\sup} \hspace{0.1cm}  d(a, B), \hspace{0.2cm} \underset{b \in B}{\sup} \hspace{0.2cm}  d(b, A)\Bigg).$$

\noindent
$\bullet$ Let $E$ be a real pre-Hilbert space equipped with an inner-product $\langle \cdot, \cdot \rangle$ and consider $n$ vectors $x_1, \ldots, x_n$ of $E$. 
The Gram matrix associated to $(x_1, \ldots, x_n)$ is the symmetric non-negative matrix with coefficients $\big(\langle x_i, x_j \rangle\big)_{1 \le i, j \le n}$. 
The determinant of the latter matrix, the Gram determinant, will be denoted by $G(x_1, \ldots, x_n) := \det \big(\langle x_i, x_j \rangle\big)_{1 \le i, j \le n}$.

\section{Swing contract}
\label{swing_gen}
In this first section, we establish the theoretical foundation for swing contracts and their pricing using the \emph{Backward Dynamic Programming Principle (BDPP)}. 
These background being recalled, we will prove important theoretical properties concerning the set of optimal controls involved in the latter principle.

\subsection{Description}
Swing option allows its holder to buy amounts of energy $q_{k}$ at times $t_k$, $k \in \{ 0, \ldots,n-1\}$ (called exercise dates) until the contract maturity $t_n = T$. 
At each exercise date $t_k$, the purchase price (or strike price) is denoted $K_k$ and can be constant (i.e. $K_k := K$ for any $k$) or indexed on a formula. 
In the indexed strike setting, the strike price is calculated as an average of observed commodity prices over a certain period. In this paper, we only consider the fixed strike 
price case. However the indexed strike price case can be treated likewise.

In addition, swing option gives its holder a flexibility on the amount of energy he is allowed to purchase through some (firm) constraints:
\begin{itemize}
    \item \textbf{Local constraints}: at each exercise time $t_k$, the holder of the swing contract has to buy at least $\underline{q}$ and at most $\overline{q}$ i.e.,
    \begin{equation}
        \forall k \in \{0,\ldots,n-1\}, \quad \underline{q}\le q_{k} \le \overline{q}.
    \end{equation}

    \item \textbf{Global constraints}: at maturity, the cumulative purchased volume must be not lower than $\underline{Q}$ and not greater than $\overline{Q}$ i.e.,
    \begin{equation}
    Q_{n} := \sum_{k = 0}^{n-1} q_{k} \in [\underline{Q}, \overline{Q}] \quad \text{with} \quad Q_0 = 0 \quad \text{and} \quad 0 \le \underline{Q} \le \overline{Q} < +\infty.
    \end{equation}
\end{itemize}

At each exercise date $t_k$, the achievable cumulative consumption lies within the following interval, 
\begin{equation}
\label{range_cum_vol}
\mathcal{T}_k :=  \big[Q^{down}(t_k) , Q^{up}(t_k) \big],
\end{equation}

\noindent
where
\begin{equation*}
\left\{
    \begin{array}{ll}
        Q^{down}(t_0) = 0,\\
        \displaystyle Q^{down}(t_k) = \max\big(0, \underline{Q} - (n-k) \cdot \overline{q} \big),\hspace{0.3cm} k \in \{1,\ldots,n-1\}, \\
        Q^{down}(t_n) = \underline{Q}.
    \end{array}
\right.
\end{equation*}
\begin{equation*}
\left\{
    \begin{array}{ll}
        Q^{up}(t_0) = 0,\\
        \displaystyle Q^{up}(t_k) = \min\big(k \cdot \overline{q}, \overline{Q}\big) ,\hspace{0.3cm} k  \in \{1,\ldots,n-1\},\\
        Q^{up}(t_n) = \overline{Q}.
    \end{array}
\right.
\end{equation*}
Note that, in this paper, we only consider \textbf{firm constraints} which means that the holder of the contract cannot violate the constraints. However there exists in the literature alternative settings where the holder can violate the global constraints (not the local ones) but has to pay, at the maturity, a penalty which is proportional to the default; the excess in case of overconsumption and the deficit in case of underconsumption (see \cite{Bardou2009OptimalQF, BarreraEsteve2006NumericalMF}).

The pricing of swing contract is closely related to the resolution of a Stochastic Optimal Control problem (\emph{SOC}), where we aim at finding the optimal decision process 
$(q_k)_{0 \le k \le n-1}$ maximizing the expected value of the discounted cash flows. The latter \emph{SOC} problem is often handled by the \emph{BDPP}.

\subsection{Backward Dynamic Programming Principle}
\label{bdpp_sec}
Let $\left(\Omega, \mathcal{F}, \{ \mathcal{F}_k \}_{0\le k \le n-1}, \mathbb{P} \right)$ be a filtered probability space. We assume that there exists a 
$d$-dimensional (discrete) Markov process $\big(X_{t_k}\big)_{0 \le k \le n-1}$ and a measurable function $g_k : \mathbb{R}^d \to \mathbb{R}$ such that the price 
of the underlying asset of swing contract at time $t_k$ is $S_{t_k} := g_k\big(X_{t_k}\big)$. Throughout this paper, the function $g_k$ will be assumed to have at most linear growth.

The decision process $(q_{k})_{0 \le k \le n-1}$ is defined on the same probability space and is supposed to be $\mathcal{F}_{k}^X$- adapted, where $\mathcal{F}_{k}^X$ is the natural (completed) filtration of $\big(X_{t_k}\big)_{0 \le k \le n-1}$. At each time $t_k$, by purchasing a volume $q_k$, the holder of the contract makes an algebraic profit:
\begin{equation}
\label{payoff_function}
\psi_k\big(q_{k}, X_{t_k} \big) := q_{k} \cdot \big(g_k\big(X_{t_k}\big) - K\big).
\end{equation}
Then for every non-negative $\mathcal{F}_{{k-1}}^X$- measurable random variable $Q_{k}$ (representing the cumulative purchased volume up to time $t_{k-1}$), 
the price of the swing option at time $t_k$ is:
\begin{equation}
    V_k\big(X_{t_k}, Q_{k} \big) = \esssup_{(q_\ell)_{k \le \ell \le n-1} \in \mathcal{A}_{k, Q_k}^{\underline{Q}, \overline{Q}}} \hspace{0.1cm} 
    \mathbb{E}\Bigg[\sum_{\ell=k}^{n-1} e^{-r_\ell(t_\ell - t_k)} \psi_\ell\big(q_\ell, X_{t_\ell} \big) \Big\rvert X_{t_k} \Bigg],
    \label{pricing_swing_formula}
\end{equation}
where the set of admissible decision processes is defined by:
\begin{equation}
    \mathcal{A}_{k, Q}^{\underline{Q}, \overline{Q}} = \Bigg\{(q_\ell)_{k \le \ell \le n-1}, \hspace{0.1cm} q_{\ell} : (\Omega, \mathcal{F}_{\ell}^X, \mathbb{P}) 
    \mapsto [\underline{q}, \overline{q}], \hspace{0.1cm} \sum_{\ell = k}^{n-1} q_{\ell} \in \big[(\underline{Q}-Q)_{+}, \overline{Q}-Q\big] \Bigg\}
\end{equation}
and the expectation in \eqref{pricing_swing_formula} is taken under the risk-neutral probability. $(r_\ell)_{0 \le \ell \le n-1}$ are interest rates over the period 
$[t_0, t_{n-1}]$ that we will assume to be zero. Problem \eqref{pricing_swing_formula} appears to be a constrained stochastic optimal control problem. 
It can be shown (see \cite{Bardou2007WhenAS}) that for all $k \in \{0,\ldots,n-1\}$ and for all $Q_k \in \mathcal{T}_k$, the swing contract price is given by the following backward 
equation, also known as the dynamic programming equation:
\begin{equation}
    \left\{
    \begin{array}{ll}
        V_k(x, Q_k) = \underset{q \in \mathbb{A}_{k}(Q_k)}{\sup} \hspace{0.1cm} \Big[\psi_k(q, x) + \mathbb{E}\big(V_{k+1}( X_{t_{k + 1}}, Q_k + q) \rvert X_{t_k} = x  \big)\Big],\\
        V_{n-1}(x, Q_{n-1}) = \underset{q \in \mathbb{A}_{n-1}(Q_{n-1})}{\sup} \hspace{0.1cm} \psi(q, x),
    \end{array}
\right.
\label{eq_dp_swing}
\end{equation}
where $\mathbb{A}_{k}(Q_k)$ is the set of admissible controls at time $t_k$, with $Q_k$ denoting the cumulative consumption up to time $t_{k-1}$.

Before going any further, it is important to clarify few points. If our objective is the value function, that is $V_k(x, Q_k)$ for any $x \in \mathbb{R}^d$ 
defined in \eqref{eq_dp_swing}, then the set $\mathbb{A}_{k}(Q_k)$ reduces to the following interval,
\begin{equation}
	\label{intervall_adm}
     \mathcal{I}_{k+1}\big(Q_k\big) := \Big[\max\big(\underline{q}, Q^{down}({t_{k+1}}) - Q_k\big), \min\big(\overline{q}, Q^{up}({t_{k+1}}) - Q_k\big) \Big].
\end{equation}
But if our objective is the random variable $V_k\big(X_{t_k}, Q_k\big)$, then for technical convenience, the preceding set $\mathbb{A}_{k}(Q_k)$ is the set of all 
$\mathcal{F}_{k}^X$-adapted processes lying within the interval $\mathcal{I}_{k+1}\big(Q_k\big)$ defined in \eqref{intervall_adm}. A straightforward consequence of the latter is that the optimal control at a given date must not be anticipatory.

It is worth recalling the \textit{bang-bang} feature of swing contracts proved in \cite{Bardou2007WhenAS}. That is, if volume constraints 
$\underline{q}, \overline{q}, \underline{Q}, \overline{Q}$ are whole numbers and $\overline{Q}- \underline{Q}$ is a multiple of $\overline{q} - \underline{q}$, 
then the supremum in \eqref{eq_dp_swing} is attained in one of the boundaries of the interval $\mathcal{I}_{k+1}\big(Q_k\big)$ defined in \eqref{intervall_adm}. 
In this \emph{discrete setting} (integer volume constraints assumption), at each exercise date $t_k$, the set of achievable cumulative consumptions $\mathcal{T}_{k}$ 
defined in \eqref{range_cum_vol} reads,
\begin{equation}
\label{range_cum_vol_discr}
\mathcal{T}_{k} = \mathbb{N} \cap \big[Q^{down}(t_k), Q^{up}(t_k)\big],
\end{equation}
where $Q^{down}(t_k)$ and $Q^{up}(t_k)$ are defined in \eqref{range_cum_vol}.  In this discrete setting, the BDPP \eqref{eq_dp_swing} remains the same. The main difference lies in the fact that, in the discrete setting the supremum involved in the BDPP is in fact a maximum over two possible values enabled by the \textit{bang-bang} feature. From a practical standpoint, this feature allows to drastically reduce the computation time.

Note that this paper aims to study some regression based methods designed to approximate the conditional expectation involved in the BDPP \eqref{eq_dp_swing}. We study two methods which involve linear regression and neural network approximation. In the linear regression, we will go beyond the discrete setting and show that convergence results can be established in general. To achieve this, we need a crucial result which states that the swing value function defined in equation \eqref{eq_dp_swing} is continuous with respect to cumulative consumption. The latter may be established by relying on Berge's maximum theorem (see Proposition \ref{max_th} in Appendix \ref{corresp}). We may justify the use of this theorem through the following proposition, which characterizes the set of admissible volume as a correspondence (we refer the reader to Appendix \ref{corresp} for details on correspondences) mapping attainable cumulative consumption to an admissible control.

\begin{Proposition}
\label{adm_set_continuous}
Denote by $\mathcal{P}\big([\underline{q}, \overline{q}]\big)$ the power set of the interval $[\underline{q}, \overline{q}]$. Then for all $k \in \{0, \ldots,n-1\}$ the correspondence 
\begin{align*}
  \Gamma_k \colon \Big(\mathcal{T}_k, \hspace{0.1cm} |\cdot|\Big) &\to \Big(\mathcal{P}\big([\underline{q}, \overline{q}]\big), \hspace{0.1cm} d_H \Big)\\
  Q &\mapsto \mathbb{A}_{k}(Q)
\end{align*}

\noindent
is continuous and compact-valued.
\end{Proposition}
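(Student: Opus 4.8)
The plan is to exploit the reduction recorded just after equation~\eqref{intervall_adm}: for $Q \in \mathcal{T}_k$ the value $\Gamma_k(Q) = Adm(t_k, Q)$ is exactly the closed interval $\mathcal{I}_{k+1}(Q) = [a_k(Q), b_k(Q)]$ with endpoints
$$a_k(Q) = \max\big(q_{\min},\, Q^{down}(t_{k+1}) - Q\big), \qquad b_k(Q) = \min\big(q_{\max},\, Q^{up}(t_{k+1}) - Q\big).$$
Once $\Gamma_k$ is written in this endpoint form, both assertions reduce to elementary facts about real intervals: compact-valuedness amounts to $a_k(Q) \le b_k(Q)$ for every $Q \in \mathcal{T}_k$, while $d_H$-continuity follows from continuity of the two endpoint maps. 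I would treat these two points in turn.

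For compact-valuedness, $\Gamma_k(Q)$ is by construction a bounded subset of $[q_{\min}, q_{\max}]$ that is closed (an intersection of closed half-lines), so it suffices to check non-emptiness, i.e. $a_k(Q) \le b_k(Q)$. Expanding the $\max$ and $\min$, this is the conjunction of the four inequalities $q_{\min} \le q_{\max}$, $Q^{down}(t_{k+1}) \le Q^{up}(t_{k+1})$, $Q^{down}(t_{k+1}) - Q \le q_{\max}$ and $q_{\min} \le Q^{up}(t_{k+1}) - Q$. The first two are immediate from the data and from $Q^{down} \le Q^{up}$; the last two express precisely that a consumption $Q \in \mathcal{T}_k = [Q^{down}(t_k), Q^{up}(t_k)]$ remains compatible with an admissible cumulative consumption at $t_{k+1}$. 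I would verify them by substituting the closed-form expressions for $Q^{down}(t_k), Q^{up}(t_k), Q^{down}(t_{k+1}), Q^{up}(t_{k+1})$ from \eqref{range_cum_vol} and distinguishing the cases according to which argument attains each $\max$/$\min$. This case analysis, which encodes feasibility of the global volume constraint, is the only genuinely delicate step and is where I expect the main obstacle to lie.

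For continuity, I would first note that $Q \mapsto Q^{down}(t_{k+1}) - Q$ and $Q \mapsto Q^{up}(t_{k+1}) - Q$ are affine, hence continuous, so that $a_k$ and $b_k$, being a pointwise maximum and minimum of continuous functions, are continuous on $\mathcal{T}_k$. I would then invoke the elementary identity that, for two non-empty compact intervals of $\mathbb{R}$,
$$d_H\big([\alpha_1, \beta_1],\, [\alpha_2, \beta_2]\big) = \max\big(|\alpha_1 - \alpha_2|,\, |\beta_1 - \beta_2|\big),$$
which reduces $d_H\big(\Gamma_k(Q), \Gamma_k(Q')\big)$ to $\max\big(|a_k(Q) - a_k(Q')|,\, |b_k(Q) - b_k(Q')|\big)$. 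Letting $Q' \to Q$ and using continuity of $a_k, b_k$ then gives $d_H\big(\Gamma_k(Q), \Gamma_k(Q')\big) \to 0$, i.e. continuity of $\Gamma_k$ at each $Q \in \mathcal{T}_k$. If one preferred to avoid the interval identity, the same conclusion would follow by checking upper and lower hemicontinuity separately and using their equivalence with Hausdorff continuity for compact-valued correspondences, but the direct computation above seems shortest and feeds immediately into the application of Berge's theorem (Proposition~\ref{max_th}).
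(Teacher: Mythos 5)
Your proof is correct, but it takes a genuinely different route from the paper. The paper never writes $\Gamma_k(Q)$ in endpoint form: it proves upper hemicontinuity via the sequential characterization (Proposition \ref{seq_carac_hemicont}, using Bolzano--Weierstrass and passage to the limit in the defining inequalities) and lower hemicontinuity through a rather involved case analysis (boundary points of $\Gamma_k(Q)$ handled by explicit endpoint sequences, interior points by extracting a monotone subsequence via the Peak Point Lemma and taking eventually constant sequences). Your observation that $\Gamma_k(Q) = \mathcal{I}_{k+1}(Q) = [a_k(Q), b_k(Q)]$ with $a_k, b_k$ continuous, combined with the identity $d_H\big([\alpha_1,\beta_1],[\alpha_2,\beta_2]\big) = \max\big(|\alpha_1-\alpha_2|, |\beta_1-\beta_2|\big)$ for non-empty compact intervals, collapses both hemicontinuity arguments into a one-line estimate and proves exactly what the proposition states, namely continuity into $\big(\mathcal{P}([q_{\min},q_{\max}]), d_H\big)$; since $\Gamma_k$ is compact-valued with values in a compact space, $d_H$-continuity is equivalent to the hemicontinuity-continuity that Berge's theorem (Proposition \ref{max_th}) requires, so nothing is lost downstream --- you should just state this standard equivalence explicitly, as the paper's appendix records only the sequential characterizations. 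One caveat your sketch correctly flags: non-emptiness of $\mathcal{I}_{k+1}(Q)$ is not automatic under the paper's literal definitions in \eqref{range_cum_vol}. Your inequality $q_{\min} \le Q^{up}(t_{k+1}) - Q$ can fail when $q_{\min} > 0$ and $Q = Q^{up}(t_k) = Q_{\max} = Q^{up}(t_{k+1})$ (the formula for $Q^{up}(t_k)$ does not reserve the forced future purchases $(n-k)q_{\min}$), so your case analysis would need either the usual normalization $q_{\min} = 0$ or a compatibility condition on the volume constraints. The paper's proof silently assumes non-emptiness as well (both the Hausdorff metric and the application of Berge's theorem presuppose non-empty values), so this is a gap in the paper's setup rather than in your argument; your endpoint computation has the merit of making it visible.
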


\begin{proof}
Let $k \in \{0,\ldots,n-1\}$. We need to prove the correspondence $\Gamma_k$ is both lower and upper hemicontinuous. The needed materials about correspondences is given 
in Appendix \ref{corresp} and we rely on the sequential characterization of hemicontinuity. Let us start with the upper hemicontinuity. Since the set 
$[\underline{q}, \overline{q}]$ is compact, then the converse of Proposition \ref{seq_carac_hemicont} in Appendix \ref{corresp} holds true.

\vspace{0.2cm}
Let $Q \in \mathcal{T}_k$ and consider a sequence $(Q_n)_{n \in \mathbb{N}} \in \mathcal{T}_k^{\mathbb{N}}$ which converges to $Q$. Let $(y_n)_{n \in \mathbb{N}}$ be a 
real-valued sequence such that for all $n \in \mathbb{N}, \hspace{0.1cm} y_n$ lies within $\Gamma_k(Q_n)$. Then using the definition of the set of admissible control, 
we know that $\underline{q} \le y_n \le \overline{q}$ yielding $(y_n)_n$ is a real and bounded sequence. Thus, thanks to Bolzano-Weierstrass theorem, there exists a 
subsequence $(y_{\phi(n)})_{n \in \mathbb{N}}$ which is convergent. Let $y = \lim\limits_{n \rightarrow +\infty} y_{\phi(n)} $, then for all $n \in \mathbb{N}$,
\begin{align*}
    y_{\phi(n)} \in \mathbb{A}_{k}(Q_{\phi(n)})& \Longleftrightarrow \max\big(\underline{q}, Q^{down}(t_{k+1}) - Q_{\phi(n)}\big) \le y_{\phi(n)} \le 
    \min\big(\overline{q}, Q^{up}(t_{k+1}) - Q_{\phi(n)}\big).
\end{align*}
Letting $n \to + \infty$ in the preceding inequalities yields $y \in \Gamma_k(Q)$. Which shows that $\Gamma_k$ is upper hemicontinuous at an arbitrary $Q$. Thus the correspondence $\Gamma_k$ is upper hemicontinuous.

\vspace{0.3cm}
For the lower hemicontinuity part, let $Q \in \mathcal{T}_k$, $(Q_n)_{n \in \mathbb{N}} \in \mathcal{T}_k^{\mathbb{N}}$ be a sequence which converges to 
$Q$ and $y \in \Gamma_k(Q)$. Note that if $y = \max(\underline{q}, Q^{down}(t_{k+1}) - Q)$ (or $y = \min(\overline{q}, Q^{up}(t_{k+1}) - Q)$), then it suffices to consider 
$y_n = \max(\underline{q}, Q^{down}(t_{k+1}) - Q_n)$ (or $y_n = \min(\overline{q}, Q^{up}(t_{k+1}) - Q_n)$) so that for all $n \in \mathbb{N}$, $y_n \in \Gamma_k(Q_n)$ and 
$\lim\limits_{n \rightarrow +\infty} y_n = y$.

It remains the case $y \in \mathring{\Gamma}_k(Q)$ (where $\mathring{A}$ denotes the interior of the set $A$). Thanks to Peak point Lemma \footnote{see Theorem 3.4.7 in \url{https://www.geneseo.edu/~aguilar/public/assets/courses/324/real-analysis-cesar-aguilar.pdf} or in \url{https://proofwiki.org/wiki/Peak_Point_Lemma}} one may extract a monotonous subsequence $(Q_{\phi(n)})_n$. Two cases may be distinguished.

\begin{itemize}
    \item $\doubleunderline{(Q_{\phi(n)})_n \hspace{0.1cm} \text{is a non-decreasing sequence}}$.

In this case, for all $n \in \mathbb{N}$, $Q_{\phi(n)} \le Q$. Since $y \in \mathring{\Gamma}_k(Q)$ and $Q \mapsto  \min(\overline{q}, Q^{up}(t_{k+1}) - Q)$ 
is a non-increasing function, it follows $y < \min(\overline{q}, Q^{up}(t_{k+1}) - Q) \le \min(\overline{q}, Q^{up}(t_{k+1}) - Q_{\phi(n)})$ for all $n \in \mathbb{N}$. 
Moreover since
$$y > \lim\limits_{n \rightarrow +\infty} \max(\underline{q}, Q^{down}(t_{k+1}) - Q_{\phi(n)}) \downarrow \max(\underline{q}, Q^{down}(t_{k+1}) - Q),$$
one may deduce that there exists $n_0 \in \mathbb{N}$ such that for all $n \ge n_0, \hspace{0.1cm} y \ge \max(\underline{q}, Q^{down}(t_{k+1}) - Q_{\phi(n)})$. Therefore it suffices to set $y_n = y$ for all $n \ge n_0$ so that $(y_n)_{n \ge n_0}$ is a sequence such that $\lim\limits_{n \rightarrow +\infty} y_n = y$ and $y_n \in \Gamma_k(Q_{\phi(n)})$ for all $n \ge n_0$.

    \item $\doubleunderline{(Q_{\phi(n)})_n \hspace{0.1cm} \text{is a non-increasing sequence}}$.

    Here for all $n \in \mathbb{N}$, we have $Q_{\phi(n)} \ge Q$ so that $y \ge \max(\underline{q}, Q^{down}(t_{k+1})-Q_{\phi(n)})$. 
    Following the proof in the preceding case, one may deduce that there exists $n_0 \in \mathbb{N}$ such that for all 
    $n \ge n_0, \hspace{0.1cm} y \le \min(\overline{q}, Q^{up}(t_{k+1}) - Q_{\phi(n)})$. Thus it suffices to set a sequence $(y_n)_{n \ge n_0}$ identically equal to $y$.
\end{itemize}
This shows that the correspondence $\Gamma_k$ is lower hemicontinuous at an arbitrary $Q$. Thus $\Gamma_k$ is both lower and upper hemicontinous; hence continuous. Moreover, since for all $Q \in \mathcal{T}_k$, $\Gamma_k(Q)$ is a closed and bounded interval in $\mathbb{R}$, then it is compact. This completes the proof.
\end{proof}

In the following proposition, we show the main result of this section concerning the continuity of the value function defined in \eqref{eq_dp_swing} with respect to the cumulative consumption. Let us define the correspondence $C^{*}_k$ by,
\begin{equation}
\label{corresp_set_sol_dp}
C^{*}_k : Q \in \mathcal{T}_k \mapsto \argmax_{q \in \mathbb{A}_{k}(Q)} \hspace{0.1cm}  \Big\{ \psi_k(q, x) + \mathbb{E}\big(V_{k+1}(X_{t_{k + 1}}, Q + q) \rvert X_{t_k} = x\big)\Big\}
\end{equation}
which is the set of solutions of the BDPP \eqref{eq_dp_swing}. Then we have the following proposition.

\begin{Proposition}
\label{cont_val_func}
If for all $k \in \{1,\ldots,n-1\}$ $X_{t_k} \in \mathbb{L}_{\mathbb{R}^d}^1(\mathbb{P})$, then for all $k \in \{0,\ldots,n-1\}$ and all $x \in \mathbb{R}^d$,
\begin{itemize}
	\item The swing value function $Q \in \mathcal{T}_k \mapsto V_k(x, Q)$ is continuous.

	\item The correspondence $C^{*}_k$ (see \eqref{corresp_set_sol_dp}) is non-empty, compact-valued and upper hemicontinuous.
\end{itemize}
\end{Proposition}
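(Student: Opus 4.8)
The plan is to argue by backward induction on $k$, with Berge's maximum theorem (Proposition \ref{max_th}) as the engine at each step. For a fixed $x \in \mathbb{R}^d$, write the BDPP objective at date $t_k$ as
\begin{equation*}
  f_k(q, Q) := \psi(q, x) + \mathbb{E}\big(V_{k+1}(X_{t_{k+1}}, Q + q) \,\rvert\, X_{t_k} = x\big),
\end{equation*}
so that $V_k(x, Q) = \sup_{q \in Adm(t_k, Q)} f_k(q, Q)$ and $C^{*}_k(Q)$ is the associated argmax. Proposition \ref{adm_set_continuous} already supplies precisely the correspondence hypothesis required by Berge's theorem: $\Gamma_k$ is continuous and compact-valued. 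Hence the whole argument reduces to verifying that $f_k$ is jointly continuous in $(q, Q)$ on the graph of $\Gamma_k$; once this is done, Berge's theorem simultaneously yields the continuity of $Q \mapsto V_k(x, Q)$ and the fact that $C^{*}_k$ is non-empty, compact-valued and upper hemicontinuous.

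For the base case $k = n-1$ the objective reduces to $f_{n-1}(q, Q) = \psi(q, x) = q\,(g_{n-1}(x) - K)$, which is continuous (indeed affine) in $q$ and independent of $Q$; joint continuity is immediate and Berge's theorem applies at once.

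For the inductive step, assume $Q' \mapsto V_{k+1}(x', Q')$ is continuous on $\mathcal{T}_{k+1}$ for every $x'$. Since $q \in Adm(t_k, Q)$ forces $Q + q \in \mathcal{T}_{k+1}$, the quantity $V_{k+1}(X_{t_{k+1}}, Q + q)$ is well defined, and it suffices to show that $h(Q') := \mathbb{E}\big(V_{k+1}(X_{t_{k+1}}, Q')\,\rvert\,X_{t_k}=x\big)$ is continuous on $\mathcal{T}_{k+1}$; joint continuity of $f_k(q, Q) = \psi(q,x) + h(Q+q)$ then follows from continuity of $\psi$ and of $(q,Q) \mapsto Q + q$. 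To obtain continuity of $h$ I would invoke the conditional dominated convergence theorem. The induction hypothesis gives the pointwise convergence $V_{k+1}(X_{t_{k+1}}, Q'_j) \to V_{k+1}(X_{t_{k+1}}, Q')$ almost surely whenever $Q'_j \to Q'$. For the domination, the uniform bound
\begin{equation*}
  \big|V_{k+1}(x', Q')\big| \le \sum_{\ell = k+1}^{n-1} q_{\max}\,\mathbb{E}\big(|g_\ell(X_{t_\ell})| + K \,\rvert\, X_{t_{k+1}} = x'\big)
\end{equation*}
holds uniformly in $Q'$, because every admissible control lies in $[q_{\min}, q_{\max}]$ and $\psi(q_\ell, \cdot) = q_\ell(g_\ell(\cdot) - K)$. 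Using the at most linear growth of $g_\ell$ together with the assumption $X_{t_\ell} \in \mathbb{L}^1_{\mathbb{R}^d}(\mathbb{P})$, this dominating random variable is integrable and independent of $Q'$, so conditional dominated convergence yields $h(Q'_j) \to h(Q')$. This establishes joint continuity of $f_k$, and a final application of Berge's theorem completes the induction.

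The main obstacle is exactly this interchange of limit and conditional expectation hidden in the continuity of $h$: pointwise continuity of $V_{k+1}$ in its volume argument is not by itself enough, and one must produce a $Q'$-independent integrable majorant. This is where the linear-growth condition on $g_\ell$ and the $\mathbb{L}^1$ integrability of the $X_{t_\ell}$ are essential, and it is the only place where a genuine analytic (rather than purely topological) argument is needed.
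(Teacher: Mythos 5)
Your proposal is correct and follows the same architecture as the paper's proof: backward induction on $k$, with Proposition \ref{adm_set_continuous} supplying the continuous, compact-valued correspondence that Berge's maximum theorem (Proposition \ref{max_th}) requires, and a dominated-convergence argument delivering continuity of the conditional-expectation term; Berge then yields both bullets at once, exactly as in the paper. The one point where you cut a corner relative to the paper is the domain of the objective. Proposition \ref{max_th} as stated requires $\phi$ continuous on the whole product space, whereas your $f_k(q,Q) = \psi(q,x) + h(Q+q)$ is only defined when $Q+q \in \mathcal{T}_{k+1}$, i.e.\ essentially on the graph of $\Gamma_k$: for $q \notin Adm(t_k,Q)$ the argument $Q+q$ can leave $\mathcal{T}_{k+1}$. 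This is precisely why the paper introduces the extended function $\mathcal{V}_k(x,\cdot)$, set constant outside $\big[Q^{down}(t_k), Q^{up}(t_k)\big]$, and runs the induction on $\mathcal{V}_k$ rather than $V_k$, so that the objective is continuous on all of $[q_{\min},q_{\max}] \times \mathbb{R}$ and Proposition \ref{max_th} applies verbatim. Your graph-restricted claim is nevertheless sound mathematics (Berge's theorem is valid with the objective continuous on the graph, and in any case extending your $h$ constantly outside $\mathcal{T}_{k+1}$ repairs the mismatch in one line), so this is a presentational gap, not a failure. A genuine, and pleasant, difference is your treatment of the domination: you extract the $Q$-independent integrable majorant directly from the representation \eqref{pricing_swing_formula}, bounding $\big|V_{k+1}(X_{t_{k+1}},Q')\big| \le \mathbb{E}\big(\sum_{\ell \ge k+1} q_{\max}\,(|g_\ell(X_{t_\ell})| + K) \,\big\rvert\, X_{t_{k+1}}\big)$ since every admissible control lies in $[q_{\min},q_{\max}]$, whereas the paper carries the existence of such a $G_{k+1}$ as part of the induction hypothesis itself. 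Your version is more explicit and shows the majorant needs no induction; both exploit the linear growth of $g_\ell$ and $X_{t_\ell} \in \mathbb{L}^1_{\mathbb{R}^d}(\mathbb{P})$ in the same way. Finally, you apply Berge uniformly including at $k = n-1$, while the paper handles the base case with Lemma \ref{conti_sup}; your choice is equally valid and in fact yields the properties of $C^{*}_{n-1}$ directly, which the paper's base-case lemma alone does not.
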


\begin{proof}
Let $x \in \mathbb{R}^d$. For technical convenience, we introduce for all $0 \le k \le n-1$ an extended value function $\mathcal{V}_k(x, \cdot)$ defined on the whole real line
\begin{equation*}
    \mathcal{V}_k(x, Q) :=\left\{
    \begin{array}{ll}
         V_k(x, Q) \hspace{2.8cm} \text{if} \hspace{0.2cm} Q \in \mathcal{T}_k = \big[Q^{down}(t_k), Q^{up}(t_k) \big],\\
        V_k(x, Q^{down}(t_k)) \hspace{1.45cm} \text{if} \hspace{0.2cm} Q < Q^{down}(t_k),\\
        V_k(x, Q^{up}(t_k)) \hspace{1.85cm} \text{if} \hspace{0.2cm} Q > Q^{up}(t_k).
    \end{array}
\right.
\end{equation*}
Note that $V_k(x, \cdot)$ is the restriction of $\mathcal{V}_k(x, \cdot)$ on $\mathcal{T}_k$. Propagating continuity over the dynamic programming equation is challenging due to the presence of the variable of interest $Q$ in both the objective function and the domain in which the supremum is taken. To circumvent this issue, we rely on Berge's maximum theorem. More precisely, we use a backward induction on $k$ along with Berge's maximum theorem to propagate continuity through the BDPP.

For any $Q \in \mathcal{T}_{n-1}$, we have $\mathcal{V}_{n-1}(x, Q) = \underset{q \in \mathbb{A}_{n-1}(Q)}{\sup} \psi_{n-1}(q, x)$ and $\psi_{n-1}(\cdot, x)$ (linear in its first argmuent)
is continuous. Thus applying Lemma \ref{conti_sup} yields the continuity of $\mathcal{V}_{n-1}(x, \cdot)$ on $\mathcal{T}_{n-1}$. Moreover, as $\mathcal{V}_{n-1}(x, \cdot)$ is constant outside $\mathcal{T}_{n-1}$ then it is continuous on $(- \infty, Q^{down}(t_{n-1})\big)$ and $\big(Q^{up}(t_{n-1}), +\infty)$. The continuity at $Q^{down}(t_{n-1})$ and $Q^{up}(t_{n-1})$ is straightforward given the construction of $\mathcal{V}_{n-1}$. Thus $\mathcal{V}_{n-1}(x, \cdot)$ is continuous on $\mathbb{R}$. Besides, for all $Q \in \mathbb{R}$,
$$\big|\mathcal{V}_{n-1}(X_{t_{n-1}}, Q)\big| \le \underset{Q \in \mathcal{T}_{n-1}}{\sup} \big|V_{n-1}(X_{t_{n-1}}, Q)\big| \le \overline{q} \cdot \big(|S_{t_{n-1}}| + K \big) \in \mathbb{L}_{\mathbb{R}}^1(\mathbb{P}).$$

\noindent
We now make the following assumption as induction assumption: $\mathcal{V}_{k+1}(x, \cdot)$ is continuous on $\mathbb{R}$ and there exists a real integrable random 
variable $G_{k+1}$ (independent of $Q$) such that, almost surely, $\big|\mathcal{V}_{k+1}(X_{t_{k+1}}, Q)  \big| \le G_{k+1}$. This implies that 
$(q, Q): [\underline{q}, \overline{q}] \times \mathbb{R} \mapsto \psi_k(q, x) + \mathbb{E}\big(\mathcal{V}_{k+1}(X_{t_{k+1}}, Q+q) \rvert X_{t_k} = x  \big)$ is continuous owing to the theorem of continuity under integral sign. Thus owing to Proposition \ref{max_th} one may apply Berge's maximum theorem and we get that $\mathcal{V}_{k}(x, \cdot)$ is continuous on $\mathbb{R}$. In particular $V_k(x, \cdot)$ is continuous on $\mathcal{T}_{k}$ and the correspondence $C_k^{*}$ is non-empty, compact-valued and upper hemicontinuous. This completes the proof.
\end{proof}

As a result of the preceding proposition, one may substitute the $\sup$ in equation \eqref{eq_dp_swing} with a $\max$. This provides another proof for the existence of an optimal consumption an an alternative to the one presented in \cite{Bardou2007WhenAS}. Furthermore, our proof, compared to that in \cite{Bardou2007WhenAS}, does not suppose integer volumes. 

\vspace{0.2cm}
Having addressed the general problem \eqref{eq_dp_swing}, we can now focus on solving it which requires to compute the continuation value.

\section{Approximation of continuation value}
\label{continuation_approx}
The primary challenge in solving the backward equation \eqref{eq_dp_swing} is to compute the continuation value (conditional expectation) involved. A straightforward approach may be to compute this conditional expectation using nested simulations, but this may be time-consuming. Instead, the continuation value may be approximated using either linear regression (as in \cite{BarreraEsteve2006NumericalMF}) or neural networks.

\vspace{0.2cm}
Notice that, it follows from the Markov assumption and the definition of conditional expectation that there exists a measurable function $\Phi_{k+1}^{Q}$ such that
\begin{equation}
\label{approx_cont_val}
\mathbb{E}\big(V_{k+1}(X_{t_{k + 1}}, Q) \rvert X_{t_k}\big) = \Phi_{k+1}^{Q}(X_{t_k}),
\end{equation}
where $\Phi_{k+1}^Q$ solves the following minimization problem,
\begin{equation}
\label{gen_min_pb_func_phi}
\underset{\Phi \in \mathcal{L}^2}{\inf} \hspace{0.1cm} \Big\|\mathbb{E}\big(V_{k+1}(X_{t_{k + 1}}, Q) \rvert X_{t_k}\big) - \Phi\big(X_{t_k} \big) \Big\|_{2}
\end{equation}
and where $\mathcal{L}^2$ denotes the set of all measurable functions that are squared-integrable. Throughout this paper, we use the canonical 
$\mathbb{L}_{\mathbb{R}^d}^r\big(\mathbb{P}\big)$-norm denoted by $\big\|X \big\|_{r}$ and such that
$$\big\|X \big\|_{r}^r = \mathbb{E}^{\mathbb{P}}\big(|X|^r\big) = \int_{\mathbb{R}^d} |x|^r F_X(dx),$$
where $|\cdot|$ denotes the Euclidean norm on $\mathbb{R}^d$ and $F_X$ is the cumulative distribution function of $X$ with respect to the probability measure $\mathbb{P}$.

\vspace{0.2cm}
\color{black}
\noindent
Due to the vastness of $\mathcal{L}^2$, the optimization problem \eqref{gen_min_pb_func_phi} is quite challenging, if not impossible, to solve in practice. It is therefore common to introduce a parameterized form $\Phi_{k+1}(\cdot ; \theta)$ as a solution to the optimization problem \eqref{gen_min_pb_func_phi}. That is, we need to find the appropriate value of $\theta$ in a certain parameter space $\Theta$ such that it solves the following optimization problem:
\begin{equation}
\label{gen_reg_pb}
\underset{\theta \in \Theta}{\inf} \hspace{0.1cm} \Big\|\mathbb{E}\big(V_{k+1}(X_{t_{k + 1}}, Q) \rvert X_{t_k}\big) - \Phi_{k+1}\big(X_{t_k}; \theta \big) \Big\|_{2}.
\end{equation}
Solving the latter problem requires to compute the continuation value whereas it is the target amount. But since the conditional expectation is an orthogonal projection, it follows from Pythagoras' theorem,
\begin{align}
\label{decompo_pytha}
&\Big\|V_{k+1}(X_{t_{k + 1}}, Q) - \Phi_{k+1}(X_{t_k}; \theta) \Big\|_2^2 \nonumber\\
&= \Big\|V_{k+1}(X_{t_{k + 1}}, Q) - \mathbb{E}\big(V_{k+1}( X_{t_{k + 1}}, Q) \rvert X_{t_k}\big)\Big\|_2^2 + 
\Big\|\mathbb{E}\big(V_{k+1}( X_{t_{k + 1}}, Q) \rvert X_{t_k}\big) - \Phi_{k+1}\big(X_{t_k}; \theta \big)  \Big\|_2^2.
\end{align}
Thus any $\theta$ that solves the optimization problem \eqref{gen_reg_pb} also solves the following optimization problem
\begin{equation}
\label{gen_param_reg}
\underset{\theta \in \Theta}{\inf} \hspace{0.1cm} \Big\|V_{k+1}\big(X_{t_{k + 1}}, Q\big) - \Phi_{k+1}\big(X_{t_k}; \theta \big) \Big\|_2.
\end{equation}
Thus in this paper and when needed, we will indistinguishably consider both optimization problems \eqref{gen_reg_pb}, \eqref{gen_param_reg}. In the next section, we discuss the way the function $\Phi_{k+1}(\cdot ; \theta)$ is parameterise depending on whether we use linear regression or neural networks. Moreover, instead of superscript as in \eqref{approx_cont_val}, we adopt the following notation: $\Phi_{k+1}^{Q}(\cdot) := \Phi(\cdot; \theta_{k+1}(Q))$ where $\theta_{k+1}(Q) \in \Theta$ solves the optimization problem  \eqref{gen_reg_pb} or equivalently \eqref{gen_param_reg}. We also dropped the under-script as the function $\Phi$ will be the same for each exercise date; only the parameters $\theta_{k+1}(Q)$ may differ.

\subsection{Linear regression approximation}
\label{lsr_method}
In the linear regression approach, the continuation value is approximated as an orthogonal projection over a subspace spanned by a finite number of squared-integrable functions (see \cite{BarreraEsteve2006NumericalMF}). More precisely, given $m \in \mathbb{N}^{*}$ functions $e^m(\cdot) = \big(e_1(\cdot),\ldots,e_m(\cdot) \big)$, we replace the continuation value involved in \eqref{eq_dp_swing} by an orthogonal projection over the subspace spanned by $e^m\big(X_{t_k} \big)$. This leads to the approximation $V_k^m$ of the actual value function $V_k$ which is defined backwardly as follows,
\begin{equation}
    \left\{
    \begin{array}{ll}
        V^m_k\big(X_{t_k}, Q\big) = \displaystyle \esssup_{q \in \mathbb{A}_{k}(Q)} \hspace{0.1cm} \Big[\psi_k\big(q, X_{t_k}\big) + \Phi_{m}\big(X_{t_k}; \theta_{k+1, m}(Q+q) \big)\Big],\\
        V^m_{n-1}\big(X_{t_{n-1}}, Q\big) = V_{n-1}\big(X_{t_{n-1}}, Q\big) = \displaystyle \esssup_{q \in \mathbb{A}_{n-1}(Q)} \hspace{0.1cm} \psi\big(q, X_{t_{n-1}}\big),
    \end{array}
\right.
\label{estim_orth_proj_dp}
\end{equation}
where $\Phi_{m}$ is defined as follows,
\begin{equation}
\label{approx_ls_proj_orth}
 \Phi_{m}\big(X_{t_k}; \theta_{k+1, m}(Q) \big) = \langle \theta_{k+1, m}(Q), e^m(X_{t_k}) \rangle
\end{equation}
with $\theta_{k+1, m}(Q) \in \Theta_m = \mathbb{R}^m$ being a vector whose components are coordinates of the orthogonal projection and lies within the following set
\begin{equation}
\label{optim_coef_reg_ls}
\mathcal{S}_{k}^{m}(Q) := \arginf_{\theta \in \Theta_m} \hspace{0.1cm} \Big\|V_{k+1}^m\big(X_{t_{k + 1}}, Q\big) - \langle \theta, e^m(X_{t_k}) \rangle \Big\|_2.
\end{equation}
Solving the optimization problem involved in \eqref{optim_coef_reg_ls} leads to a classic linear regression. In this paper, we will assume that $e^m(\cdot)$ forms linearly independent family so that the set $\mathcal{S}_{k}^{m}(Q)$ reduces to a singleton and $\theta_{k+1, m}(Q)$ is uniquely defined  as:
\begin{equation}
\label{coeff_reg_ls}
\theta_{k+1, m}(Q) :=  \big(A_m^k  \big)^{-1} \cdot \mathbb{E}\Big[V_{k+1}^m(X_{t_{k + 1}}, Q)e^m(X_{t_k}) \Big].
\end{equation}
Note that without the latter assumption, $\mathcal{S}_{k}^{m}(Q)$ may not be a singleton. However, in this case, instead of the inverse matrix $\big(A_m^k  \big)^{-1}$, one may consider the Moore–Penrose inverse or pseudo-inverse matrix $\big(A_m^k\big)^\dagger$. In equation \eqref{coeff_reg_ls}, we used the following notation:
$$\mathbb{E} \Big[V_{k+1}^m(X_{t_{k + 1}}, Q)e^m(X_{t_k}) \Big] := \begin{bmatrix}
           \mathbb{E} \big(V_{k+1}^m(X_{t_{k + 1}}, Q)e_1(X_{t_k}) \big) \\
           \mathbb{E} \big(V_{k+1}^m(X_{t_{k + 1}}, Q)e_2(X_{t_k})  \big)\\
           \vdots \\
           \mathbb{E} \big(V_{k+1}^m(X_{t_{k + 1}}, Q)e_m(X_{t_k})  \big)
         \end{bmatrix} \in \mathbb{R}^m,$$
where $A_m^k := \big((A_m^k)_{i, j}  \big)_{1 \le i, j \le m}$ is the (Gram) matrix with entries
\begin{equation}
\label{gram_matrix}
\langle e_i(X_{t_k}), e_j(X_{t_k}) \rangle_{\mathbb{L}^2(\mathbb{P})} := \mathbb{E}\Big[e_i(X_{t_k}) e_j(X_{t_k}) \Big] \quad 1 \le i, j \le m.
\end{equation}
In practice, to compute vector $\theta_{k+1, m}(Q)$, we need to simulate $N$ independent paths $\big(X_{t_0}^{[p]}, \ldots,X_{t_{n-1}}^{[p]}\big)_{1 \le p \le N}$ and use Monte Carlo to evaluate the expectations involved in \eqref{coeff_reg_ls} and \eqref{gram_matrix}. This leads to a second approximation which is a Monte Carlo approximation. For this second approximation, we define the value function $V_k^{m, N}$ starting from the first approximation \eqref{estim_orth_proj_dp} where we replace the expectations by their estimates,
\begin{equation}
    \left\{
    \begin{array}{ll}
        V^{m, N}_k\big(X_{t_{k}}, Q\big) = \displaystyle \esssup_{q \in \mathbb{A}_k(Q)} \hspace{0.1cm} 
        \Big[\psi_k\big(q, X_{t_{k}}\big) + \Phi_{m}\big(X_{t_k}; \theta_{k+1, m, N}(Q+q) \big)\Big],\\
        V^{m, N}_{n-1}\big(X_{t_{n-1}}, Q\big) = V_{n-1}\big(X_{t_{n-1}}, Q\big),
    \end{array}
\right.
\label{estim_second_approx}
\end{equation}
with
\begin{equation}
\theta_{k, m, N}(Q) = \big(A_{m, N}^k  \big)^{-1} \frac{1}{N} \sum_{p=1}^{N} V^{m, N}_{k+1}(X_{t_{k+1}}^{[p]}, Q)e^m(X_{t_k}^{[p]} ),
\end{equation}
using the notation
$$\frac{1}{N} \sum_{p=1}^{N} V^{m, N}_{k+1}(X_{t_{k+1}}^{[p]}, Q)e^m(X_{t_k}^{[p]} ) := \begin{bmatrix}
           \frac{1}{N} \displaystyle \sum_{p=1}^{N} V^{m, N}_{k+1}(X_{t_{k+1}}^{[p]}, Q)e_1(X_{t_k}^{[p]} ) \\
           \frac{1}{N} \displaystyle \sum_{p=1}^{N} V^{m, N}_{k+1}(X_{t_{k+1}}^{[p]}, Q)e_2(X_{t_k}^{[p]} )\\
           \vdots \\
           \frac{1}{N} \displaystyle \sum_{p=1}^{N}  V^{m, N}_{k+1}(X_{t_{k+1}}^{[p]}, Q)e_m(X_{t_k}^{[p]} )
         \end{bmatrix} \in \mathbb{R}^m$$
and $A_{m, N}^k := \big((A_{m, N}^k)_{i, j}  \big)_{1 \le i, j \le m}$ is the (Gram) matrix whose components are
\begin{equation}
\frac{1}{N} \sum_{p=1}^{N} \hspace{0.1cm} e_i\big(X_{t_k}^{[p]} \big) e_j\big(X_{t_k}^{[p]} \big) \hspace{0.6cm} 1 \le i, j \le m.
\end{equation}

This paper investigates a modified version of the linear regression method proposed in \cite{BarreraEsteve2006NumericalMF}. In their approach, the value function at each time step is the result of two steps. First, at each exercise date, they compute the optimal control which is an admissible control that maximizes the value function \eqref{estim_second_approx} along with Monte Carlo simulations. Then, given the optimal control at that date, they compute the value function on this date by summing up all cash-flows from the considered exercise date until the maturity. Recall that we proceed backwardly so that, in practice, it is assumed that at a given exercise date $t_k$, we already have determined optimal control from $t_{k+1}$ to $t_{n-1}$; so that optimal cash flows at these dates may be computed. However, our method directly replaces the continuation value with a linear combination of functions, and the value function is the maximum, over admissible volumes, of the current cash flow plus the latter linear combination of functions. The main difference between both approaches lies in the following. The value function computed in \cite{BarreraEsteve2006NumericalMF} corresponds to \q{actual} realized cash flows whereas the value function in our case does not. However, as recommended in their original paper \cite{Longstaff2001ValuingAO}, after having estimated optimal control backwardly, a forward valuation has to be done in order to eliminate biases. By doing so, our method and that proposed in \cite{BarreraEsteve2006NumericalMF} yield \q{actual} realized cash flows. Thus both approximations meet. 

\vspace{0.2cm}
Our convergence analysis of the linear regression approximation will require some technical assumptions we state below.

\vspace{0.1cm}
\subsubsection*{Main assumptions}

\vspace{0.3cm}
$\bm{\mathcal{H}_1^{LS}}$: For all $k=0,\ldots,n-1$, the sequence $\left(e_i\left( X_{t_k} \right)\right)_{i \ge 1}$ is total in $\mathbb{L}^2\big(\sigma(X_{t_k}) \big)$.

\vspace{0.4cm}
\noindent
$\bm{\mathcal{H}_2^{LS}}$: For all $k=0,\ldots,n-1$, almost surely, $e_0(X_{t_k}),\ldots,e_m(X_{t_k})$ are linearly independent.

\vspace{0.2cm}
\noindent
This assumption ensures the Gram matrix $A_{m}^k$ is non-singular. Moreover, this assumption allows to guarantee the matrix $A_{m, N}^k$ is non-singular for $N$ large enough. Indeed, by the strong law of large numbers, almost surely $A_{m, N}^k \to A_{m}^k \in \mathbb{G}L_m(\mathbb{R})$ (as $N \to +\infty$) with the latter set being an open set.

\vspace{0.4cm}
\noindent
$\bm{\mathcal{H}_{3, r}}$: For all $k = 0, \ldots, n-1$, the random vector $X_{t_k}$ has finite moments at order $r$. $\bm{\mathcal{H}_{3, \infty}}$ will then denote the existence of moments at any order.

\vspace{0.4cm}
\noindent
$\bm{\mathcal{H}_{4, r}^{LS}}$: For all $k = 0, \ldots, n-1$ and for all $j = 1, \ldots, m$, the random variable $e_j (X_{t_k})$ has finite moments at order $r$. $\bm{\mathcal{H}_{4, \infty}^{LS}}$ will then denote the existence of moments at any order.

\vspace{0.2cm}
\noindent
If assumption $\bm{\mathcal{H}_{3, \infty}}$ holds, one may replace assumption $\bm{\mathcal{H}_{4, r}^{LS}}$ by an assumption of linear or polynomial growth of functions $e_j(\cdot)$ with respect to the Euclidean norm.

\vspace{0.2cm}
Before proceeding, note the following comment that will be relevant in the subsequent discussion. Specifically, we would like to remind the reader that the continuity property of the swing actual value function $V_k$ with respect to cumulative consumption, as stated in Proposition \ref{cont_val_func}, also applies to the approximated value function $V_k^m$ involved in the linear regression.

\begin{remark}
\label{cont_high_order}
If we assume that $\bm{\mathcal{H}_{3, 2r}}$ and $\bm{\mathcal{H}_{4, 2r}^{LS}}$ hold true for some $r \ge 1$, then one may show, by a straightforward backward induction, that the functions
$$Q \in \mathcal{T}_{k+1} \mapsto \mathbb{E}\Big(\big|V_{k+1}^m(X_{t_{k+1}}, Q)e^m(X_{t_k})\big|^r  \Big) \hspace{0.4cm} \text{or} \hspace{0.4cm} V_{k+1}^m(X_{t_{k+1}}, Q)$$
are continuous. If only assumption $\bm{\mathcal{H}_{3, r}}$ holds true then $V_{k+1}(X_{t_{k+1}}, \cdot)$ is continuous and there exists a random variable $G_{k+1} \in \mathbb{L}_{\mathbb{R}}^{r}\big(\mathbb{P}\big)$ (independent of $Q$) such that $V_{k+1}(X_{t_{k+1}}, \cdot ) \le G_{k+1}$.
\end{remark}

Instead of using classic functions as regression functions and projecting the swing value function onto the subspace spanned by these regression functions, an alternative approach consists in using neural networks. Motivated by   the function approximation capacity of deep neural networks, as quantified by the \textit{Universal Approximation Theorem} (UAT), our goal is to explore whether a neural network can replace conventional regression functions. In the following section, we introduce a methodology based on neural networks that aims to approximate the continuation value.

\subsection{Neural network approximation}
The goal of a neural network is to approximate complex a function $\Phi : \mathbb{R}^d \to \mathbb{R}^\ell$ by a parametric function $\Phi(\cdot ; \theta)$ where parameters $\theta$ (or weights of the neural network) have to be optimized in a way that the \q{distance} between the two functions $\Phi$ and $\Phi(\cdot; \theta)$ is as small as possible. A neural network can approximate a wide class of complex functions (see \cite{Cybenko1989ApproximationBS, Hornik1991ApproximationCO, Hornik1989MultilayerFN}). A neural network is made of nodes connected to one another where a column of nodes forms a layer (when there are more than one layer in the neural network architecture we speak of a deep neural network). The outermost are the input and output layers and all those in between are called the hidden layers. The connection between the input and output layers through hidden layers is made by means of linear functions and activation functions (non-linear functions). Figure \ref{nn_representation} show an illustration of the architecture of a (deep) neural network.

\begin{figure}[ht]
    \center
    \includegraphics[scale=0.4]{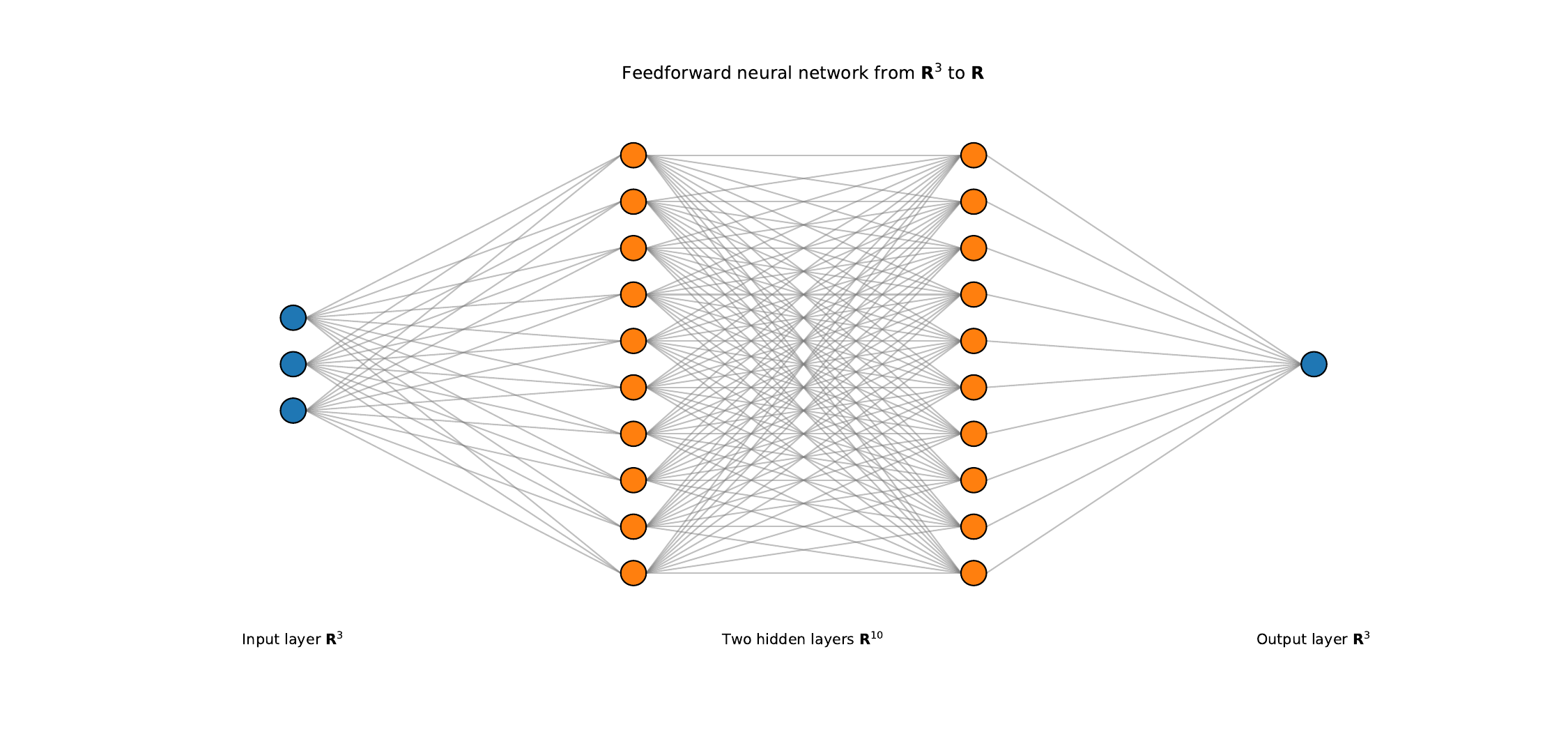}
    \caption{Illustration of (deep) neural network architecture with $d = 3, \ell = 1$}
    \label{nn_representation}
\end{figure}

\noindent
From a mathematical point of view, a neural network can be written as
\begin{equation}
    x \in \mathbb{R}^d \mapsto \Phi(x; \theta) := \psi \circ a_I^{\theta_I} \circ \phi_{q_{I-1}} \circ a_{I-1}^{\theta_{I-1}} \circ \ldots \circ \phi_{q_1} \circ a_1^{\theta_1}(x) \in \mathbb{R}^\ell,
    \label{nn_function_rep}
\end{equation}
where

\vspace{0.2cm}
\noindent
$\rhd$ $I$ is the number of hidden layers representing the depth of the neural network. 

\noindent
$\rhd$ Each layer has weights $\mathcal{W}$ and bias $b$. For all $2 \le i \le I$,

\begin{equation*}
x \in \mathbb{R}^{q_{i-1}} \mapsto a_i^{\theta_i}(x) = \mathcal{W}_i \cdot x + b_i \in \mathbb{R}^{q_i}  \hspace{0.6cm} \text{with} \hspace{0.4cm}  \theta_i = (\mathcal{W}_i, b_i) \in \mathbb{R}^{q_{i-1} \times q_{i}} \times \mathbb{R}^{q_i},
\end{equation*}
and
\begin{equation*}
x \in \mathbb{R}^{d} \mapsto a_1^{\theta_1}(x) = \mathcal{W}_1 \cdot x + b_1 \in \mathbb{R}^{q_1} \hspace{0.6cm} \text{with} \hspace{0.4cm}  \theta_1 = (\mathcal{W}_1, b_1) \in \mathbb{R}^{d \times q_{1}} \times \mathbb{R}^{q_1}.
\end{equation*}

\noindent
$\rhd$ $q_1, \ldots, q_I$ are positive integers denoting the number of nodes per hidden layer and representing the width of the neural network. 

\noindent
$\rhd$ $(\phi_{q_i})_{1 \le i \le I-1}$ are non-linear functions called activation functions and are applied component wise. 

\noindent
$\rhd$ $\psi$ is the activation function for the output layer. 

\vspace{0.2cm}
For the sake of simpler notation, we embed all the parameters of the different layers in a unique high dimensional parameter $\theta = \big(\theta_1,  \ldots, \theta_I \big)  \in \mathbb{R}^{N_q}$ with $N_q = \sum_{i = 1}^{I} q_{i-1} \cdot (1 + q_i)$ (with $q_0 = d$). In order to study neural network approximation, we adopt the same notations as in \cite{Lapeyre2019NeuralNR}. We denote by $\mathcal{NN}_{\infty}$ the set of all neural networks of form \eqref{nn_function_rep}. Then we consider, for some integer $m \ge 1$, $\mathcal{NN}_{m}$ the set of neural networks of form \eqref{nn_function_rep} with at most $m$ nodes per hidden layer and bounded parameters. More precisely, we consider
\begin{equation}
\label{theta_nn}
\Theta_m = \Big\{\mathbb{R}^{d} \times \mathbb{R}^{m} \times \big( \mathbb{R}^{m} \times \mathbb{R}^{m} \big)^{I-2} \times \mathbb{R}^{m} \times  \mathbb{R} \hspace{0.2cm} : \hspace{0.2cm} |\theta| \le \gamma_m  \Big\}
\end{equation}

\noindent
which denotes the set of all parameters (bounded by $\gamma_m$) of a neural network with at most $m$ nodes per hidden layer. $(\gamma_m)_{m \ge 2}$ is an increasing and non-bounded (real) sequence. Thus $\mathcal{NN}_{m}$ is defined as the set of all neural networks which parameters lie in $\Theta_m$,
\begin{equation}
\label{nn_param_m}
\mathcal{NN}_{m} = \big\{\Phi(\cdot; \theta) : \mathbb{R}^d \to \mathbb{R}; \theta \in \Theta_m  \big\}.
\end{equation}

\noindent
Note that $\displaystyle \mathcal{NN}_{\infty} = \bigcup_{m \in \mathbb{N}} \mathcal{NN}_{m}$. In this section, we consider the approximation of the continuation value using neural network. This leads to an approximated value function $V_k^m$ backwardly defined by:
\begin{equation}
\label{dp_approx_nn}
    \left\{
    \begin{array}{ll}
        V_k^m\big(X_{t_k}, Q\big) = \displaystyle \esssup_{q \in \mathbb{A}_k(Q)} \hspace{0.1cm} \Big[\psi_k\big(q, X_{t_k}\big) + \Phi_{m} \big(X_{t_k}; 
        \theta_{k + 1, m}(Q + q)  \big)\Big],\\
        V_{n-1}^m\big(X_{t_{n-1}}, Q\big) = V_{n-1}\big(X_{t_{n-1}}, Q\big),
    \end{array}
\right.
\end{equation}
where $\Phi_{m}(\cdot; \theta)$ denotes a function lying within $\mathcal{NN}_{m}$ with $\theta \in \Theta_m$. Thus $\theta_{k + 1, m}(Q)$ belongs to the following set
\begin{equation}
\label{optim_pb_nn_approx}
\mathcal{S}_{k}^{m}(Q) := \arginf_{\theta \in \Theta_m} \hspace{0.1cm} \Big\|V_{k+1}^m(X_{t_{k+1}}, Q) - \Phi_{m}\big(X_{t_k}; \theta \big) \Big\|_2.
\end{equation}

To analyze the convergence of the neural network approximation, we will rely on their powerful approximation ability. The latter is stated by the \textit{Universal Approximation Theorem}.

\begin{theorem}[Universal Approximation Theorem]
\label{uat_version_l2}
Assume that the activation functions in \eqref{nn_function_rep} are not constant and bounded. Let $\mu$ denote a probability measure on 
$\mathbb{R}^d$, then for any $I \ge 2$, $\mathcal{NN}_{\infty}$ is dense in the space $\mathbb{L}(\mathbb{R}^d, \mu)$ of squared $\mu$-integrable functions.
\end{theorem}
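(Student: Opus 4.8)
The plan is to establish density through the classical Hahn--Banach plus \emph{discriminatory function} route (in the spirit of Cybenko and Hornik), after first reducing the depth-$I$ statement to the single-hidden-layer case. Throughout I treat the output layer as affine (identity output activation), as is standard for regression, so that the relevant network outputs form a genuine linear space; the hidden activations are the bounded, non-constant functions hypothesized. The reduction across depths runs as follows: it suffices to approximate an arbitrary target by a shallow network $x \mapsto \sum_{j=1}^{m} c_j\,\phi(w_j \cdot x + b_j)$, because for any fixed $I \ge 2$ one can realize such a shallow map inside $\mathcal{NN}_\infty$ by using one hidden layer to produce the features $\phi(w_j\cdot x+b_j)$ and letting the remaining $I-1$ hidden layers approximate identity maps on the relevant compact range (each coordinate-wise identity being itself approximable by the one-dimensional version of the statement, since $\phi$ is bounded and non-constant). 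Composing these approximations shows that the shallow class lies in the closure of $\mathcal{NN}_\infty$ for every $I \ge 2$, so density propagates to all admissible depths.

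Let $\mathcal{M}$ denote the linear span of single-hidden-layer networks and $\overline{\mathcal{M}}$ its closure in $\mathbb{L}(\mathbb{R}^d,\mu)$. One checks that $\mathcal{M}$ is a linear subspace: the sum of two such networks is again one (concatenate their hidden units) and scalars are absorbed into the readout coefficients $c_j$. Suppose, for contradiction, that $\overline{\mathcal{M}} \neq \mathbb{L}(\mathbb{R}^d,\mu)$. By the Hahn--Banach theorem there is a nonzero continuous linear functional $L$ on $\mathbb{L}(\mathbb{R}^d,\mu)$ vanishing on $\overline{\mathcal{M}}$. By the Riesz representation of the dual, $L$ is given by integration against a nonzero element $g$ of the conjugate space, and setting $d\nu = g\,d\mu$ produces a nonzero finite signed measure $\nu$ satisfying $\int_{\mathbb{R}^d} \phi(w \cdot x + b)\, d\nu(x) = 0$ for every $w \in \mathbb{R}^d$ and every $b \in \mathbb{R}$.

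The decisive step, and the one where essentially all the difficulty lies, is to deduce $\nu = 0$ from this family of vanishing integrals, i.e.\ to show that a bounded non-constant activation is discriminatory. I would argue by Fourier analysis: viewed as a tempered distribution (legitimate since $\phi$ is bounded), $\phi$ has a distributional Fourier transform $\hat\phi$ that is nonzero precisely because $\phi$ is non-constant. Taking the Fourier transform in the bias variable $b$ and letting $w$ range over all scalings of each fixed direction sweeps $\hat\phi$ across all frequencies, which forces the Fourier transform $\hat\nu$ of the finite measure to vanish off the origin; the value at the origin is controlled separately by testing against the constant functions recovered as dominated limits (using boundedness of $\phi$). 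Injectivity of the Fourier transform on finite signed measures then gives $\nu = 0$, hence $g = 0$, contradicting $L \neq 0$. The delicate points to handle carefully are the justification of the distributional pairings under a finite-measure integral and the treatment of the zero frequency; once these are secured, we conclude $\overline{\mathcal{M}} = \mathbb{L}(\mathbb{R}^d,\mu)$, and by the depth reduction $\mathcal{NN}_\infty$ is dense for every $I \ge 2$. Everything outside this Fourier/discriminatory argument (Hahn--Banach, Riesz representation, the subspace structure of shallow networks) is routine.
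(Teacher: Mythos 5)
The paper does not prove this theorem at all: it is quoted as a classical result, with the relevant citations being Cybenko and Hornik (the hypothesis \emph{bounded and non-constant}, with density in $\mathbb{L}^2(\mu)$ for an arbitrary probability measure $\mu$, is exactly Hornik's 1991 Theorem~1), so there is no internal proof to compare against. Your reconstruction follows precisely the route of those references --- Hahn--Banach plus Riesz representation to produce a finite signed measure $\nu$ annihilating all ridge functions $\phi(w\cdot x+b)$, then a Fourier-analytic argument that a bounded non-constant activation is discriminatory --- and that route is sound. Two remarks on the details. First, the discriminatory step, which you rightly identify as carrying all the weight, works as you sketch: $\hat\phi$, as a tempered distribution, is supported at $\{0\}$ iff $\phi$ is a polynomial, and a bounded polynomial is constant, so non-constancy gives a frequency $\xi_0\neq 0$ in $\operatorname{supp}\hat\phi$; scaling in $w$ sweeps $w\xi_0$ over all nonzero frequencies and, since $\hat\nu$ is continuous, forces $\hat\nu=0$ off the origin. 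The zero frequency needs no dominated-limit argument: the family in \eqref{nn_function_rep} includes $w=0$, so $\phi(b)\,\nu(\mathbb{R}^d)=0$ for all $b$, and choosing $b$ with $\phi(b)\neq 0$ gives $\hat\nu(0)=0$ directly.

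Second, there is one genuine wrinkle in your depth reduction. You propose to let the extra hidden layers ``approximate identity maps on the relevant compact range,'' which implicitly asks for \emph{uniform} approximation on a compact set; under the theorem's hypotheses the activation is only bounded and non-constant, possibly discontinuous, and uniform approximation on compacts is then unavailable (Hornik's uniform result requires continuity). Moreover, composing $\mathbb{L}^2$-errors \emph{through} a discontinuous activation does not control the output error. The correct organization, which your own pushforward phrasing almost reaches, is an induction on the depth $I$ in which no error ever passes through a later activation: write the first hidden layer as $h(x)=\phi(a_1 x)$, note that $h_{*}\mu$ is a finite measure on a bounded subset of $\mathbb{R}^m$, and apply the induction hypothesis (density of depth-$(I-1)$ networks in $\mathbb{L}^2(\sigma)$ for \emph{every} finite measure $\sigma$, on every input dimension) to approximate the linear readout $y\mapsto c\cdot y$ in $\mathbb{L}^2(h_{*}\mu)$; then
$$\big\|\,g\circ h - c\cdot h\,\big\|_{\mathbb{L}^2(\mu)} = \big\|\,g - c\cdot\,\big\|_{\mathbb{L}^2(h_{*}\mu)},$$
with the base case $I=2$ being the shallow theorem you prove. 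With that repair, and reading the output activation $\psi$ as the identity (as you do, and as the regression use in the paper requires --- a bounded $\psi$ would make density false), your argument is a complete and correct proof of the cited statement.
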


\begin{remark}
\label{th_approx_l2_rq}
As stated in \cite{Lapeyre2019NeuralNR}, Theorem \ref{uat_version_l2} can be seen as follows. For any (real) squared-integrable random variable $Y$ defined on a 
measurable space, there exists a sequence $\displaystyle (\theta_m)_{m \ge 2} \in \prod_{m = 2}^{\infty} \Theta_m$ such that 
$\displaystyle \lim_{p\to\infty} \big\|Y - \Phi_m (X; \theta)  \big\|_2$ for some $\mathbb{R}^d$-valued random vector $X$. Thus, if for all $m \ge 2$, $\theta_m$ solves 
\begin{equation*}
\underset{\theta \in \Theta_m}{\inf} \hspace{0.1cm} \big\|\Phi_m(X; \theta) - Y  \big\|_2,
\end{equation*}
then the sequence $\big(\Phi_m(X; \theta_m) \big)_{m \ge 2}$ converges to $\mathbb{E}(Y \rvert X)$ in $\mathbb{L}^2(\mu)$.
\end{remark}

\vspace{0.2cm}
The universal approximation capacity of neural networks had been widely studied in the literature \cite{Hornik1989MultilayerFN, Hornik1991ApproximationCO, Cybenko1989ApproximationBS}. Some quantitative error bounds have been proved when the function to approximate is sufficiently smooth. A brief overview is presented in the following remark.

\begin{remark}[UAT error bounds]
When the weighted average of the Fourier representation of the function to approximate is bounded, an error bound of the convergence in Remark \ref{th_approx_l2_rq} of order $\mathcal{O}(m^{-1/2})$ had been shown in \cite{baron_approx, baron_approx_2}. It may appears that the dimension of the problem does not degrade the convergence rate but as discussed by the authors, this may be hidden in the Fourier representation. In \cite{ATTALI19971069} it has been proved that, when the activation functions are infinitely continuously differentiable and the function to approximate is $p$-times continuously differentiable and Lipschitz, then the sup-norm of the approximation error on every compact set is bounded by a term of order $\mathcal{O}\big(m^{-(p+1)/d}  \big)$. For a more detailed overview on quantitative error bounds, we refer the reader to \cite{devore2020neural}.
\end{remark}

Note that, as in the linear regression method, in practice, we simulate $N$ independent paths $\big(X_{t_0}^{[p]}, \ldots,X_{t_{n-1}}^{[p]}\big)_{1 \le p \le N}$ and use Monte Carlo approximation to compute the swing value function. For that purpose, we backwardly define the value function $V_k^{m, N}$ by,
\begin{equation}
\label{dp_approx_nn_mc}
    \left\{
    \begin{array}{ll}
        V_k^{m, N}\big(X_{t_k}^{[p]}, Q\big) = \displaystyle \esssup_{q \in \mathbb{A}_k(Q)} \hspace{0.1cm} \Big[\psi_k\big(q, X_{t_k}^{[p]}\big) + 
        \Phi_{m} \big(X_{t_k}^{[p]}; \theta_{k+1, m, N}(Q + q) \big)\Big],\\
        V_{n-1}^{m, N}\big(X_{t_{n-1}}^{[p]}, Q\big) = V_{n-1}\big(X_{t_{n-1}}^{[p]}, Q\big),
    \end{array}
\right.
\end{equation}
where $\theta_{k+1, m, N}(Q)$ lies within the following set,
\begin{equation}
\label{optim_pb_nn_approx_mc}
\mathcal{S}_{k}^{m, N}(Q) := \arginf_{\theta \in \Theta_m} \hspace{0.1cm} \Bigg\{\frac{1}{N} \sum_{p = 1}^{N} \Big|V_{k+1}^{m, N}(X_{t_{k+1}}^{[p]}, Q) - 
\Phi_{m}\big(X_{t_k}^{[p]}; \theta \big) \Big|^2\Bigg\}.
\end{equation}
Sets $\mathcal{S}_{k}^{m}(Q)$ or $\mathcal{S}_{k}^{m, N}(Q)$ (respectively defined in equations \eqref{optim_pb_nn_approx} and \eqref{optim_pb_nn_approx_mc}) generally does not reduces to a singleton. Thus hereafter, the notation $\theta_{k+1, m}(Q)$ or $\theta_{k+1, m, N}(Q)$ will denote an element of the corresponding set $\mathcal{S}_{k}^{m}(Q)$ or $\mathcal{S}_{k}^{m, N}(Q)$.

\section{Convergence analysis}
\label{cvg_analysis}
We conduct a convergence analysis following previous works \cite{Clement2002AnAO, Lapeyre2019NeuralNR, elfilaliechchafiq:hal-03436046}. Our initial focus is to establish a convergence result as the \q{architecture} size used to approximate the continuation value increases. By architecture, we mean either regression functions (in the linear regression) or units per layer (in the neural network). Then, we fix the value of $m$ (representing the architecture's size) and examine the associated Monte Carlo approximation. Let us start with the first step.

\subsection{Convergence with respect to the architecture's size}
We focus on approximations \eqref{estim_orth_proj_dp} and \eqref{dp_approx_nn} of the BDPP \eqref{eq_dp_swing}. In this section, we do not restrict ourselves to the \textit{bang-bang} setting. That is, for both approximation methods, we consider arbitrary volume constraints (not limited to integers).

\subsubsection{Linear regression approximation}
We start by analyzing the first approximation in the linear regression setting \eqref{estim_orth_proj_dp}. We show the convergence of the approximated value function $V_k^m$ to the actual value $V_k$ as $m$ tends to infinity. To state this property, we need the following result.

\begin{Proposition}
\label{prop1}
Let $m$ be a positive integer. Assume $\bm{\mathcal{H}_2^{LS}}$ and $\bm{\mathcal{H}_{3, 2}}$ hold true. Then, for all $k \in \{0,\ldots,n-2\}$, the function
$$Q \mapsto \Big\|\Phi_{m}\big(X_{t_k}; \tilde{\theta}_{k+1, m}(Q)\big) - \mathbb{E}\big(V_{k+1}(X_{t_{k+1}}, Q) \big\rvert X_{t_k}\big) \Big\|_2$$
is continuous on $\mathcal{T}_{k+1}$, where $\Phi_{m}$ is defined in \eqref{approx_ls_proj_orth} and $\tilde{\theta}_{k+1, m}(Q)$ solves the \q{theoretical} optimization problem
\begin{equation}
\label{theore_pb_reg}
\underset{\theta \in \Theta_m}{\inf} \hspace{0.1cm} \Big\|V_{k+1}\big(X_{t_{k + 1}}, Q\big) - \Phi_{m}(X_{t_k}; \theta) \Big\|_2.
\end{equation}
\end{Proposition}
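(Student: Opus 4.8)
The plan is to exploit the fact that both objects inside the norm are orthogonal projections of the \emph{same} random variable $V_{k+1}(X_{t_{k+1}}, Q)$, so that the entire difficulty reduces to the $\mathbb{L}^2$-continuity of the single map $Q \mapsto V_{k+1}(X_{t_{k+1}}, Q)$. Concretely, under $\bm{\mathcal{H}_2^{LS}}$ the Gram matrix $A_m^k$ is non-singular, hence the minimiser of \eqref{theore_pb_reg} is unique and given explicitly by $\tilde{\theta}_{k+1, m}(Q) = (A_m^k)^{-1}\,\mathbb{E}\big(V_{k+1}(X_{t_{k+1}}, Q)\,e^m(X_{t_k})\big)$; consequently $\Phi_m(X_{t_k}; \tilde{\theta}_{k+1, m}(Q))$ is exactly the orthogonal projection of $V_{k+1}(X_{t_{k+1}}, Q)$ onto $\spn\big(e_1(X_{t_k}), \ldots, e_m(X_{t_k})\big)$, while $\mathbb{E}(V_{k+1}(X_{t_{k+1}}, Q)\,\rvert\,X_{t_k})$ is its orthogonal projection onto $\mathbb{L}^2\big(\sigma(X_{t_k})\big)$. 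As orthogonal projections onto closed subspaces, both maps are $1$-Lipschitz operators on $\mathbb{L}^2_{\mathbb{R}}(\mathbb{P})$.

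The key step, and the one I expect to be the real work, is to prove that $Q \in \mathcal{T}_{k+1} \mapsto V_{k+1}(X_{t_{k+1}}, Q) \in \mathbb{L}^2_{\mathbb{R}}(\mathbb{P})$ is continuous for the $\|\cdot\|_2$-norm. For this I would fix $Q \in \mathcal{T}_{k+1}$ and an arbitrary sequence $Q_n \to Q$. Proposition \ref{cont_val_func} yields the pointwise (realisation-wise) continuity $V_{k+1}(X_{t_{k+1}}, Q_n) \to V_{k+1}(X_{t_{k+1}}, Q)$ almost surely. Under $\bm{\mathcal{H}_{3, 2}}$, Remark \ref{cont_high_order} (taken with $r = 2$) supplies a $Q$-independent dominating variable $G_{k+1} \in \mathbb{L}^2_{\mathbb{R}}(\mathbb{P})$ with $|V_{k+1}(X_{t_{k+1}}, \cdot)| \le G_{k+1}$; hence $|V_{k+1}(X_{t_{k+1}}, Q_n) - V_{k+1}(X_{t_{k+1}}, Q)|^2 \le (2 G_{k+1})^2 \in \mathbb{L}^1_{\mathbb{R}}(\mathbb{P})$, and dominated convergence gives $\|V_{k+1}(X_{t_{k+1}}, Q_n) - V_{k+1}(X_{t_{k+1}}, Q)\|_2 \to 0$. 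This is the delicate part, since it is precisely where the hypotheses $\bm{\mathcal{H}_{3,2}}$ (for integrable domination) and Proposition \ref{cont_val_func} (for pointwise regularity) are both consumed.

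Once this $\mathbb{L}^2$-continuity is in hand, the conclusion is immediate. Composing the continuous map $Q \mapsto V_{k+1}(X_{t_{k+1}}, Q)$ with the two $1$-Lipschitz projections shows that both $Q \mapsto \Phi_m(X_{t_k}; \tilde{\theta}_{k+1, m}(Q))$ and $Q \mapsto \mathbb{E}(V_{k+1}(X_{t_{k+1}}, Q)\,\rvert\,X_{t_k})$ are continuous from $\mathcal{T}_{k+1}$ into $\mathbb{L}^2_{\mathbb{R}}(\mathbb{P})$; their difference is therefore $\mathbb{L}^2$-continuous, and since $\|\cdot\|_2$ is $1$-Lipschitz (reverse triangle inequality) the real-valued map in the statement is continuous on $\mathcal{T}_{k+1}$. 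If one prefers to avoid the projection viewpoint, an equivalent bookkeeping proves directly that $Q \mapsto \tilde{\theta}_{k+1, m}(Q)$ is continuous: by Cauchy--Schwarz and the standing square-integrability of the $e_j(X_{t_k})$, each coordinate $Q \mapsto \mathbb{E}\big(V_{k+1}(X_{t_{k+1}}, Q)\,e_j(X_{t_k})\big)$ is continuous as an inner product against a fixed vector, so $\tilde{\theta}_{k+1, m}(Q) = (A_m^k)^{-1}(\cdots)$ is continuous as the image of a continuous $\mathbb{R}^m$-valued map under a fixed linear operator, and continuity of $\theta \mapsto \langle \theta, e^m(X_{t_k})\rangle$ from $\mathbb{R}^m$ into $\mathbb{L}^2_{\mathbb{R}}(\mathbb{P})$ closes the argument identically.
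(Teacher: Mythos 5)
Your proof is correct, but it takes a genuinely different route from the paper's in its second half. The paper first invokes the Pythagoras identity \eqref{decompo_pytha} to reduce the claim to the continuity of the two squared distances \eqref{func_1_cont} and \eqref{func_2_cont}; the first is handled by dominated convergence exactly as in your key step (via Remark \ref{cont_high_order} under $\bm{\mathcal{H}_{3,2}}$), but the second is handled by writing the projection error as a ratio of Gram determinants, $G\big(V_{k+1}(X_{t_{k+1}},Q), e_1(X_{t_k}),\ldots,e_m(X_{t_k})\big)/G\big(e_1(X_{t_k}),\ldots,e_m(X_{t_k})\big)$ (Proposition \ref{gram_det}, with $\bm{\mathcal{H}_2^{LS}}$ ensuring the denominator is nonzero), and then composing with the continuity of the determinant in its entries. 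You bypass the Gram-determinant machinery entirely: having established the $\mathbb{L}^2$-continuity of $Q \mapsto V_{k+1}(X_{t_{k+1}}, Q)$ --- which is indeed where $\bm{\mathcal{H}_{3,2}}$, Proposition \ref{cont_val_func} and Remark \ref{cont_high_order} are consumed, just as in the paper --- you compose it with the two $1$-Lipschitz orthogonal projections (onto $\spn\big(e_1(X_{t_k}),\ldots,e_m(X_{t_k})\big)$ and onto $\mathbb{L}^2\big(\sigma(X_{t_k})\big)$, the latter being legitimate precisely because $\bm{\mathcal{H}_{3,2}}$ puts $V_{k+1}(X_{t_{k+1}},Q)$ in $\mathbb{L}^2$). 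Your argument is shorter and actually yields a slightly stronger conclusion, namely the $\|\cdot\|_2$-continuity of each map $Q \mapsto \Phi_m\big(X_{t_k};\tilde{\theta}_{k+1,m}(Q)\big)$ and $Q \mapsto \mathbb{E}\big(V_{k+1}(X_{t_{k+1}},Q)\rvert X_{t_k}\big)$ separately, not merely of the scalar distance between them; amusingly, the paper itself uses the $1$-Lipschitz property of the projection later, in the proof of Proposition \ref{cvg_m_basis}, so your route is fully in the paper's spirit. The paper's determinant route buys something too: it stays entirely at the level of scalar quantities --- inner products $\mathbb{E}\big(V_{k+1}(X_{t_{k+1}},Q)\,e_j(X_{t_k})\big)$ that are continuous in $Q$ --- without ever identifying the minimiser, which is a viewpoint that generalises when the projection has no convenient operator description. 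One small point to make explicit in your write-up: both arguments tacitly use the standing assumption that the $e_j(X_{t_k})$ are square-integrable (so that the span is a genuine finite-dimensional subspace of $\mathbb{L}^2$ and, in your alternative coordinate argument, so that Cauchy--Schwarz gives the integrable domination $G_{k+1}\,\big|e_j(X_{t_k})\big| \in \mathbb{L}^1_{\mathbb{R}}(\mathbb{P})$); this is stated in Section \ref{lsr_method} rather than in the hypotheses of the proposition, so it is worth flagging rather than silently assuming.
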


\begin{proof}
Keeping in mind relation \eqref{decompo_pytha}, it suffices to prove that the functions,
\begin{equation}
\label{func_1_cont}
Q \mapsto \Big\|V_{k+1}\big(X_{t_{k+1}}, Q\big) - \mathbb{E}\big(V_{k+1}(X_{t_{k+1}}, Q) \big\rvert X_{t_k}\big) \Big\|_2^2
\end{equation}
and
\begin{equation}
\label{func_2_cont}
Q \mapsto \Big\|V_{k+1}\big(X_{t_{k+1}}, Q\big) - \Phi_{m}\big(X_{t_k}; \tilde{\theta}_{k+1, m}(Q)\big) \Big\|_2^2
\end{equation}
are continuous. Let us start with the first function. Let $Q \in \mathcal{T}_{k+1}$ and consider a sequence $\big(Q_n\big)_n$ which converges to $Q$. We know (as pointed out in Remark \ref{cont_high_order}) that assumption $\bm{\mathcal{H}_{3, 2}}$ entails that $V_{k+1}(X_{t_{k+1}}, \cdot)$ is continuous and there exists $G_{k+1} \in \mathbb{L}_{\mathbb{R}}^2\big(\mathbb{P}\big)$ (independent of $Q$) such that $V_{k+1}(X_{t_{k+1}}, \cdot) \le G_{k+1}$. Thus the Lebesgue dominated convergence theorem implies that,
$$\lim\limits_{n \rightarrow +\infty} \hspace{0.1cm} \big\|V_{k+1}(X_{t_{k+1}}, Q_n) - \mathbb{E}(V_{k+1}(X_{t_{k+1}}, Q_n)\rvert X_{t_k}) \big\|_2^2 = 
\big\|V_{k+1}(X_{t_{k+1}}, Q) - \mathbb{E}(V_{k+1}(X_{t_{k+1}}, Q)\rvert X_{t_k}) \big\|_2^2$$
yielding the continuity of the function defined in \eqref{func_1_cont}. We now prove the continuity of the second function defined in \eqref{func_2_cont}. Using assumption $\bm{\mathcal{H}_2^{LS}}$, it follows from Proposition \ref{gram_det} that,
$$\Big\|\Phi_{m}(X_{t_k}; \tilde{\theta}_{k+1, m}(Q)) - V_{k+1}(X_{t_{k+1}}, Q) \Big\|_2^2 = \frac{G \big(V_{k+1}(X_{t_{k+1}}, Q), e_1(X_{t_k}), \ldots, e_m(X_{t_k}) \big)}{G\big( e_1(X_{t_k}), \ldots, e_m(X_{t_k}) \big)},$$
where $G(x_1, \ldots, x_n)$ denotes the Gram determinant associated to the canonical $\mathbb{L}^2\big(\mathbb{P}\big)$ inner product. Since assumption 
$\bm{\mathcal{H}_{3, 2}}$ entails the continuity of $V_{k+1}(X_{t_{k+1}}, \cdot)$, then owing to the continuity of the determinant, one concludes that 
$Q \in \mathcal{T}_{k+1} \mapsto \Big\|\Phi_{m}(X_{t_k}; \tilde{\theta}_{k+1, m}(Q)) -V_{k+1}(X_{t_{k+1}}, Q) \Big\|_2^2$ is continuous as a composition of continuous functions. 
This completes the proof.
\end{proof}

The preceding proposition allows us to show our first convergence result stated in the following proposition.
\vspace{0.4cm}
\begin{Proposition}
\label{cvg_m_basis}
Under assumptions $\bm{\mathcal{H}_1^{LS}}$, $\bm{\mathcal{H}_{2}^{LS}}$ and $\bm{\mathcal{H}_{3, 2}}$, we have for all $0 \le k \le n-1$,
$$\lim\limits_{m \rightarrow +\infty} \hspace{0.1cm} \underset{Q \in \mathcal{T}_{k}}{\sup} \hspace{0.1cm} \Big\|V^m_k(X_{t_k}, Q) - V_k( X_{t_k}, Q)\Big\|_2 = 0.$$
\end{Proposition}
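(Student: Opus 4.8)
The plan is to argue by backward induction on $k$, following the structure of the dynamic programming equations \eqref{eq_dp_swing} and \eqref{estim_orth_proj_dp}. The base case $k=n-1$ is immediate: since $V^m_{n-1}=V_{n-1}$ by definition, $\sup_{Q\in\mathcal{T}_{n-1}}\|V^m_{n-1}(X_{t_{n-1}},Q)-V_{n-1}(X_{t_{n-1}},Q)\|_2=0$ for every $m$. Assume now the statement holds at step $k+1$, i.e. $\sup_{Q'\in\mathcal{T}_{k+1}}\|V^m_{k+1}(X_{t_{k+1}},Q')-V_{k+1}(X_{t_{k+1}},Q')\|_2\to 0$, and propagate it to step $k$. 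First I would apply the elementary inequality $|\sup_q f(q)-\sup_q g(q)|\le\sup_q|f(q)-g(q)|$ to the two value functions. Setting $\Delta_m(Q'):=\Phi_m(X_{t_k};\theta_{k+1,m}(Q'))-\mathbb{E}(V_{k+1}(X_{t_{k+1}},Q')\mid X_{t_k})$ and noting that $q\in Adm(t_k,Q)$ forces $Q+q\in\mathcal{T}_{k+1}$, this gives the pointwise bound $|V^m_k(X_{t_k},Q)-V_k(X_{t_k},Q)|\le \sup_{Q'\in\mathcal{T}_{k+1}}|\Delta_m(Q')|$, uniformly in $Q\in\mathcal{T}_k$. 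Hence it suffices to prove $\big\|\sup_{Q'\in\mathcal{T}_{k+1}}|\Delta_m(Q')|\big\|_2\to 0$.

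The core is a decomposition of $\Delta_m(Q')$ into a \emph{propagation} error and a \emph{projection} error. Since $\Phi_m(X_{t_k};\theta_{k+1,m}(Q'))$ is the $\mathbb{L}^2$-orthogonal projection of $V^m_{k+1}(X_{t_{k+1}},Q')$ onto $\spn(e^m(X_{t_k}))$, while the conditional expectation is the projection onto all of $\mathbb{L}^2(\sigma(X_{t_k}))$, I would write, with $p_m$ denoting that finite-dimensional projection,
$$\Delta_m(Q')=p_m\big[V^m_{k+1}(\cdot,Q')-V_{k+1}(\cdot,Q')\big]+\Big(p_m\big[V_{k+1}(\cdot,Q')\big]-\mathbb{E}(V_{k+1}(\cdot,Q')\mid X_{t_k})\Big).$$
The propagation term is handled by the contraction property of $p_m$, so its $\mathbb{L}^2$-norm is at most $\|V^m_{k+1}(\cdot,Q')-V_{k+1}(\cdot,Q')\|_2$, which tends to $0$ uniformly in $Q'$ by the induction hypothesis. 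The projection term is, for each fixed $Q'$, exactly the quantity shown to be continuous in $Q'$ in Proposition \ref{prop1} (using that $V_{k+1}-\mathbb{E}(V_{k+1}\mid X_{t_k})\perp\spn(e^m(X_{t_k}))$, so $p_m[V_{k+1}]=p_m[\mathbb{E}(V_{k+1}\mid X_{t_k})]$); by the totality assumption $\bm{\mathcal{H}_1^{LS}}$ it vanishes pointwise in $Q'$ as $m\to\infty$, and because the regression subspaces are nested it is nonincreasing in $m$.

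To upgrade the pointwise-in-$Q'$ vanishing of the projection error into control of $\sup_{Q'\in\mathcal{T}_{k+1}}$ I would invoke Dini's theorem: $\mathcal{T}_{k+1}$ is compact, the projection-error norm $g_m(Q')$ is nonnegative, continuous (Proposition \ref{prop1}), monotonically decreasing in $m$, and converges pointwise to the continuous limit $0$; hence $\sup_{Q'\in\mathcal{T}_{k+1}}g_m(Q')\to 0$. This is precisely where the continuity of the value function with respect to cumulative consumption enters as the decisive ingredient.

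The main obstacle I anticipate is the interchange between the supremum over the continuum $\mathcal{T}_{k+1}$ (equivalently, over the admissible volumes $q$) and the $\mathbb{L}^2$-norm: the induction hypothesis and Dini only deliver $\sup_{Q'}\|\Delta_m(Q')\|_2\to 0$, whereas the pointwise bound above requires $\big\|\sup_{Q'}|\Delta_m(Q')|\big\|_2\to 0$. To bridge this gap I would use that, through a regular conditional distribution of $X_{t_{k+1}}$ given $X_{t_k}$, the map $Q'\mapsto\Delta_m(Q')(\omega)$ is continuous on the compact set $\mathcal{T}_{k+1}$ for almost every $\omega$ (continuity of $V_{k+1}$ in $Q$ from Proposition \ref{cont_val_func} and dominated convergence), so the supremum is attained along a measurable selection; combined with the integrable envelope $G_{k+1}\in\mathbb{L}^2$ supplied by $\bm{\mathcal{H}_{3,2}}$ and Remark \ref{cont_high_order}, I would then pass to the limit by a dominated-convergence argument applied to the already-established uniform convergence of the norms. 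Reconciling the $\mathbb{L}^2$-contraction estimate for the propagation term (which only tolerates the \emph{norm of the supremum} on the left) with the fact that the projection $p_m$ is not uniformly bounded as $m$ grows is, I expect, the genuinely delicate point of the argument; resolving it closes the induction and yields the claimed uniform-in-$Q$ convergence at step $k$.
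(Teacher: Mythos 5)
Your skeleton coincides with the paper's own proof at almost every point: backward induction with the trivial base case $k=n-1$, the elementary inequality $|\sup_i a_i-\sup_i b_i|\le\sup_i|a_i-b_i|$, the split of the error into a propagation term (handled by the $1$-Lipschitz property of the orthogonal projection $p_m$ plus the induction hypothesis) and a projection term for the \emph{true} value function (handled by Dini's lemma, fed by the continuity from Proposition \ref{prop1}, the pointwise convergence from $\bm{\mathcal{H}_1^{LS}}$, and the monotonicity coming from the nestedness of the regression subspaces). However, there is a genuine gap at exactly the point you flag yourself. Your pathwise bound reduces the problem to showing $\big\|\sup_{Q'\in\mathcal{T}_{k+1}}|\Delta_m(Q')|\big\|_2\to0$, and none of the tools you then invoke can deliver this: Dini and the induction hypothesis control only $\sup_{Q'}\|\Delta_m(Q')\|_2$, and an $\mathbb{L}^2$ envelope $G_{k+1}$ makes the pathwise supremum integrable but does not let dominated convergence act --- for that you would need $\sup_{Q'}|\Delta_m(Q')|\to0$ almost surely, which is not established and cannot be extracted from convergence of the norms alone, since the random maps $Q'\mapsto\Delta_m(Q')$ change with $m$ and, as you note, $p_m$ admits no uniform pathwise bound as $m$ grows. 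As written, the induction does not close.

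The paper never forms the norm-of-the-supremum quantity at all; the missing idea is the \emph{bifurcation property}. After taking expectations in the pointwise bound, one faces $\mathbb{E}\big[\esssup_{q\in Adm(t_k,Q)}A_k^m(Q,q)\big]$ with $A_k^m(Q,q)=|\Delta_m(Q+q)|^2$. Each $A_k^m(Q,q)$ is $\sigma(X_{t_k})$-measurable, so for any admissible $q_1,q_2$ the control $q_A^{*}=q_1\cdot\mathrm{1}_{\{A_k^m(Q,q_1)\ge A_k^m(Q,q_2)\}}+q_2\cdot\mathrm{1}_{\{A_k^m(Q,q_1)<A_k^m(Q,q_2)\}}$ is again admissible and satisfies $A_k^m(Q,q_A^{*})=\max\big(A_k^m(Q,q_1),A_k^m(Q,q_2)\big)$: the family is directed upward, whence $\mathbb{E}\big[\esssup_{q}A_k^m(Q,q)\big]=\sup_{q}\mathbb{E}\big[A_k^m(Q,q)\big]$. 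The supremum thus exits the expectation \emph{before} any limit is taken, and the induction only requires $\sup_{Q'\in\mathcal{T}_{k+1}}\|\Delta_m(Q')\|_2^2\to0$ --- precisely what your Dini-plus-contraction machinery already provides. Replacing your regular-conditional-distribution and measurable-selection bridge by this directedness argument repairs the proof, and the rest of your proposal then goes through essentially verbatim.
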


\begin{proof}
We proceed by a backward induction on $k$. We have, almost surely, $V^m_{n-1}(X_{t_{n-1}}, Q) = V_{n-1}(X_{t_{n-1}}, Q)$ for any $Q \in \mathcal{T}_{n-1}$ 
and therefore the proposition holds true for $k = n-1$. Let us suppose it holds for $k+1$. For all $Q \in \mathcal{T}_{k}$ using the inequality 
$\big|\underset{i \in I}{\sup} \hspace{0.1cm} a_i - \underset{i \in I}{\sup} \hspace{0.1cm} b_i \big| \hspace{0.1cm} \le \hspace{0.1cm} \underset{i \in I}{\sup} \hspace{0.1cm} |a_i-b_i|$, we get,
\begin{align*}
\Big|V^m_k(X_{t_k}, Q) - V_k(X_{t_k}, Q)\Big|^2 &\le \esssup_{q \in \mathbb{A}_k(Q)} \Big|\Phi_{m}\big(X_{t_k}; \theta_{k+1, m}(Q+q)\big) - \mathbb{E}\big(V_{k+1}(X_{t_{k+1}}, Q + q) \big\rvert X_{t_k} \big) \Big|^2.
\end{align*}

\noindent
Taking the expectation in the previous inequality yields,
\begin{align}
\label{pre_bif}
\big\|V^m_k(X_{t_k}, Q) &- V_k(X_{t_k}, Q) \big\|_2^2 \nonumber \\
&\le \mathbb{E}\Bigg(\esssup_{q \in \mathbb{A}_k(Q)} \Big|\Phi_{m}(X_{t_k}; \theta_{k+1, m}(Q+q)) - \mathbb{E}(V_{k+1}(X_{t_{k+1}}, Q + q) \rvert X_{t_k} ) \Big|^2\Bigg).
\end{align}
To interchange the essential supremum with the expectation, we rely on the bifurcation property. For all $q \in \mathbb{A}_k(Q)$, consider
$$A_k^m(Q, q) := \Big|\Phi_{m}(X_{t_k}; \theta_{k+1, m}(Q+q)) - \mathbb{E}\big(V_{k+1}( X_{t_{k+1}}, Q + q) \rvert X_{t_k} \big)\Big|^2.$$
Then for all $q_1, q_2 \in \mathbb{A}_k(Q)$ define the following random variable
\begin{equation}
\label{my_eq_help1}
q_A^{*} = q_1 \cdot \mathrm{1}_{\{A_k^m(Q, q_1) \ge A_k^m(Q, q_2)\}} + q_2 \cdot \mathrm{1}_{\{A_k^m(Q, q_1) < A_k^m(Q, q_2)\}}.
\end{equation}
It follows from the definition of $\Phi_{m}$ in \eqref{approx_ls_proj_orth} and that of the conditional expectation that $A_k^m(Q, q)$ is 
$\sigma\left(X_{t_k} \right)$-measurable for all $q \in \mathbb{A}_k(Q)$. Thus using \eqref{my_eq_help1} yields 
$q_A^{*} \in \mathbb{A}_k(Q)$ and $A_k^m(Q, q_A^{*}) = \max\left(A_k^m(Q, q_1), A_k^m(Q, q_2) \right)$. Therefore one may use the bifurcation property in \eqref{pre_bif} and we get,
    \begin{align}
    \label{eq_sup_diff_val_func}
        \big\|V^m_k(X_{t_k}, Q) &- V_k(X_{t_k}, Q) \big\|_2^2 \nonumber\\
        &\le \underset{q \in \mathbb{A}_k(Q)}{\sup} \big\|\Phi_{m}(X_{t_k}; \theta_{k+1, m}(Q+q)) - 
        \mathbb{E}(V_{k+1}(X_{t_{k+1}}, Q + q) \rvert X_{t_k} ) \big\|_2^2 \nonumber\\
        &\le 2 \underset{q \in \mathbb{A}_k(Q)}{\sup} \big\|\Phi_{m}(X_{t_k}; \theta_{k+1, m}(Q+q)) - \Phi_{m}(X_{t_k}; \tilde{\theta}_{k+1, m}(Q+q)) \big\|_2^2  \nonumber\\
        & \quad +2\underset{q \in \mathbb{A}_k(Q)}{\sup} \big\|\Phi_{m}(X_{t_k}; \tilde{\theta}_{k+1, m}(Q+q)) - \mathbb{E}(V_{k+1}( X_{t_{k+1}}, Q + q) \rvert X_{t_k} ) \big\|_2^2,
    \end{align}
where in the last inequality, we used Minkowski inequality. $\tilde{\theta}_{k+1, m}(Q+q)$ solves the \q{theoretical} optimization problem \eqref{theore_pb_reg}. Note that in the latter problem, we introduced the actual (not known) value function $V_{k+1}$ unlike in equation \eqref{optim_coef_reg_ls}. This is just a theoretical tool as the preceding optimization problem cannot be solved since we do not know the actual value function $V_{k+1}$. Thus taking the supremum in \eqref{eq_sup_diff_val_func} yields,
    \begin{align}
    \label{breakpoint}
        \underset{Q \in \mathcal{T}_{k}}{\sup} \big\|V^m_k(X_{t_k}, Q) &- V_k(X_{t_k}, Q) \big\|_2^2 \nonumber \\
         &\le 2 \underset{Q \in \mathcal{T}_{k+1}}{\sup} \big\|\Phi_{m}(X_{t_k}; \theta_{k+1, m}(Q)) - \Phi_{m}(X_{t_k}; \tilde{\theta}_{k+1, m}(Q))  \big\|_2^2  \nonumber \\
        &\quad +2\underset{Q \in \mathcal{T}_{k+1}}{\sup} \big\|\Phi_{m}(X_{t_k}; \tilde{\theta}_{k+1, m}(Q)) - \mathbb{E}(V_{k+1}(X_{t_{k+1}}, Q) \rvert X_{t_k}) \big\|_2^2,
    \end{align}
where we used the fact that, for all $Q \in \mathcal{T}_{k}$ and all $q \in \mathbb{A}_k(Q)$ we have $Q + q \in \mathcal{T}_{k+1}$. Besides, recall that $\Phi_{m}(X_{t_k}; \tilde{\theta}_{k+1, m}(Q))$ and $\Phi_{m}(X_{t_k}; \theta_{k+1, m}(Q))$ are orthogonal projections of $V_{k+1}(X_{t_{k+1}}, Q)$ and $V_{k+1}^m(X_{t_{k+1}}, Q)$ on the subspace spanned by $e^m(X_{t_k})$. Then knowing that the orthogonal projection is 1-Lipschitz, we have
$$\underset{Q \in \mathcal{T}_{k+1}}{\sup} \big\|\Phi_{m}(X_{t_k}; \theta_{k+1, m}(Q)) - \Phi_{m}(X_{t_k}; \tilde{\theta}_{k+1, m}(Q))\big\|_2^2 \le 
\underset{Q \in \mathcal{T}_{k+1}}{\sup} \big\|V^m_{k+1}(X_{t_{k+1}}, Q) - V_{k+1}(X_{t_{k+1}}, Q)\big\|_2^2.$$
Thanks to the induction assumption, the right hand side of the last inequality converges to $0$ as $m \to + \infty$, so that,
\begin{equation}
\label{ind_assump}
\underset{Q \in \mathcal{T}_{k+1}}{\sup} \big\|\Phi_{m}(X_{t_k}; \theta_{k+1, m}(Q)) - \Phi_{m}(X_{t_k}; \tilde{\theta}_{k+1, m}(Q))\big\|_2^2 \hspace{0.2cm} \xrightarrow[m \to + \infty]{} 0.
\end{equation}
It remains to prove that
\begin{equation}
\label{second_lim}
\lim\limits_{m \rightarrow +\infty} \hspace{0.1cm} \underset{Q \in \mathcal{T}_{k+1}}{\sup} \Big\|\Phi_{m}(X_{t_k}; \tilde{\theta}_{k+1, m}(Q)) - 
\mathbb{E}(V_{k+1}( X_{t_{k+1}}, Q) \rvert X_{t_k}) \Big\|_2^2 = 0.
\end{equation}
To achieve, this we rely on Dini's lemma whose assumptions hold true owing to the three following facts.

\subsubsection*{Pointwise convergence}
\noindent
It follows from assumption $\bm{\mathcal{H}_1^{LS}}$ that, for any $Q \in \mathcal{T}_{k+1}$,
$$\lim\limits_{m \rightarrow +\infty}  \Big\|\Phi_{m}(X_{t_k}; \tilde{\theta}_{k+1, m}(Q)) - \mathbb{E}(V_{k+1}( X_{t_{k+1}}, Q) \rvert X_{t_k}) \Big\|_2^2 = 0.$$

\subsubsection*{Continuity}
\noindent
The continuity of $Q \mapsto\Big\|\Phi_{m}(X_{t_k}; \tilde{\theta}_{k+1, m}(Q)) - \mathbb{E}(V_{k+1}( X_{t_{k+1}}, Q) \rvert X_{t_k}) \Big\|_2^2$ is given by Proposition \ref{prop1} under assumptions $\bm{\mathcal{H}_{2}^{LS}}$ and $\bm{\mathcal{H}_{3, 2}}$.

\subsubsection*{Monotony}
\noindent
Denote by $F_m^k := \spn \big( e_1(X_{t_k}), \ldots, e_m(X_{t_k}) \big)$. Then it is straightforward that for any $m \ge 1$, $F_m^k \subseteq F_{m+1}^k$. So that,
\begin{align*}
\Big\|\Phi_{m}(X_{t_k}; \tilde{\theta}_{k+1, m}(Q)) - \mathbb{E}(V_{k+1}( X_{t_{k+1}}, Q) \rvert X_{t_k}) \Big\|_2^2 &= \underset{Y \in F_m^k}{\inf} \hspace{0.1cm} \Big\|\mathbb{E}\big(V_{k+1}( X_{t_{k+1}}, Q) \rvert X_{t_k}\big) - Y\Big|\Big|_2^2\\
&\ge \underset{Y \in F_{m+1}^k}{\inf} \hspace{0.1cm} \Big\|\mathbb{E}\big(V_{k+1}( X_{t_{k+1}}, Q) \rvert X_{t_k}\big) - Y\Big\|_2^2\\
&= \Big\|\Phi_{m+1}(X_{t_k}; \tilde{\theta}_{k+1, m+1}(Q)) - \mathbb{E}(V_{k+1}( X_{t_{k+1}}, Q) \rvert X_{t_k}) \Big\|_2^2.
\end{align*}
Thus the sequence,
$$\left(\Big\|\Phi_{m}(X_{t_k}; \tilde{\theta}_{k+1, m}(Q)) - \mathbb{E}(V_{k+1}( X_{t_{k+1}}, Q) \rvert X_{t_k}) \Big\|_2^2 \right)_{m \ge 1}$$
is non-increasing. From the three preceding properties, one may apply Dini lemma yielding the desired result \eqref{second_lim}. Finally, combining \eqref{ind_assump} and \eqref{second_lim} in \eqref{breakpoint} yields,
$$\lim\limits_{m \rightarrow +\infty} \hspace{0.1cm} \underset{Q \in \mathcal{T}_{k}}{\sup} \big\|V^m_k(X_{t_k}, Q) - V_k(X_{t_k}, Q) \big\|_2^2 = 0.$$
This completes the proof.
\end{proof}

\subsubsection{Neural network approximation}
We now consider the approximation of the continuation value using neural network. We prove a similar result as in Proposition \ref{cvg_m_basis}, when the number of units per hidden layer increases. To this end, we set the following assumptions.

\vspace{0.3cm}
\noindent
$\bm{\mathcal{H}_1^{\mathcal{NN}}}$: For every $m \ge 2$, there exists $q \ge 1$ such that for every $\theta \in \Theta_m$, $\Phi_m(\cdot; \theta)$ has $q$-polynomial growth uniformly in $\theta$.

\vspace{0.4cm}
\noindent
$\bm{\mathcal{H}_2^{\mathcal{NN}}}$: For any $0 \le k \le n-1$, a.s. the random functions $\theta \in \Theta_m \mapsto \Phi_m\big(X_{t_k}; \theta \big)$ are continuous. Owing to the Heine theorem, the compactness of $\Theta_m$ yields the uniform continuity.

\vspace{0.3cm}

\begin{Proposition}
\label{cvg_m_nn}
Assume $\bm{\mathcal{H}_1^{\mathcal{NN}}}$, $\bm{\mathcal{H}_2^{\mathcal{NN}}}$ and $\bm{\mathcal{H}_{3, 2q}}$ (with $q$ involved in assumption $\bm{\mathcal{H}_1^{\mathcal{NN}}}$) hold true. Then, for all $0 \le k \le n-1$,
$$\lim\limits_{m \rightarrow +\infty} \hspace{0.1cm} \underset{Q \in \mathcal{T}_{k}}{\sup} \hspace{0.1cm} \Big\|V^m_k(X_{t_k}, Q) - V_k( X_{t_k}, Q)\Big\|_2 = 0.$$
\end{Proposition}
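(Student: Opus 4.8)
The plan is to run the same backward induction in $k$ as in the least squares case (Proposition \ref{cvg_m_basis}), the base case $k=n-1$ being immediate from $V^m_{n-1}=V_{n-1}$. Assuming the statement at level $k+1$, I would first reduce the value-function error to a continuation-value error by the two devices used there: the elementary bound $|\sup_i a_i-\sup_i b_i|\le\sup_i|a_i-b_i|$ applied to the essential suprema defining $V^m_k$ and $V_k$, followed by the bifurcation property to interchange the essential supremum over $q\in Adm(t_k,Q)$ with the $\mathbb{L}^2$-norm. As in the cited proof, for a $\sigma(X_{t_k})$-measurable control the quantity $A^m_k(Q,q):=\big|\Phi_m(X_{t_k};\theta_{k+1,m}(Q+q))-\mathbb{E}(V_{k+1}(X_{t_{k+1}},Q+q)\mid X_{t_k})\big|^2$ is $\sigma(X_{t_k})$-measurable (the control may be frozen inside the conditional expectation), so the bifurcation construction of $q_A^{*}$ applies verbatim. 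Using $Q+q\in\mathcal{T}_{k+1}$ for $Q\in\mathcal{T}_k$, this yields
$$\sup_{Q\in\mathcal{T}_k}\big\|V^m_k(X_{t_k},Q)-V_k(X_{t_k},Q)\big\|_2^2\ \le\ \sup_{Q\in\mathcal{T}_{k+1}}\big\|\Phi_m(X_{t_k};\theta_{k+1,m}(Q))-\mathbb{E}(V_{k+1}(X_{t_{k+1}},Q)\mid X_{t_k})\big\|_2^2 .$$

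The hard part, and the place where the neural-network argument genuinely departs from the least squares one, is that $\Phi_m(X_{t_k};\theta_{k+1,m}(Q))$ is no longer an orthogonal projection, so the $1$-Lipschitz projection estimate used to handle the training-target mismatch is unavailable. I would replace it by the observation that, in the metric space $(\mathbb{L}^2,\|\cdot\|_2)$, the map $z\mapsto d(z,A_m)$ is $1$-Lipschitz, where $A_m:=\{\Phi_m(X_{t_k};\theta):\theta\in\Theta_m\}$. Writing $Y^m:=\mathbb{E}(V^m_{k+1}(X_{t_{k+1}},Q)\mid X_{t_k})$ and $Y:=\mathbb{E}(V_{k+1}(X_{t_{k+1}},Q)\mid X_{t_k})$, the Pythagoras identity \eqref{decompo_pytha} shows that $\Phi_m(X_{t_k};\theta_{k+1,m}(Q))$ realises $d(Y^m,A_m)$; combining this with the triangle inequality and the $1$-Lipschitz property gives
$$\big\|\Phi_m(X_{t_k};\theta_{k+1,m}(Q))-Y\big\|_2\ \le\ d(Y,A_m)+2\,\|Y^m-Y\|_2\ \le\ d(Y,A_m)+2\,\big\|V^m_{k+1}(X_{t_{k+1}},Q)-V_{k+1}(X_{t_{k+1}},Q)\big\|_2,$$
the last step using that conditional expectation is an $\mathbb{L}^2$-contraction. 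Taking $\sup_{Q\in\mathcal{T}_{k+1}}$, the second term vanishes as $m\to\infty$ by the induction hypothesis, so it remains to show $\sup_{Q\in\mathcal{T}_{k+1}}d(Y(Q),A_m)\to0$.

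For this residual term I would invoke Dini's lemma on the compact interval $\mathcal{T}_{k+1}$, checking its three hypotheses. Pointwise convergence $d(Y(Q),A_m)\to0$ is exactly the Universal Approximation Theorem in the $\mathbb{L}^2$ form of Remark \ref{th_approx_l2_rq}, since $Y(Q)\in\mathbb{L}^2(\sigma(X_{t_k}))$ and $\mathcal{NN}_\infty=\bigcup_m\mathcal{NN}_m$ is dense. Monotonicity is immediate: because $(\gamma_m)$ is increasing and a hidden unit may be switched off, $\mathcal{NN}_m\subseteq\mathcal{NN}_{m+1}$, hence $A_m\subseteq A_{m+1}$ and $d(\cdot,A_m)$ is non-increasing in $m$. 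Continuity of $Q\mapsto d(Y(Q),A_m)$ follows, without any Gram-determinant computation, by combining the $1$-Lipschitz property of $z\mapsto d(z,A_m)$ with the $\mathbb{L}^2$-continuity of $Q\mapsto Y(Q)$; the latter comes from the continuity and integrable domination of $Q\mapsto V_{k+1}(X_{t_{k+1}},Q)$ (Proposition \ref{cont_val_func} together with the $\mathbb{L}^{2q}$ statement of Remark \ref{cont_high_order}, using $\bm{\mathcal{H}_{3,2q}}$), dominated convergence, and once more the contraction of conditional expectation. Dini's lemma then upgrades pointwise to uniform convergence, closing the induction.

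Two points require care for well-posedness rather than for the convergence itself, and I would dispatch them first. Under $\bm{\mathcal{H}_1^{\mathcal{NN}}}$ and $\bm{\mathcal{H}_{3,2q}}$ every $\Phi_m(X_{t_k};\theta)$ lies in $\mathbb{L}^2$, and a short auxiliary backward induction shows the same for $V^m_k(X_{t_k},\cdot)$, so all conditional expectations and norms above are finite. The infimum defining $\theta_{k+1,m}(Q)$ is attained because $\Theta_m$ is compact and $\theta\mapsto\Phi_m(X_{t_k};\theta)$ is continuous by $\bm{\mathcal{H}_2^{\mathcal{NN}}}$; since the argmin set need not be a singleton, one fixes a measurable selection $Q\mapsto\theta_{k+1,m}(Q)$ so that $\Phi_m(X_{t_k};\theta_{k+1,m}(Q+q))$ is a bona fide random variable for random admissible $q$, which is what the bifurcation step of the first paragraph silently requires.
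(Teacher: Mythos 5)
Your proof is correct, and while it follows the paper's skeleton (backward induction, the bound $|\sup_i a_i - \sup_i b_i| \le \sup_i |a_i - b_i|$, the bifurcation property to pass the essential supremum through the $\mathbb{L}^2$-norm, and Dini's lemma on the compact set $\mathcal{T}_{k+1}$), the middle of your argument takes a genuinely different and more elementary route. The paper handles the mismatch between the trained parameter $\theta_{k+1,m}(Q)$ and the ideal target by a chain of Minkowski inequalities passing through $\mathbb{E}\big(V^m_{k+1}(X_{t_{k+1}},Q)\mid X_{t_k}\big)$ and the theoretical minimizer $\tilde\theta_{k+1,m}(Q)$, and then — crucially — establishes the continuity in $Q$ needed for Dini by proving joint continuity of the map $H_k(\Phi,Q) = \big\|\Phi(X_{t_k}) - \mathbb{E}(V_{k+1}(X_{t_{k+1}},Q)\mid X_{t_k})\big\|_2^2$ on $\mathcal{NN}_m \times \mathcal{T}_{k+1}$ and invoking Theorem \ref{inf_comp}, which requires the claim that $\mathcal{NN}_m$ is compact for the sup-norm (resting on compactness of $\Theta_m$, assumption $\bm{\mathcal{H}_2^{\mathcal{NN}}}$, and Heine's theorem). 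You replace both steps by the single observation that $z \mapsto d(z, A_m)$ is $1$-Lipschitz on $\mathbb{L}^2$, where $A_m = \{\Phi_m(X_{t_k};\theta) : \theta \in \Theta_m\}$: combined with the Pythagoras identity \eqref{decompo_pytha} (which shows $\Phi_m(X_{t_k};\theta_{k+1,m}(Q))$ realizes $d(Y^m, A_m)$ — the same fact the paper uses implicitly when it writes \q{since $\theta_{k+1,m}(Q)$ solves \eqref{optim_pb_nn_approx}}) and the conditional-expectation contraction, this yields your bound $d(Y,A_m) + 2\|Y^m - Y\|_2$ in one line, and the continuity of $Q \mapsto d(Y(Q), A_m)$ reduces to $\mathbb{L}^2$-continuity of $Q \mapsto Y(Q)$, which follows from Proposition \ref{cont_val_func}, the domination in Remark \ref{cont_high_order}, and dominated convergence. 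What your route buys: it sidesteps entirely the sup-norm compactness of $\mathcal{NN}_m$ and the joint-continuity verification for $H_k$ — the most delicate part of the paper's proof, and arguably fragile on an unbounded state space — so $\bm{\mathcal{H}_2^{\mathcal{NN}}}$ is needed only for well-posedness (attainment of the infimum over the compact $\Theta_m$), not for the convergence argument itself. What the paper's route buys: the explicit comparison through $\tilde\theta_{k+1,m}(Q)$ and the function $H_k$ sets up machinery reused in the Monte Carlo analysis, whereas your distance-to-set formulation is specific to this proposition. Your two well-posedness remarks (attainment of the infimum, and fixing a measurable selection $Q \mapsto \theta_{k+1,m}(Q)$ so that $\Phi_m(X_{t_k};\theta_{k+1,m}(Q+q))$ is a genuine random variable for random admissible $q$ in the bifurcation step) address points the paper leaves implicit, and are legitimate refinements rather than departures.
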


\begin{proof}
We proceed by a backward induction on $k$. For $k = n-1$, we have almost surely $V^m_{n-1}(X_{t_{n-1}}, Q) = V_{n-1}(X_{t_{n-1}}, Q)$ and therefore the proposition holds true. Let us suppose it holds for $k+1$. In light of the beginning of the proof of Proposition \ref{cvg_m_basis}, we have for all $Q \in \mathcal{T}_{k}$ using the inequality: $|\underset{i \in I}{\sup} \hspace{0.1cm} a_i - \underset{i \in I}{\sup} \hspace{0.1cm} b_i| \hspace{0.1cm} \le \hspace{0.1cm} \underset{i \in I}{\sup} \hspace{0.1cm} |a_i-b_i|$ and triangle inequality,
    \begin{align}
    \label{eq_help_nn_m}
        \big\|V^m_k(X_{t_k}, Q) &- V_k(X_{t_k}, Q) \big\|_2^2 \nonumber\\
        &\le \mathbb{E}\left(\esssup_{q \in \mathbb{A}_k(Q)} \Big|\Phi_{m}\big(X_{t_k}; \theta_{k+1, m}(Q+q)\big) - \mathbb{E}\big(V_{k+1}( X_{t_{k+1}}, Q + q) 
        \rvert X_{t_k} \big) \Big|^2 \right).
    \end{align}
Then we aim to apply the bifurcation property. For all $q \in \mathbb{A}_k(Q)$, consider,
$$A_k^m(Q, q) = \Big|\Phi_{m}\big(X_{t_k}; \theta_{k+1, m}(Q+q)\big) - \mathbb{E}\big(V_{k+1}( X_{t_{k+1}}, Q + q) \rvert X_{t_k} \big)\Big|^2.$$
For all $q_1, q_2 \in \mathbb{A}_k(Q)$ define
$$q_A^{*} = q_1 \cdot \mathrm{1}_{\{A_k^m(Q, q_1) \ge A_k^m(Q, q_2)\}} + q_2 \cdot \mathrm{1}_{\{A_k^m(Q, q_1) < A_k^m(Q, q_2)\}}.$$
Using the definition of the conditional expectation and since activation functions are continuous (assumption $\bm{\mathcal{H}_2^{\mathcal{NN}}}$), 
$A_k^m(Q, q)$ is $\sigma\left(X_{t_k} \right)$-measurable for all $q \in \mathbb{A}_k(Q)$. Moreover, $q_A^{*} \in \mathbb{A}_k(Q)$ and 
$A_k^m(Q, q_A^{*}) = \max\left(A_k^m(Q, q_1), A_k^m(Q, q_2) \right)$. Thus using the bifurcation property and taking the supremum  in \eqref{eq_help_nn_m} yields,
$$\underset{Q \in \mathcal{T}_{k}}{\sup} \big\|V^m_k(X_{t_k}, Q) - V_k(X_{t_k}, Q) \big\|_2^2 \le 
\underset{Q \in \mathcal{T}_{k+1}}{\sup} \Big\|\Phi_{m}\big(X_{t_k}; \theta_{k+1, m}(Q)\big) - \mathbb{E}\big(V_{k+1}( X_{t_{k+1}}, Q) \rvert X_{t_k} \big)\Big\|_2^2.$$
Using Minkowski inequality and the inequality: $(a+b)^2 \le 2(a^2+b^2)$ yields,
    \begin{align*}
        \underset{Q \in \mathcal{T}_{k}}{\sup} \big\|V^m_k(X_{t_k}, Q) - V_k(X_{t_k}, Q) \big\|_2^2 &\le 2 \underset{Q \in \mathcal{T}_{k+1}}{\sup} 
        \big\|\mathbb{E}\big(V_{k+1}^m( X_{t_{k+1}}, Q) \rvert X_{t_k} \big) - \mathbb{E}\big(V_{k+1}( X_{t_{k+1}}, Q) \rvert X_{t_k} \big)\big\|_2^2 \\
        &\quad +2\underset{Q \in \mathcal{T}_{k+1}}{\sup} \big\|\Phi_{m}\big(X_{t_k}; \theta_{k+1, m}(Q)\big) - \mathbb{E}\big(V_{k+1}^m( X_{t_{k+1}}, Q) \rvert X_{t_k} \big) \big\|_2^2.
    \end{align*}
By the induction assumption, the first term in the right hand side converges to $0$ as $m \to + \infty$. Let us consider the second term. Since $\theta_{k+1, m}(Q)$ solves \eqref{optim_pb_nn_approx}, we have
\begin{align*}
\underset{Q \in \mathcal{T}_{k+1}}{\sup} \big\|\Phi_{m}\big(X_{t_k}; \theta_{k+1, m}(Q)\big) &- \mathbb{E}\big(V_{k+1}^m( X_{t_{k+1}}, Q) \rvert X_{t_k} \big) \big\|_2^2 \\
&\le \underset{Q \in \mathcal{T}_{k+1}}{\sup} \big\|\Phi_{m}\big(X_{t_k}; \tilde{\theta}_{k+1, m}(Q)\big) - \mathbb{E}\big(V_{k+1}^m( X_{t_{k+1}}, Q) \rvert X_{t_k} \big) \big\|_2^2,
\end{align*}
where $\tilde{\theta}_{k+1, m}(Q)$ solves the \q{theoretical} optimization problem,
$$\underset{\theta \in \Theta_m}{\inf} \hspace{0.1cm} \Big\|V_{k+1}\big(X_{t_{k + 1}}, Q\big) - \Phi_{m}\big(X_{t_k}; \theta  \big) \Big\|_2$$
with $\Theta_m$ defined in \eqref{theta_nn}. Then it follows from Minskowki inequality that
\small
\begin{align*}
\underset{Q \in \mathcal{T}_{k+1}}{\sup} \big\|\Phi_{m}\big(X_{t_k}; \tilde{\theta}_{k+1, m}(Q)\big) - \mathbb{E}\big(V_{k+1}^m( X_{t_{k+1}}, Q) \rvert X_{t_k} \big) \big\|_2^2 &\le 
\underset{Q \in \mathcal{T}_{k+1}}{\sup} \big\|\mathbb{E}\big(V_{k+1}( X_{t_{k+1}}, Q) \rvert X_{t_k} \big) - \mathbb{E}\big(V_{k+1}^m( X_{t_{k+1}}, Q) \rvert X_{t_k} \big) \big\|_2^2 \\
&\quad +\underset{Q \in \mathcal{T}_{k+1}}{\sup} \big\|\Phi_{m}\big(X_{t_k}; \tilde{\theta}_{k+1, m}(Q)\big) - \mathbb{E}\big(V_{k+1}( X_{t_{k+1}}, Q) \rvert X_{t_k} \big) \big\|_2^2.
\end{align*}

\normalsize

\noindent
Once again, by the induction assumption, the first term in the right hand side converges to $0$ as $m \to +\infty$. Moreover, thanks to the universal approximation theorem, for all $Q \in \mathcal{T}_{k+1}$
\begin{equation}
\label{eq_help_2}
\big\|\Phi_{m}\big(X_{t_k}; \tilde{\theta}_{k+1, m}(Q)\big) - \mathbb{E}\big(V_{k+1}( X_{t_{k+1}}, Q) \rvert X_{t_k} \big) \big\|_2^2 \xrightarrow[m \to + \infty]{} 0.
\end{equation}
Besides notice that,
\begin{align}
\label{eq_help}
 \big\|\Phi_{m}\big(X_{t_k}; \tilde{\theta}_{k+1, m}(Q)\big) - \mathbb{E}\big(V_{k+1}( X_{t_{k+1}}, Q) \rvert X_{t_k} \big) \big\|_2^2 &= 
 \underset{\Phi \in \mathcal{NN}_m}{\inf} \big\|\Phi\big(X_{t_k}\big) - \mathbb{E}\big(V_{k+1}(X_{t_{k+1}}, Q) \rvert X_{t_k} \big) \big\|_2^2,
\end{align}
where $\mathcal{NN}_m$ is defined in \eqref{nn_param_m}. But since the sequence $\big(\Theta_m \big)_m$ is non-decreasing (in the sense that $\Theta_m \subseteq \Theta_{m+1}$), then $\big(\mathcal{NN}_m\big)_m$ is too. So that by the previous equality \eqref{eq_help},
$$\Big( \big\|\Phi_{m}\big(X_{t_k}; \tilde{\theta}_{k+1, m}(Q)\big) - \mathbb{E}\big(V_{k+1}( X_{t_{k+1}}, Q) \rvert X_{t_k} \big) \big\|_2^2 \Big)_{m \ge 2}$$
is a non-increasing sequence. Thus keeping in mind equation \eqref{eq_help}, if the function,
\begin{align*}
  H_k \colon \big(\mathcal{NN}_m, \hspace{0.1cm} |\cdot|_{\sup}\big) \times \big(\mathcal{T}_{k+1}, \hspace{0.1cm} |\cdot|\big) &\to \mathbb{R}\\
  (\Phi, Q) &\longmapsto \|L_k(\Phi, Q)\|_2^2 := \big\|\Phi\big(X_{t_k}\big) - \mathbb{E}\big(V_{k+1}(X_{t_{k+1}}, Q) \rvert X_{t_k} \big) \big\|_2^2
\end{align*}
is continuous, then thanks to Theorem \ref{inf_comp} (noticing that for all $m \ge 2$, $\mathcal{NN}_m$ is a compact set), the function
$$Q \mapsto \big\|\Phi_{m}\big(X_{t_k}; \tilde{\theta}_{k+1, m}(Q)\big) - \mathbb{E}\big(V_{k+1}( X_{t_{k+1}}, Q) \rvert X_{t_k} \big) \big\|_2^2$$
will be continuous on the compact set $\mathcal{T}_{k+1}$. Thus one may use Dini lemma and conclude that the pointwise convergence in \eqref{eq_help_2} is in fact uniform. 
Which will completes the proof. 

\vspace{0.2cm}
Note that we have already shown that $Q \mapsto \mathbb{E}\big(V_{k+1}(X_{t_{k+1}}, Q) \rvert X_{t_k} \big)$ is almost surely continuous under assumption $\bm{\mathcal{H}_{3, 2q}}$. Moreover using the classic inequality: $(a+b)^2 \le 2(a^2 + b^2)$ and then conditional Jensen inequality
\begin{align*}
\big|L_k(\Phi, Q)|^2 &\le 2 \cdot \big|\Phi\big(X_{t_k}\big)\big|^2 + 2 \cdot \mathbb{E}\big(V_{k+1}(X_{t_{k+1}}, Q)^2 \rvert X_{t_k} \big)\\
&\le 2 \cdot \big|\Phi\big(X_{t_k}\big)\big|^2 + 2 \cdot \mathbb{E}\big(G_{k+1}^2 \rvert X_{t_k} \big) \in \mathbb{L}^1_{\mathbb{R}}\big(\mathbb{P}\big),
\end{align*}

\noindent
where the existence of $G_{k+1} \in \mathbb{L}^2_{\mathbb{R}}(\mathbb{P})$ (independent of $Q$) follows from Remark \ref{cont_high_order} and is implied by assumption
 $\bm{\mathcal{H}_{3, 2q}}$. Note that the integrability of $\big|\Phi\big(X_{t_k}\big)\big|^2$ follows from assumptions $\bm{\mathcal{H}_1^{\mathcal{NN}}}$ and 
 $\bm{\mathcal{H}_{3, 2q}}$. This implies that $\|L_k(\Phi, \cdot)\|_2^2$ is continuous.

\vspace{0.2cm}

Besides, for some sequence $(\Phi_n)_n$ of $\mathcal{NN}_m$ such that $\Phi_n \xrightarrow[n \to + \infty]{|\cdot|_{\sup}} \Phi$, it follows from the Lebesgue's dominated 
convergence theorem (enabled by  assumptions $\bm{\mathcal{H}_1^{\mathcal{NN}}}$ and $\bm{\mathcal{H}_{3, 2q}}$) that 
$\|L_k(\Phi_n, Q)\|_2^2 \xrightarrow[n \to + \infty]{} \|L_k(\Phi, Q)\|_2^2$. Which shows that $\|L_k(\cdot, Q)\|_2^2$ is continuous. 
Therefore the function $H_k$ is continuous. And as already mentioned this completes the proof.
\end{proof}

\begin{remark}[Assumptions $\bm{\mathcal{H}_1^{\mathcal{NN}}}$ and $\bm{\mathcal{H}_2^{\mathcal{NN}}}$]
In the previous proposition, we made the assumption that the neural networks are continuous and with polynomial growth. This assumption is clearly satisfied when using classic activation functions such as the ReLU function $x \in \mathbb{R} \mapsto \max(x,0)$ and Sigmoïd function $x \in \mathbb{R} \mapsto 1 / (1 + e^{-x})$.
\end{remark}

\subsection{Convergence of Monte Carlo approximation}
From now on, we assume a fixed positive integer $m$ and our focus is on the convergence of the value function that arises from the second approximation 
\eqref{estim_second_approx} or \eqref{dp_approx_nn_mc}. Unlike the preceding section and for technical convenience, we restrict our analysis of the neural network 
approximation to the \textit{bang-bang} setting. However, the linear regression will still be examined in a general context.

Herefater in this section and in the rest of the paper, we slightly modify the notation of norms and expectations to make clear their dependence on the Monte Carlo sample size $N$. 
Consider an i.i.d. sample $\big(X^{(i)}\big)_{1 \le i \le N}$ of size $N$ of a random vector $X$. For any $N \ge 1$, we denote by $\|\cdot\|_{r, N}$ the $\mathbb{L}^r\big(\mathbb{P}_N\big)$-norm where $\mathbb{P}_N$ is empirical probility measure associated with the Monte
Carlo sample. Likewise, we denote by $\mathbb{E}_N$ the expectation with respect to the empirical measure $\mathbb{P}_N$. In the same spirit, we denote by
$\mathbb{P}_{\otimes N}$ the product measure of $(X^{(i)}, X)$, and by $\mathbb{E}_{\otimes N}$ the expectation with respect to the latter product measure.

\subsubsection{Linear regression regression}
We establish a convergence result under the following Hilbert assumption.

\vspace{0.4cm}
\noindent
$\bm{\mathcal{H}_5^{LS}}$: For all $k=0,\ldots,n-1$ the sequence $\left(e_i\left( X_{t_k} \right)\right)_{i \ge 1}$ is a Hilbert basis of $\mathbb{L}^2\big(\sigma(X_{t_k})  \big)$.

\vspace{0.3cm}

\noindent
It is worth noting that this assumption is a special case of assumptions $\bm{\mathcal{H}_1^{LS}}$ and $\bm{\mathcal{H}_2^{LS}}$ with an orthonormality assumption on $e^m\big(X_{t_k}\big)$. Furthermore, in the field of mathematical finance, the underlying asset's diffusion is often assumed to have a Gaussian structure. However, it is well known that the normalized Hermite polynomials $\big\{\frac{H_k(x)}{\sqrt{k!}} , k \ge 0 \big\}$ serve as a Hilbert basis for $\mathbb{L}^2(\mathbb{R}, \mu)$, the space of square-integrable functions with respect to the Gaussian measure $\mu$. The Hermite polynomials $\big\{ H_k(x), k \ge 0\big\}$ are defined as follows:
$$H_k(x) = (-1)^{k} e^{x^2} \frac{d^k}{dx^k} \big[ e^{-x^2} \big],$$

\noindent
or recursively by
$$H_{k+1}(x) = 2x \cdot H_k(x) - 2k \cdot H_{k-1}(x) \hspace{0.4cm} \text{with} \hspace{0.4cm} H_0(x) = 1, \hspace{0.2cm} H_1(x) = 2x.$$

\noindent
For a multidimensional setting, Hermite polynomials are obtained as the product of one-dimensional Hermite polynomials. Finally, note that assumptions $\bm{\mathcal{H}_5^{LS}}$ entail that $A_m^k = A_{m, N}^k = I_m$.

\vspace{0.3cm}
The main result of this section aims at proving that the second approximation $V_k^{m, N}$ of the swing value function converges towards the first approximation $V_k^m$ as the Monte Carlo sample size $N$ increases to $+\infty$ and with a rate of convergence of order $\mathcal{O}\big(\frac{1}{\sqrt{N}}\big)$. To achieve this we rely on the following lemma which concern general Monte Carlo rate of convergence.

\begin{lemma}[Monte Carlo $\mathbb{L}^r\big(\mathbb{P}\big)$-rate of convergence]
\label{lem_zyg}
Consider $X_1, \ldots, X_N$ independent and identically distributed random variables with order $p$ ($p \ge 2$) finite moment (with $\mu = \mathbb{E}(X_1)$). 
Then, there exists a positive constant $B_p$ (only depending on the order $p$) such that
$$\Big\|\frac{1}{N} \sum_{i = 1}^{N} X_i - \mu \Big\|_{p, N}\le B_p \frac{2^{\frac{p-1}{p}} \Big(\mathbb{E}(|X|^p) + |\mu|^p \Big)^{\frac{1}{p}}}{\sqrt{N}}.$$
\end{lemma}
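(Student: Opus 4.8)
The plan is to recognize this estimate as a direct consequence of the Marcinkiewicz--Zygmund inequality, combined with two elementary convexity bounds. First I would center the summands by setting $Y_i := X_i - \mu$, so that $Y_1, \ldots, Y_N$ are i.i.d., centered, and still have a finite moment of order $p$, and observe that $\frac{1}{N}\sum_{i=1}^N X_i - \mu = \frac{1}{N}\sum_{i=1}^N Y_i$. The whole statement then reduces to controlling the $\mathbb{L}^p$-norm of a normalized sum of independent centered random variables.

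The core step invokes the (upper) Marcinkiewicz--Zygmund inequality: for $p \ge 2$ there is a constant $C_p$, depending only on $p$, such that
\[
\mathbb{E}\Big|\sum_{i=1}^N Y_i\Big|^p \le C_p\, \mathbb{E}\Big[\Big(\sum_{i=1}^N Y_i^2\Big)^{p/2}\Big].
\]
Since $p/2 \ge 1$, the map $x \mapsto x^{p/2}$ is convex, so Jensen's inequality applied to the uniform average gives $\big(\frac1N\sum_i Y_i^2\big)^{p/2} \le \frac1N \sum_i |Y_i|^p$; multiplying by $N^{p/2}$ and taking expectations yields $\mathbb{E}\big[(\sum_i Y_i^2)^{p/2}\big] \le N^{p/2}\,\mathbb{E}|Y_1|^p$, where I use that the $Y_i$ are identically distributed. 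Combining these, dividing by $N^p$, and extracting the $p$-th root produces
\[
\Big|\Big|\frac{1}{N}\sum_{i=1}^N Y_i\Big|\Big|_p \le \frac{C_p^{1/p}}{\sqrt{N}}\,\big(\mathbb{E}|Y_1|^p\big)^{1/p}.
\]

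It remains to re-express $\mathbb{E}|Y_1|^p = \mathbb{E}|X_1 - \mu|^p$ in terms of $\mathbb{E}(|X_1|^p)$ and $|\mu|^p$. Here I would apply the convexity bound $|a+b|^p \le 2^{p-1}(|a|^p + |b|^p)$ with $a = X_1$ and $b = -\mu$, which gives $\mathbb{E}|Y_1|^p \le 2^{p-1}\big(\mathbb{E}(|X_1|^p) + |\mu|^p\big)$, hence $(\mathbb{E}|Y_1|^p)^{1/p} \le 2^{(p-1)/p}\big(\mathbb{E}(|X_1|^p)+|\mu|^p\big)^{1/p}$. Setting $B_p := C_p^{1/p}$ then yields exactly the claimed bound, the factor $2^{(p-1)/p}$ being furnished by this last convexity step. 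The only nontrivial ingredient is the Marcinkiewicz--Zygmund inequality itself, which I would simply cite as a classical result (this is what the label \texttt{lem\_zyg} alludes to); everything else is Jensen's inequality, the i.i.d.\ hypothesis, and a single power-mean estimate, so no genuinely hard work remains. I would also note that for $p = 2$ one does not even need Marcinkiewicz--Zygmund, since the bound follows from the exact variance identity $\mathrm{Var}\big(\frac1N\sum_i Y_i\big) = \frac1N \mathrm{Var}(Y_1)$.
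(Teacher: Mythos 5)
Your proof is correct and follows essentially the same route as the paper: the Marcinkiewicz--Zygmund inequality, Jensen's inequality (convexity of $x \mapsto x^{p/2}$) applied to the uniform average, and the bound $|a+b|^p \le 2^{p-1}(|a|^p + |b|^p)$ to pass from $\mathbb{E}|X_1-\mu|^p$ to $\mathbb{E}(|X_1|^p) + |\mu|^p$. The only differences are cosmetic---you center the variables explicitly and take $p$-th roots at the end (correctly identifying $B_p = C_p^{1/p}$), whereas the paper works with $p$-th powers throughout---and your version is in fact slightly cleaner in writing $|Y_i|^p$ where the paper omits absolute values.
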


\begin{proof}
It follows from Marcinkiewicz–Zygmund inequality that there exists a positive constant $A_p$ (only depends on $p$) such that
\begin{align*}
\Big\|\frac{1}{N} \sum_{i = 1}^{N} X_i - \mu \Big\|_{p, N}^p = \mathbb{E}_N\left(\Big(\sum_{i = 1}^{N} \frac{X_i - \mu}{N} \Big)^{p}\right)
\le A_p \cdot \mathbb{E}_N\left(\Big(\frac{1}{N^2} \sum_{i = 1}^{N} (X_i - \mu)^2 \Big)^{p/2}\right)\\
= \frac{A_p}{N^{\frac{p}{2}}} \cdot  \mathbb{E}_N\left(\Big(\frac{1}{N} \sum_{i = 1}^{N} (X_i - \mu)^2 \Big)^{p/2}\right).
\end{align*}
Using the convexity of the function $x \in \mathbb{R}_{+} \mapsto x^{p/2}$ yields,
$$\Big(\frac{1}{N} \sum_{i = 1}^{N} (X_i - \mu)^2 \Big)^{p/2} \le \frac{1}{N} \sum_{i = 1}^{N} (X_i - \mu)^p.$$
Thus taking the expectation and using the inequality, $(a+b)^p \le 2^{p-1}(a^p + b^p)$ yields,
$$\Big\|\frac{1}{N} \sum_{i = 1}^{N} X_i - \mu \Big\|_{p,N}^p \le \frac{A_p}{N^{\frac{p}{2}}} \cdot \mathbb{E}\Big((X - \mu)^p \Big) \le A_p \cdot \frac{2^{p-1} \Big(\mathbb{E}(|X|^p) + |\mu|^p\Big)}{N^{\frac{p}{2}}}.$$
This completes the proof.
\end{proof}

In the following proposition, we show that using Hilbert basis as a regression basis allows to achieve a convergence with a rate of order $\mathcal{O}(\frac{1}{\sqrt{N}})$.

\begin{Proposition}
\label{cvg_hilbert_bis}
Under assumptions $\bm{\mathcal{H}_{3, \infty}}$, $\bm{\mathcal{H}_{4, \infty}^{LS}}$ and $\bm{\mathcal{H}_5^{LS}}$, for all $k \in \{0,\ldots,n-1\}$ and for any $s > 1$, we have
$$\underset{Q \hspace{0.1cm} \in \hspace{0.1cm} \mathcal{T}_{k}}{\sup} \hspace{0.2cm} \Big\| V^{m, N}_k\left(X_{t_k}, Q \right) - V^{m}_k\left(X_{t_k}, Q \right) \Big\|_{s,\otimes N} = \mathcal{O}\left(\frac{1}{\sqrt{N}}\right) \hspace{0.5cm} \text{as} \hspace{0.2cm} N \to +\infty.$$
\end{Proposition}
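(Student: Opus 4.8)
The plan is a backward induction on $k$, following the architecture of the proof of Proposition \ref{cvg_m_basis}. The base case $k = n-1$ is immediate: by \eqref{estim_orth_proj_dp} and \eqref{estim_second_approx} we have $V_{n-1}^{m,N} = V_{n-1} = V_{n-1}^m$, so the difference is identically zero. For the inductive step, assume the stated rate holds at level $k+1$. Since $V_k^{m,N}$ and $V_k^m$ share the same running payoff $\psi(q, X_{t_k})$ and differ only through the coefficients fed into $\Phi_m(X_{t_k};\cdot) = \langle\,\cdot\,, e^m(X_{t_k})\rangle$, the inequality $|\sup_i a_i - \sup_i b_i| \le \sup_i|a_i - b_i|$ gives, almost surely,
$$\big|V_k^{m,N}(X_{t_k},Q) - V_k^m(X_{t_k},Q)\big| \le \esssup_{q \in Adm(t_k,Q)} \big|\langle \theta_{k+1,m,N}(Q+q) - \theta_{k+1,m}(Q+q),\, e^m(X_{t_k})\rangle\big|.$$
Raising to the power $s$, taking expectation, and exchanging the $\esssup_q$ with the expectation via the bifurcation property exactly as in Proposition \ref{cvg_m_basis} (and using $Q + q \in \mathcal{T}_{k+1}$), the claim reduces to proving
$$\sup_{Q \in \mathcal{T}_{k+1}} \big\|\langle \theta_{k+1,m,N}(Q) - \theta_{k+1,m}(Q),\, e^m(X_{t_k})\rangle\big\|_s = \mathcal{O}(N^{-1/2}).$$

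Next I would peel off the factor $e^m(X_{t_k})$. Applying the Cauchy--Schwarz inequality in $\mathbb{R}^m$ followed by Hölder's inequality with a pair of conjugate exponents separates the sample-dependent coefficient error from $e^m$, whose every moment is finite by $\bm{\mathcal{H}_{4,\infty}^{LS}}$; the problem becomes a bound of the form $\sup_Q \|\theta_{k+1,m,N}(Q) - \theta_{k+1,m}(Q)\|_r = \mathcal{O}(N^{-1/2})$ in a suitable $\mathbb{L}^r$ norm. Here $\bm{\mathcal{H}_5^{LS}}$ is decisive: with $A_m^k = A_{m,N}^k = I_m$ the coefficients are the plain projection coefficients
$$\theta_{k+1,m,N}(Q) = \frac{1}{N}\sum_{p=1}^N V_{k+1}^{m,N}(X_{t_{k+1}}^{[p]},Q)\,e^m(X_{t_k}^{[p]}), \qquad \theta_{k+1,m}(Q) = \mathbb{E}\big(V_{k+1}^m(X_{t_{k+1}},Q)\,e^m(X_{t_k})\big),$$
so I would split their difference as the \emph{propagation} term $\frac{1}{N}\sum_p (V_{k+1}^{m,N}-V_{k+1}^m)(X_{t_{k+1}}^{[p]},Q)\,e^m(X_{t_k}^{[p]})$ plus the \emph{pure Monte Carlo} term $\frac{1}{N}\sum_p V_{k+1}^m(X_{t_{k+1}}^{[p]},Q)\,e^m(X_{t_k}^{[p]}) - \mathbb{E}(V_{k+1}^m(X_{t_{k+1}},Q)\,e^m(X_{t_k}))$.

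For the Monte Carlo term I would invoke Lemma \ref{lem_zyg}: by $\bm{\mathcal{H}_{3,\infty}}$, $\bm{\mathcal{H}_{4,\infty}^{LS}}$ and the all-order integrable domination of $V_{k+1}^m$ furnished by Remark \ref{cont_high_order}, the i.i.d.\ summands have finite moments of every order, bounded uniformly in $Q \in \mathcal{T}_{k+1}$ (uniformity coming from the $Q$-independent dominating variable $G_{k+1}$), whence this term is $\mathcal{O}(N^{-1/2})$ uniformly in $Q$. For the propagation term, the triangle inequality applied to the average followed by Hölder bounds each summand by $\|(V_{k+1}^{m,N}-V_{k+1}^m)(X_{t_{k+1}}^{[p]},Q)\|_{s_1}\,\|e^m(X_{t_k}^{[p]})\|_{s_2}$; the induction hypothesis, available at every exponent $s_1 > 1$, yields $\mathcal{O}(N^{-1/2})$ for the first factor while the second is bounded, and averaging $N$ such terms preserves the rate.

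The main obstacle is the control of the propagation term, because $V_{k+1}^{m,N}$ is evaluated precisely at the sample points $X_{t_{k+1}}^{[p]}$ used to build its coefficients, so value function and evaluation point are statistically dependent and the induction hypothesis cannot be transferred naively. The way around this is to note that the coefficient differences $\theta_{\cdot,m,N}(Q) - \theta_{\cdot,m}(Q)$ depend on the Monte Carlo sample alone and not on any evaluation point; carrying the recursion on these coefficient differences and separating them from the factors $e^m(X_{t_{k+1}}^{[p]})$ by Hölder confines the in-sample dependence to the harmless moment factor of $e^m$. This is exactly where the strength of $\bm{\mathcal{H}_{3,\infty}}$ and $\bm{\mathcal{H}_{4,\infty}^{LS}}$ is required: each backward step and each Hölder split consumes integrability, so only the existence of moments at every order lets one choose the conjugate exponents so that the induction hypothesis can always be matched against the moment bounds for $e^m$, uniformly in $Q$.
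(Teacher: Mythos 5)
Your proposal is correct and shares the paper's overall skeleton --- backward induction with the $\mathbb{L}^s$-estimate carried at every exponent simultaneously, Cauchy--Schwarz to peel off $e^m(X_{t_k})$ followed by H\"older, the bifurcation property to exchange the essential supremum over admissible controls with the expectation, the split of $\theta_{k+1,m,N}(Q)-\theta_{k+1,m}(Q)$ into a propagation term and a pure Monte Carlo term, and Lemma \ref{lem_zyg} for the latter --- but it genuinely diverges on the one delicate point, and in an instructive way. The paper's proof keeps the induction at the level of the value functions: it bounds the propagation term by $\bigl\|\,|e^m(X_{t_k})|\cdot|V^{m,N}_{k+1}(X_{t_{k+1}},Q)-V^m_{k+1}(X_{t_{k+1}},Q)|\,\bigr\|_{sc}$, invoking the fact that $\bigl(X^{[p]}_{t_k},X^{[p]}_{t_{k+1}}\bigr)$ has the same law as $\bigl(X_{t_k},X_{t_{k+1}}\bigr)$, and then applies H\"older with exponents $u,v$ and the induction hypothesis at order $scv$; this identical-distribution step silently identifies in-sample and out-of-sample evaluation of the sample-dependent function $V^{m,N}_{k+1}$, which is exactly the dependence issue you flag. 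Your alternative --- carrying the recursion on $\sup_Q\|\theta_{k+1,m,N}(Q)-\theta_{k+1,m}(Q)\|_r$, which is sample-measurable and free of any evaluation point, so that the sample points enter only through moment factors of $e^m$ --- sidesteps that identification cleanly, and is in fact the same organizational pattern the paper itself adopts later for the deviation inequalities in Proposition \ref{tight_coeff_reg}; the price is a slightly heavier bookkeeping of conjugate exponents down the recursion, which your standing assumptions $\bm{\mathcal{H}_{3,\infty}}$ and $\bm{\mathcal{H}_{4,\infty}^{LS}}$ absorb. One small repair: for the uniformity in $Q$ of the Monte Carlo term you appeal to a $Q$-independent dominating variable $G_{k+1}$, but Remark \ref{cont_high_order} furnishes that domination for the true value function $V_{k+1}$, not for $V^m_{k+1}$; the paper instead uses the continuity of $Q\mapsto\mathbb{E}\bigl(|e_j(X_{t_k})V^m_{k+1}(X_{t_{k+1}},Q)|^{sc}\bigr)$ on the compact set $\mathcal{T}_{k+1}$, which gives the finite uniform bound directly (alternatively, a $Q$-free dominating variable for $V^m_{k+1}$ can be manufactured from the boundedness of $Q\mapsto\theta_{k+2,m}(Q)$ on the compact $\mathcal{T}_{k+2}$, so your route is reparable).
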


\begin{proof}
We prove this proposition using a backward induction on $k$. Since $V^{m, N}_{n-1}\left( X_{t_{n-1}}, \cdot \right) = V^{m}_{n-1}\left(X_{t_{n-1}}, \cdot \right)$ on $\mathcal{T}_{n-1}$, then the proposition holds for $k = n-1$. Assume now that the proposition holds for $k+1$. Using the inequality, $|\underset{i \in I}{\sup} \hspace{0.1cm} a_i - \underset{i \in I}{\sup} \hspace{0.1cm} b_i| \hspace{0.1cm} \le \hspace{0.1cm} \underset{i \in I}{\sup} \hspace{0.1cm} |a_i-b_i|$ and then Cauchy-Schwartz' one, we get,
\begin{align*}
\big|V^{m, N}_k\left(X_{t_{k}}, Q \right) - V^{m}_k(X_{t_{k}}, Q) \big| &\le  \esssup_{q \in \mathbb{A}_k(Q)} \big|\langle \theta_{k+1,m,N}(Q+q) - \theta_{k+1,m}(Q+q), e^m(X_{t_k})\rangle  \big|\\
&\le \big|e^m(X_{t_k}) \big| \cdot \esssup_{q \in \mathbb{A}_k(Q)} \hspace{0.1cm} \big|\theta_{k+1,m,N}(Q+q) - \theta_{k+1,m}(Q+q) \big|\\
&\le \big|e^m(X_{t_k}) \big| \cdot \esssup_{q \hspace{0.1cm} \in \hspace{0.1cm} \mathcal{U}_{k}(Q)} \hspace{0.1cm} \big|\theta_{k+1,m,N}(Q + q) - \theta_{k+1,m}(Q + q) \big|,
\end{align*}
where $\mathcal{U}_{k}(Q)$ is the set of all $\mathcal{F}_{{k+1}}^X$-measurable random variables lying within $\mathcal{I}_{k+1}\big(Q\big)$ (see \eqref{intervall_adm}). 
The last inequality is due to the fact that $\mathcal{F}_{k}^X \subset \mathcal{F}_{{k+1}}^X$. Then for some constants $b, c > 1$ such that $\frac{1}{b} + \frac{1}{c} = 1$, it follows from Hölder inequality that,
\begin{align}
\label{interm_res_cvg_mc}
\Big\|V^{m, N}_k(X_{t_{k}}, Q) - V^{m}_k(X_{t_{k}}, Q) \Big\|_{s,\otimes N} &\le \Big\| |e^m(X_{t_k}) |  \Big\|_{sb} \cdot 
\Big\| \esssup_{q \hspace{0.1cm} \in \hspace{0.1cm} \mathcal{U}_{k}(Q)} \hspace{0.1cm} \big|\theta_{k+1,m,N}(Q + q) - \theta_{k+1,m}(Q + q) \big|  \Big\|_{sc, \otimes N}.
\end{align}
To interchange the expectation and the essential supremum, we rely on the bifurcation property. Let $q_1, q_2 \in \mathcal{U}_{k}(Q)$ and denote by
$$q^{*} = q_1 \cdot \mathrm{1}_{\{B_k(Q, q_1) \ge B_k(Q, q_2)\}} + q_2 \cdot  \mathrm{1}_{\{B_k(Q, q_1) < B_k(Q, q_2)\}}$$
where $B_k(Q, q_i) = \big|\theta_{k+1,m,N}(Q+ q_i) - \theta_{k+1,m}(Q+ q_i)\big|^{sc}$ for $i \in \{1,2\}$. One can easily check that for all $i \in \{1,2\}$, $B_k(Q, q_i)$ is $\mathcal{F}_{t_{k+1}}^X$-measurable so that $q^{*} \in \mathcal{U}_{k}(Q)$. We also have $ B_k(Q, q^{*}) = \max\left(B_k(Q, q_1), B_k(Q, q_2) \right)$. Thus one may use the bifurcation property in \eqref{interm_res_cvg_mc}, we get,
\begin{align}
\label{my_eq_help2}
\Big\|V^{m, N}_k(X_{t_{k}}, Q) - V^{m}_k(X_{t_{k}}, Q) \Big|\Big|_{s,\otimes N} &\le \Big\| |e^m(X_{t_k})|  \Big|\Big|_{sb} \cdot \underset{q \in \mathcal{U}_{k}(Q)}{\sup} 
\hspace{0.1cm} \Big\|\theta_{k+1,m,N}(Q + q) - \theta_{k+1,m}(Q + q) |  \Big\|_{sc, \otimes N} \nonumber\\
&\le \Big\| |e^m(X_{t_k}) |  \Big\|_{sb} \cdot  \underset{Q \in \mathcal{T}_{k+1}}{\sup} \hspace{0.1cm} \Big|\Big||\theta_{k+1,m,N}(Q) - \theta_{k+1,m}(Q) |  \Big\|_{sc, \otimes N}.
\end{align}
But for any $Q \in \mathcal{T}_{k+1}$, it follows from Minkowski's inequality that,
\begin{align*}
\Big\||\theta_{k+1,m,N}(Q) - \theta_{k+1,m}(Q) |  \Big\|_{sc, \otimes  N} &= \Bigg\|  \Big| \frac{1}{N}\sum_{p = 1}^{N} e^m(X_{t_k}^{[p]}) \cdot V^{m, N}_{k+1}(X_{t_{k+1}}^{[p]}, Q) -  
\mathbb{E}\big(e^m(X_{t_k})V^{m}_{k+1}(X_{t_{k+1}}, Q)\big) \Big|  \Bigg\|_{sc, \otimes N}\\
&\le \Bigg\|\Big|\frac{1}{N}\sum_{p = 1}^{N} e^m(X_{t_k}^{[p]}) \cdot\left(V^{m, N}_{k+1}(X_{t_{k+1}}^{[p]}, Q) - V^{m}_{k+1}(X_{t_{k+1}}^{[p]}, Q)\right) \Big|  \Bigg\|_{sc, \otimes  N} \\
&\quad + \Bigg\|\Big|\frac{1}{N}\sum_{p = 1}^{N} e^m(X_{t_k}^{[p]})V^{m}_{k+1}( X_{t_{k+1}}^{[p]}, Q) - \mathbb{E}\left(e^m(X_{t_k})V^{m}_{k+1}(X_{t_{k+1}}, Q)\right)  \Big|\Bigg\|_{sc, \otimes N}\\
&\le \Bigg\|| e^m(X_{t_k})| \cdot |V^{m, N}_{k+1}(X_{t_{k+1}}, Q) - V^{m}_{k+1}(X_{t_{k+1}}, Q) |  \Bigg\|_{sc, \otimes N} \\
&\quad + \Bigg\|\Big|\frac{1}{N}\sum_{p = 1}^{N} e^m(X_{t_k}^{[p]})V^{m}_{k+1}( X_{t_{k+1}}^{[p]}, Q) - \mathbb{E}\left(e^m(X_{t_k})V^{m}_{k+1}(X_{t_{k+1}}, Q)\right)\Big|\Bigg\|_{sc, \otimes N},
\end{align*}
where the last inequality comes from the fact that, for all $p \ge 1$, $\big(X_{t_k}^{[p]},X_{t_{k+1}}^{[p]}  \big)$ has the same distribution with $\big(X_{t_k},X_{t_{k+1}} \big)$. Therefore, for some constants $u, v > 1$ such that $\frac{1}{u} + \frac{1}{v} = 1$, it follows from Hölder inequality,
\begin{align*}
\Big\| |\theta_{k+1,m,N}(Q) - \theta_{k+1,m}(Q) |  \Big|\Big|_{sc, \otimes N} &\le \Big\||e^m(X_{t_k})| \Big\|_{scu} \cdot 
\Big\|V^{m, N}_{k+1}(X_{t_{k+1}}, Q) - V^{m}_{k+1}(X_{t_{k+1}}, Q)  \Big\|_{scv, \otimes N} \\
&\quad + \Bigg\|\Big|\frac{1}{N}\sum_{p = 1}^{N} e^m(X_{t_k}^{[p]})V^{m}_{k+1}( X_{t_{k+1}}^{[p]}, Q) - 
\mathbb{E}\left(e^m(X_{t_k})V^{m}_{k+1}(X_{t_{k+1}}, Q)\right) \Big|  \Bigg\|_{sc, \otimes N}.
\end{align*}
Taking the supremum in the previous inequality and plugging it into equation \eqref{my_eq_help2} yields,
\begin{align*}
& \underset{Q \in \mathcal{T}_{k}}{\sup} \hspace{0.1cm} \Big\|V^{m, N}_k(X_{t_{k}}, Q ) - V^{m}_k(X_{t_{k}}, Q) \Big\|_{s,\otimes N}\\
&\le  \Big\| |e^m(X_{t_k}) |  \Big\|_{sb} \cdot \Big\|| e^m(X_{t_k})| \Big\|_{scu} \cdot \underset{Q \in \mathcal{T}_{k+1}}{\sup} \hspace{0.1cm}  
\Big\|V^{m, N}_{k+1}(X_{t_{k+1}}, Q) - V^{m}_{k+1}(X_{t_{k+1}}, Q) \Big\|_{scv, \otimes N}\\
&\quad + \Big\| |e^m(X_{t_k}) |  \Big\|_{sb} \cdot \underset{Q \in \mathcal{T}_{k+1}}{\sup} \hspace{0.1cm} 
\Bigg\|\Big|\frac{1}{N}\sum_{p = 1}^{N} e^m(X_{t_k}^{[p]})V^{m}_{k+1}(X_{t_{k+1}}^{[p]}, Q) - \mathbb{E}\left(e^m(X_{t_k})V^{m}_{k+1}(X_{t_{k+1}}, Q)\right)  \Big|  \Bigg\|_{sc, \otimes N}.
\end{align*}
Under assumption $\bm{\mathcal{H}_{4, r}^{LS}}$ and using induction assumption, the first term in the sum of the right hand side converges to 0 as $N \to +\infty$ with a rate of order $\mathcal{O}(\frac{1}{\sqrt{N}})$. Once again, by assumption $\bm{\mathcal{H}_{4, \infty}^{LS}}$, it remains to prove that it is also the case for the second term. But we have,
\begin{align*}
C_N(Q) &:= \Bigg\|\Big|\frac{1}{N}\sum_{p = 1}^{N} e^m(X_{t_k}^{[p]})V^{m}_{k+1}( X_{t_{k+1}}^{[p]}, Q) - 
\mathbb{E}\left(e^m(X_{t_k})V^{m}_{k+1}(X_{t_{k+1}}, Q)\right)  \Big|  \Bigg\|_{sc, \otimes N}\\
&=  \Bigg\|\sum_{j = 1}^{m} \left(\frac{1}{N}\sum_{p = 1}^{N} e_j(X_{t_k}^{[p]})V^{m}_{k+1}(X_{t_{k+1}}^{[p]}, Q) - 
\mathbb{E}\left(e_j(X_{t_k})V^{m}_{k+1}(X_{t_{k+1}}, Q)\right)  \right)^2  \Bigg\|_{\frac{sc}{2}, \otimes N}^{\frac{1}{2}}\\
&\le \sum_{j = 1}^{m} \Big\|\frac{1}{N}\sum_{p = 1}^{N} e_j(X_{t_k}^{[p]})V^{m}_{k+1}(X_{t_{k+1}}^{[p]}, Q) - \mathbb{E}\big(e_j(X_{t_k})V^{m}_{k+1}(X_{t_{k+1}}, Q)\big) \Big\|_{sc, \otimes N}\\
&\le \frac{A_{ac}}{\sqrt{N}} \cdot  \sum_{j = 1}^{m} \Big(\mathbb{E}\big(|e_j(X_{t_k})V^{m}_{k+1}(X_{t_{k+1}}, Q)|^{sc}\big) + 
\Big|\mathbb{E}\big(e_j(X_{t_k})V^{m}_{k+1}(X_{t_{k+1}}, Q)\big)\Big|^{sc} \Big),
\end{align*}
where the second-last inequality comes from Minkowski inequality and the inequality, $\sqrt{x+y} \le \sqrt{x} + \sqrt{y}$ for all $x, y \ge 0$. The last inequality is obtained using Lemma \ref{lem_zyg} (with a positive constant $A_{ac}$ only depends on the order $a$ and $c$). But using the continuity (which holds as noticed in Remark \ref{cont_high_order}) of both functions $Q \mapsto \mathbb{E}\Big(\big|e_j(X_{t_k})V^{m}_{k+1}(X_{t_{k+1}}, Q)\big|^{sc}\Big)$ and $Q \mapsto \big|\mathbb{E}(e_j(X_{t_k})V^{m}_{k+1}( X_{t_{k+1}}, Q))\big|^{sc}$ on the compact set $\mathcal{T}_{k+1}$ one may deduce that, as $N \to +\infty$,
$$\underset{Q \in \mathcal{T}_{k+1}}{\sup} \hspace{0.1cm} C_N(Q) = \mathcal{O}\left(\frac{1}{\sqrt{N}} \right).$$
This completes the proof.
\end{proof}

\begin{remark}[Almost surely convergence]
It is worth noting that it is difficult to obtain an almost surely convergence result without further assumptions (for example boundedness assumption) of the regression functions. The preceding proposition is widely based on Hölder inequality emphasizing on why we have chosen the $\mathbb{L}^s\big(\mathbb{P}\big)$-norm. However, in the neural network analysis that follows, we prove an almost surely convergence result.
\end{remark}

\subsubsection{Neural network approximation}
We consider the discrete setting with integer volume constraints with a state of attainable cumulative consumptions given by \eqref{range_cum_vol_discr}. Results in this section will be mainly based on Lemmas \ref{cvg_min_lem} and \ref{uslln} stated below. Let $(f_n)_n$ be a sequence of real functions defined on a compact set $K \subset \mathbb{R}^d$. Define,
$$v_n = \underset{x \in K}{\inf} \hspace{0.1cm} f_n(x)  \hspace{0.6cm} \text{and} \hspace{0.6cm} x_n \in \arginf_{x \in K} f_n(x).$$

\noindent
Then, we have the following two Lemmas.

\begin{lemma}[Convergence of minimizers]
\label{cvg_min_lem}
Assume that the sequence $(f_n)_n$ converges uniformly on $K$ to a continuous function $f$. Let $v^* = \underset{x \in K}{\inf} \hspace{0.1cm} f_n(x)$ and $\mathcal{S}^* = \arginf_{x \in K} f(x)$. Then $v_n \to v^*$ and the distance $d(x_n, \mathcal{S}^*)$ between the minimizer $x_n$ and the set $\mathcal{S}^*$ converges to $0$ as $n \to +\infty$.
\end{lemma}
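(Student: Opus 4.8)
The plan is to treat the two assertions separately: first the convergence of the minimal values $v_n \to v^*$, then the convergence of the minimizers to the solution set $\mathcal{S}^*$. Before anything else I would record that $\mathcal{S}^*$ is non-empty and closed: since $f$ is continuous on the compact set $K$ it attains its infimum, so $\mathcal{S}^* = \{x \in K : f(x) = v^*\}$ is a non-empty closed subset of $K$ (with $v^* = \inf_{x\in K} f(x)$), and consequently the map $x \mapsto d(x, \mathcal{S}^*)$ is well-defined and $1$-Lipschitz, hence continuous. For the value convergence, I would set $\epsilon_n := \sup_{x \in K} |f_n(x) - f(x)|$, which tends to $0$ by the uniform convergence hypothesis, and invoke the elementary inequality $|\inf_{x} f_n(x) - \inf_{x} f(x)| \le \sup_{x} |f_n(x) - f(x)|$ (the same $\sup$-difference inequality used repeatedly in the proofs above) to get $|v_n - v^*| \le \epsilon_n \to 0$.

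For the convergence of minimizers I would argue by contradiction. Suppose $d(x_n, \mathcal{S}^*) \not\to 0$; then there exist $\delta > 0$ and a subsequence $(x_{\phi(n)})_n$ with $d(x_{\phi(n)}, \mathcal{S}^*) \ge \delta$ for all $n$. Since $K$ is compact, I extract from $(x_{\phi(n)})_n$ a further subsequence converging to some $x^* \in K$. The key step is to show $x^* \in \mathcal{S}^*$. To this end I write $f(x_{\phi(n)}) = f_{\phi(n)}(x_{\phi(n)}) + \big(f(x_{\phi(n)}) - f_{\phi(n)}(x_{\phi(n)})\big) = v_{\phi(n)} + r_n$ with $|r_n| \le \epsilon_{\phi(n)} \to 0$; combined with $v_{\phi(n)} \to v^*$ this gives $f(x_{\phi(n)}) \to v^*$ along the extracted subsequence. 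By continuity of $f$ one also has $f(x_{\phi(n)}) \to f(x^*)$, whence $f(x^*) = v^*$, i.e. $x^* \in \mathcal{S}^*$. Finally, continuity of $d(\cdot, \mathcal{S}^*)$ forces $d(x_{\phi(n)}, \mathcal{S}^*) \to d(x^*, \mathcal{S}^*) = 0$, contradicting $d(x_{\phi(n)}, \mathcal{S}^*) \ge \delta$. Hence $d(x_n, \mathcal{S}^*) \to 0$.

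The routine parts (value convergence, closedness of $\mathcal{S}^*$, Lipschitz continuity of the distance) are immediate; the only genuinely delicate point is the minimizer statement, where one cannot expect $x_n$ itself to converge because $\mathcal{S}^*$ may contain several points, so the correct formulation is convergence to the \emph{set} $\mathcal{S}^*$. The compactness of $K$ is exactly what makes the contradiction argument work — it supplies the convergent subsequence — while the interplay between $v_n \to v^*$ and $\epsilon_n \to 0$ is what transfers minimality of $x_n$ for $f_n$ into near-minimality for $f$ in the limit. I would also note that the existence of each $x_n$ is part of the hypotheses ($x_n \in \arginf_{x\in K} f_n$), so no separate attainment argument for $f_n$ is required.
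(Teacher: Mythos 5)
Your proof is correct; note that the paper itself states Lemma~\ref{cvg_min_lem} without proof, importing it as a standard result, so there is no in-paper argument to diverge from --- yours is the canonical one. Both halves are sound: the value convergence via the elementary bound $|\inf_{K} f_n - \inf_{K} f| \le \sup_{K}|f_n - f| = \epsilon_n \to 0$, and the set-convergence of minimizers via compactness and contradiction, where the identity $f(x_{\phi(n)}) = v_{\phi(n)} + r_n$ with $|r_n| \le \epsilon_{\phi(n)}$ transfers minimality for $f_{\phi(n)}$ into $f(x^*) = v^*$ in the limit. You also correctly read $v^*$ as $\inf_{x \in K} f(x)$ (the statement contains a typo, $f_n$ where $f$ is meant) and rightly observe that the existence of each $x_n$ is hypothesised rather than derived, since the $f_n$ are not assumed continuous. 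The one cosmetic slip is notational: after extracting the further convergent subsequence you continue to write $x_{\phi(n)}$ for its terms; relabelling the sub-subsequence would make the final contradiction airtight, but the argument is unaffected.
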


\begin{lemma}[Uniform law of large numbers]
\label{uslln}
Let $(\xi_i)_{i \ge 1}$ be a sequence of i.i.d. $\mathbb{R}^m$-valued random vectors and $h : \mathbb{R}^d \times \mathbb{R}^m \to \mathbb{R}$ a measurable function. Assume that,

\begin{itemize}
\item a.s., $\theta \in \mathbb{R}^d \mapsto h(\theta, \xi_1)$ is continuous,
\item For all $C > 0$, $\mathbb{E}\Big( \underset{|\theta| \le C}{\sup} \hspace{0.1cm} \big| h(\theta, \xi_1) \big| \Big) <  +\infty$.
\end{itemize}

\noindent
Then, a.s. $\theta \in \mathbb{R}^d \mapsto \frac{1}{N} \sum_{i = 1}^{N} h(\theta, \xi_i)$ converges locally uniformly to the continuous function $\theta \in \mathbb{R}^d \mapsto \mathbb{E}\big(h(\theta, \xi_1) \big)$, i.e.
$$\lim\limits_{N \rightarrow +\infty} \hspace{0.1cm} \underset{[\theta| \le C}{\sup} \hspace{0.1cm} \Big| \frac{1}{N} \sum_{i = 1}^{N} h(\theta, \xi_i) - \mathbb{E}\big(h(\theta, \xi_1) \big) \Big| = 0 \hspace{0.2cm} \text{a.s.}$$
\end{lemma}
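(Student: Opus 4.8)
The plan is to reduce the statement to uniform convergence on an arbitrary closed ball $K = \{\theta \in \mathbb{R}^d : |\theta| \le C\}$, which is compact, and then combine a finite covering argument with the ordinary strong law of large numbers. First I would record the two elementary consequences of the hypotheses. On the one hand, since a.s. $\theta \mapsto h(\theta, \xi_1)$ is continuous and is dominated on $K$ by the integrable envelope $\sup_{|\theta| \le C} |h(\theta, \xi_1)|$, the dominated convergence theorem shows that $\theta \mapsto \mathbb{E}(h(\theta, \xi_1))$ is continuous on $K$. On the other hand, for each fixed $\theta$ the strong law of large numbers gives $\frac{1}{N}\sum_{i=1}^N h(\theta, \xi_i) \to \mathbb{E}(h(\theta, \xi_1))$ a.s.; the whole difficulty is to upgrade this pointwise statement to a uniform one over the uncountable set $K$.

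The key device is a bracketing by local envelopes. For $\theta_0 \in K$ and $\rho > 0$ set $h^{+}_{\rho}(\theta_0, \xi) = \sup_{|\theta - \theta_0| \le \rho} h(\theta, \xi)$ and $h^{-}_{\rho}(\theta_0, \xi) = \inf_{|\theta - \theta_0| \le \rho} h(\theta, \xi)$. By continuity of $h$ in $\theta$ these extrema are attained and may be computed over a countable dense subset of the ball, so they are measurable; they are also integrable because dominated by the envelope above. Continuity in $\theta$ gives $h^{\pm}_{\rho}(\theta_0, \xi) \to h(\theta_0, \xi)$ a.s. as $\rho \downarrow 0$, hence, by dominated convergence, $\mathbb{E}(h^{\pm}_{\rho}(\theta_0, \xi_1)) \to \mathbb{E}(h(\theta_0, \xi_1))$. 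Thus, given $\varepsilon > 0$, for each $\theta_0$ one can pick a radius $\rho(\theta_0) > 0$ small enough that $\mathbb{E}(h^{+}_{\rho}(\theta_0, \xi_1)) - \mathbb{E}(h(\theta_0, \xi_1)) < \varepsilon$, that $\mathbb{E}(h(\theta_0, \xi_1)) - \mathbb{E}(h^{-}_{\rho}(\theta_0, \xi_1)) < \varepsilon$, and (using the continuity of the mean established above) that $|\mathbb{E}(h(\theta, \xi_1)) - \mathbb{E}(h(\theta_0, \xi_1))| < \varepsilon$ for all $\theta$ in that ball.

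Then I would cover the compact $K$ by finitely many such balls $B(\theta_j, \rho(\theta_j))$, $j = 1, \ldots, J$. On $B(\theta_j, \rho(\theta_j))$ the monotonicity $h^{-}_{\rho}(\theta_j, \xi) \le h(\theta, \xi) \le h^{+}_{\rho}(\theta_j, \xi)$ yields, after averaging, $\frac{1}{N}\sum_i h^{-}_{\rho}(\theta_j, \xi_i) \le \frac{1}{N}\sum_i h(\theta, \xi_i) \le \frac{1}{N}\sum_i h^{+}_{\rho}(\theta_j, \xi_i)$ for every $\theta$ in the ball. Applying the ordinary strong law of large numbers to the finitely many integrable functions $h^{\pm}_{\rho}(\theta_j, \cdot)$ shows that on a single almost sure event the two outer averages converge to $\mathbb{E}(h^{\pm}_{\rho}(\theta_j, \xi_1))$. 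Combining the last chain of inequalities with the three estimates on the means chosen above gives, for $N$ large and uniformly over $\theta \in B(\theta_j, \rho(\theta_j))$, the bound $|\frac{1}{N}\sum_i h(\theta, \xi_i) - \mathbb{E}(h(\theta, \xi_1))| \le 2\varepsilon$. Taking the maximum over the finitely many indices $j$ and intersecting the corresponding almost sure events, I obtain $\sup_{|\theta| \le C} |\frac{1}{N}\sum_i h(\theta, \xi_i) - \mathbb{E}(h(\theta, \xi_1))| \le 2\varepsilon$ for all $N$ large, a.s. Since $\varepsilon$ and $C$ were arbitrary, the claimed locally uniform convergence follows.

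The main obstacle I expect is not the covering itself but the careful bookkeeping around the envelopes: one must verify their measurability (handled by the continuity-plus-countable-dense-subset remark), justify $\mathbb{E}(h^{\pm}_{\rho}) \to \mathbb{E}(h)$ by domination, and above all compare $\frac{1}{N}\sum_i h(\theta, \xi_i)$ with the moving target $\mathbb{E}(h(\theta, \xi_1))$ rather than with the fixed value $\mathbb{E}(h(\theta_j, \xi_1))$ — which is exactly why the continuity of the mean function is needed to close the argument.
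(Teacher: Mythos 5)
Your proof is correct. There is nothing to compare it against inside the paper: the author states Lemma~\ref{uslln} without proof, invoking it as a classical uniform strong law of large numbers, and your bracketing argument --- local envelopes $h^{\pm}_{\rho}(\theta_0,\cdot)$, measurability via a countable dense subset, dominated convergence as $\rho \downarrow 0$, a finite subcover of the compact ball, and the ordinary SLLN applied to the finitely many bracket functions --- is precisely the canonical proof of this result (Wald's method, as in Jennrich's consistency theorem). Two small points are worth making explicit in a final write-up. First, the envelope $h^{+}_{\rho}(\theta_0,\cdot)$ takes a supremum over a ball that may leave $\{|\theta|\le C\}$; this is harmless exactly because the domination hypothesis is assumed for \emph{every} $C>0$, so one works with $C+\rho$. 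Second, the almost-sure event on which your uniform bound $2\varepsilon$ holds depends on $\varepsilon$ through the finite cover, so the conclusion requires intersecting these events along a countable sequence $\varepsilon = 1/k$ (and $C = k$ to get local uniformity on all of $\mathbb{R}^d$); your closing ``since $\varepsilon$ and $C$ were arbitrary'' does this implicitly, but it deserves a sentence. With those remarks the argument is complete and fully rigorous.
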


Combining the two preceding lemmas is the main tool to analyze the Monte Carlo convergence of the neural network approximation. The result is stated below and requires the following (additional) assumption.

\vspace{0.4cm}
\noindent
$\bm{\mathcal{H}_3^{\mathcal{NN}}}$: For any $m \ge 2$, $0 \le k \le n-1$, $Q \in \mathcal{T}_k$ and $\theta^1, \theta^2 \in \mathcal{S}_{k}^{m}(Q)$ (defined in \eqref{optim_pb_nn_approx}), $\Phi_m(\cdot; \theta^1) = \Phi_m(\cdot; \theta^2)$.

\vspace{0.2cm}
\noindent
This assumption just states that, almost surely, two minimizers bring the same value.

\vspace{0.3cm}
Before showing the main result of this section, it is worth noting this important remark.

\begin{remark}
\label{moment_ordre_vm_nn}
\begin{countlist}[label={(\Alph*)}]{otherlist}
  \item \label{born_inte_vm} Under assumptions $\bm{\mathcal{H}_1^{\mathcal{NN}}}$ and $\bm{\mathcal{H}_{3, q}}$ and using a straightforward backward induction in equation \eqref{dp_approx_nn}, it can be shown that there exists a random variable $G_k \in \mathbb{L}^q_{\mathbb{R}^d}\big(\mathbb{P}\big)$ (independent of $Q$) such that $\big|V_k^m(X_{t_k}, Q) \big| \le G_k$ for any $Q \in \mathcal{T}_k$; where $V_k^m$ is defined in \eqref{dp_approx_nn}.
  
  \item \label{born_vm_vnm} Under assumption $\bm{\mathcal{H}_1^{\mathcal{NN}}}$, there exists a positive constant $\kappa_m$ such that, for any $0 \le k \le n-1$ and any $Q \in \mathcal{T}_k$,
$$\max\Big(\big|V_k^m(X_{t_k}, Q)\big|, \big|V_k^{m, N}(X_{t_k}, Q)\big|  \Big) \le \overline{q}\cdot \big|S_{t_k} - K\big| + \kappa_m \cdot \big(1 + \big|X_{t_k}\big|^q \big).$$
If in addition, assumption $\bm{\mathcal{H}_{3, q}}$ holds true, then the right hand side of the last inequality is an integrable random variable.
\end{countlist}
\end{remark}

\vspace{0.2cm}
We now state our result of interest.

\begin{Proposition}
Let $m \ge 2$. Under assumptions $\bm{\mathcal{H}_1^{\mathcal{NN}}}$, $\bm{\mathcal{H}_2^{\mathcal{NN}}}$, $\bm{\mathcal{H}_3^{\mathcal{NN}}}$ and $\bm{\mathcal{H}_{3, 2q}}$, for any $0 \le k \le n-1$, we have,
$$\lim\limits_{N \rightarrow +\infty} \hspace{0.1cm} \underset{Q \hspace{0.1cm} \in \hspace{0.1cm} \mathcal{T}_{k}}{\sup} \hspace{0.2cm} \left| V^{m, N}_k\left(X_{t_k}, Q \right) - V^{m}_k\left(X_{t_k}, Q \right) \right|= 0 \hspace{0.6cm} \text{a.s.}$$
\end{Proposition}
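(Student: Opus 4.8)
The plan is to argue by backward induction on $k$, exploiting that we are in the discrete (bang-bang) setting so that each $\mathcal{T}_k$ from \eqref{range_cum_vol_discr} is a \emph{finite} set; consequently the supremum over $Q \in \mathcal{T}_k$ is a maximum over finitely many points, and the uniform-in-$Q$ convergence reduces to an a.s. convergence at each fixed point. The base case $k = n-1$ is immediate since $V^{m,N}_{n-1} = V^m_{n-1} = V_{n-1}$. For the induction step, assume the claim at level $k+1$; applying the elementary bound $|\sup_i a_i - \sup_i b_i| \le \sup_i |a_i - b_i|$ inside \eqref{dp_approx_nn} and \eqref{dp_approx_nn_mc}, and using that for $Q \in \mathcal{T}_k$ and admissible $q$ one has $Q + q \in \mathcal{T}_{k+1}$ (a finite set), it suffices to prove that for each fixed $Q' \in \mathcal{T}_{k+1}$,
$$\big|\Phi_m\big(X_{t_k}; \theta_{k+1,m,N}(Q')\big) - \Phi_m\big(X_{t_k}; \theta_{k+1,m}(Q')\big)\big| \xrightarrow[N\to\infty]{} 0 \quad \text{a.s.},$$
where $\theta_{k+1,m,N}(Q')$ and $\theta_{k+1,m}(Q')$ minimize, over the compact set $\Theta_m$, the empirical objective \eqref{optim_pb_nn_approx_mc} and the theoretical objective \eqref{optim_pb_nn_approx} respectively.

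The core of the argument combines Lemma \ref{uslln} and Lemma \ref{cvg_min_lem}. Write $f_N(\theta)$ for the empirical objective defining $\theta_{k+1,m,N}(Q')$, set $\tilde f_N(\theta) := \frac1N \sum_{p=1}^N \big|V^m_{k+1}(X^{[p]}_{t_{k+1}}, Q') - \Phi_m(X^{[p]}_{t_k};\theta)\big|^2$ for the analogue built from the first-level approximation $V^m_{k+1}$, and $f(\theta) := \big\|V^m_{k+1}(X_{t_{k+1}}, Q') - \Phi_m(X_{t_k};\theta)\big\|_2^2$ for the theoretical objective. First I would apply the uniform law of large numbers (Lemma \ref{uslln}) to $h(\theta, \xi) = |V^m_{k+1}(X_{t_{k+1}}, Q') - \Phi_m(X_{t_k};\theta)|^2$ with $\xi = (X_{t_k}, X_{t_{k+1}})$: continuity in $\theta$ follows from $\bm{\mathcal{H}_2^{\mathcal{NN}}}$, and the local-uniform domination $\mathbb{E}(\sup_{|\theta|\le C}|h(\theta,\xi)|) < \infty$ from the polynomial growth $\bm{\mathcal{H}_1^{\mathcal{NN}}}$, the moments $\bm{\mathcal{H}_{3,2q}}$, and the integrability of $V^m_{k+1}$ provided by Remark \ref{moment_ordre_vm_nn}\ref{born_inte_vm}; this yields $\tilde f_N \to f$ locally uniformly on $\Theta_m$, a.s. Next I would bound $\sup_{\theta \in \Theta_m}|f_N(\theta) - \tilde f_N(\theta)|$ via $a^2 - b^2 = (a-b)(a+b)$ and the envelope of Remark \ref{moment_ordre_vm_nn}\ref{born_vm_vnm}: this difference is controlled by $\big(\sup_{Q}|V^{m,N}_{k+1} - V^m_{k+1}|\big)$ times an empirical average of an integrable majorant, the first factor tending to $0$ a.s. by the induction hypothesis and the second converging a.s. by the strong law of large numbers. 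Hence $f_N \to f$ uniformly on $\Theta_m$ a.s., and Lemma \ref{cvg_min_lem} gives $d\big(\theta_{k+1,m,N}(Q'), \mathcal{S}^m_k(Q')\big) \to 0$ a.s., with $\mathcal{S}^m_k(Q') = \arginf f$.

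The final step converts convergence of minimizers into convergence of the network outputs. Since $\mathcal{S}^m_k(Q')$ need not be a singleton, I would choose $\theta^*_N \in \mathcal{S}^m_k(Q')$ with $|\theta_{k+1,m,N}(Q') - \theta^*_N| \to 0$; the uniform continuity of $\theta \mapsto \Phi_m(X_{t_k};\theta)$ on the compact $\Theta_m$ (from $\bm{\mathcal{H}_2^{\mathcal{NN}}}$) then forces $\Phi_m(X_{t_k};\theta_{k+1,m,N}(Q')) - \Phi_m(X_{t_k};\theta^*_N) \to 0$ a.s., and assumption $\bm{\mathcal{H}_3^{\mathcal{NN}}}$ — which asserts that all minimizers produce the same function — identifies $\Phi_m(X_{t_k};\theta^*_N)$ with $\Phi_m(X_{t_k};\theta_{k+1,m}(Q'))$. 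Taking the maximum over the finitely many $Q' \in \mathcal{T}_{k+1}$ closes the induction.

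The main obstacle I anticipate is the control of $\sup_\theta |f_N - \tilde f_N|$: the empirical objective $f_N$ is built from $V^{m,N}_{k+1}$, which depends on the \emph{entire} Monte Carlo sample through the backward recursion, so the induction hypothesis (an a.s. statement about the population value function) must be pushed through an empirical average over the same sample paths $X^{[p]}_{t_{k+1}}$. Making this rigorous requires a suitably dominated form of the induction hypothesis together with the integrable envelope of Remark \ref{moment_ordre_vm_nn}\ref{born_vm_vnm}, and it is precisely here that $\bm{\mathcal{H}_{3,2q}}$ — rather than merely $\bm{\mathcal{H}_{3,q}}$ — is needed to keep the cross-term square-integrable. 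A secondary delicate point is the non-uniqueness of minimizers, which is handled exactly by $\bm{\mathcal{H}_3^{\mathcal{NN}}}$ in the last step.
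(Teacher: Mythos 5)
Your proposal is correct and follows essentially the same route as the paper's proof: backward induction exploiting the finite cardinality of $\mathcal{T}_{k+1}$, the uniform law of large numbers (Lemma \ref{uslln}) applied to the intermediate empirical objective built from $V^m_{k+1}$, the $a^2-b^2=(a-b)(a+b)$ estimate with the envelope of Remark \ref{moment_ordre_vm_nn} to transfer the induction hypothesis to the empirical objective built from $V^{m,N}_{k+1}$, Lemma \ref{cvg_min_lem} for convergence of minimizers, and uniform continuity plus $\bm{\mathcal{H}_3^{\mathcal{NN}}}$ to resolve non-uniqueness. The only (harmless) difference is cosmetic: you skip the paper's extra triangle-inequality term involving the intermediate minimizer $\widetilde{\theta}_{k,m,N}$ in the value-function decomposition and instead compare the empirical and theoretical minimizers directly, the intermediate objective appearing only at the level of the loss functions --- and the subtlety you flag about pushing the a.s. induction hypothesis through the empirical average over the same sample paths is present, and handled no more rigorously, in the paper itself.
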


\noindent
Note that in $\bm{\mathcal{H}_{3, 2q}}$, parameters $q$ are that involved in assumption $\bm{\mathcal{H}_1^{\mathcal{NN}}}$. Recall that, the set $\mathcal{T}_{k}$ is the one of the discrete setting as discussed in \eqref{range_cum_vol_discr}.

\begin{proof}
We proceed by a backward induction on $k$. The proposition clearly holds true for $k = n-1$ since, almost surely, $V_{n-1}^{m, N}(X_{t_{n-1}}, \cdot) = V_{n-1}^{m}(X_{t_{n-1}}, \cdot)$ on $\mathcal{T}_{n-1}$. Assume now the proposition holds true for $k+1$. Let $Q \in \mathcal{T}_k$. Using the inequality, $|\underset{i \in I}{\sup} \hspace{0.1cm} a_i - \underset{i \in I}{\sup} \hspace{0.1cm} b_i| \hspace{0.1cm} \le \hspace{0.1cm} \underset{i \in I}{\sup} \hspace{0.1cm} |a_i-b_i|$ and then triangle inequality, we get,
\begin{align}
\label{first_step_mc_nn}
\left| V^{m, N}_k\left(X_{t_k}, Q \right) - V^{m}_k\left(X_{t_k}, Q \right) \right| &\le \esssup_{q \in \mathbb{A}_k(Q)} \hspace{0.1cm} \Big|\Phi_m\big(X_{t_k}; \theta_{k, m, N}(Q+q) \big) - \Phi_m\big(X_{t_k}; \widetilde{\theta}_{k, m, N}(Q+q) \big) \Big| \nonumber \\
&\quad + \esssup_{q \in \mathbb{A}_k(Q)} \hspace{0.1cm} \Big|\Phi_m\big(X_{t_k}; \widetilde{\theta}_{k, m, N}(Q+q) \big) - \Phi_m\big(X_{t_k}; \theta_{k, m}(Q+q) \big) \Big|,
\end{align}
where $\widetilde{\theta}_{k, m, N}(Q)$ lies within the following set,
\begin{equation}
\label{th_set_min_proof_mc}
 \arginf_{\theta \in \Theta_m} \hspace{0.1cm} \frac{1}{N} \sum_{p = 1}^{N} \Big|V^{m}_{k+1}\big(X_{t_{k+1}}^{[p]}, Q \big)  - \Phi_m\big(X_{t_k}^{[p]}; \theta\big) \Big|^2.
\end{equation}
Then taking the supremum in \eqref{first_step_mc_nn} and using triangle inequality, we get,
\begin{align}
\label{sec_step_mc_nn}
\underset{Q \in \mathcal{T}_k}{\sup} \left| V^{m, N}_k\left(X_{t_k}, Q \right) - V^{m}_k\left(X_{t_k}, Q \right) \right| &\le 
\underset{Q \in \mathcal{T}_{k+1}}{\sup} \hspace{0.1cm} \Big|\Phi_m\big(X_{t_k}; \theta_{k, m, N}(Q) \big) - \Phi_m\big(X_{t_k}; \theta_{k, m}(Q) \big) \Big| \nonumber \\
&\quad + 2 \underset{Q \in \mathcal{T}_{k+1}}{\sup} \hspace{0.1cm} \Big|\Phi_m\big(X_{t_k}; \widetilde{\theta}_{k, m, N}(Q) \big) - \Phi_m\big(X_{t_k}; \theta_{k, m}(Q) \big) \Big|.
\end{align}
We will handle the right hand side of the last inequality term by term. Let us start with the second term. Note that owing to assumption $\bm{\mathcal{H}_2^{\mathcal{NN}}}$, the function
$$\theta \in \Theta_m \mapsto V^{m}_{k+1}\big(X_{t_{k+1}}, Q \big)  - \Phi_m\big(X_{t_k}; \theta\big)$$

\noindent
is almost surely continuous. Moreover, for any $C >0$, using the inequality $(a+b)^2 \le 2(a^2 + b^2)$ and assumption $\bm{\mathcal{H}_1^{\mathcal{NN}}}$, there exists a positive constant $\kappa_m$ such that for any $Q \in \mathcal{T}_{k+1}$,
\begin{align*}
\mathbb{E}\left(\underset{|\theta| \le C}{\sup} \hspace{0.1cm} \Big| V^{m}_{k+1}\big(X_{t_{k+1}}, Q \big)  - \Phi_m\big(X_{t_k}; \theta\big) \Big|^2 \right) &\le 2 \cdot \mathbb{E}\Big(\big| V^{m}_{k+1}\big(X_{t_{k+1}}, Q \big) \big|^2 \Big) + 2 \cdot \underset{|\theta| \le C}{\sup} \hspace{0.1cm}  \mathbb{E}\Big(\big|\Phi_m\big(X_{t_k}; \theta\big) \big|^2 \Big) \\
&\le 2 \cdot \mathbb{E}\Big(\big| V^{m}_{k+1}\big(X_{t_{k+1}}, Q \big) \big|^2 \Big) + 2\kappa_m \Big(1 + \mathbb{E}\big|X_{t_k}\big|^{2q}  \Big)
\end{align*}

\noindent
and the right hand side of the last inequality is finite under assumption $\bm{\mathcal{H}_{3, 2q}}$, keeping in mind point \ref{born_inte_vm} of Remark \ref{moment_ordre_vm_nn}. Thus thanks to Lemma \ref{uslln}, almost surely, we have the uniform convergence on $\Theta_m$, 
\begin{equation}
\label{first_unif_cvg}
\lim\limits_{N \rightarrow +\infty} \hspace{0.1cm} \underset{\theta \in \Theta_m}{\sup} \hspace{0.1cm} \left|  \frac{1}{N} 
\sum_{p = 1}^{N} \Big|V^{m}_{k+1}\big(X_{t_{k+1}}^{[p]}, Q \big)  - \Phi_m\big(X_{t_k}^{[p]}; \theta\big) \Big|^2 - \Big\| V^{m}_{k+1}\big(X_{t_{k+1}}, Q \big)  - 
\Phi_m\big(X_{t_k}; \theta\big) \Big\|_2^2  \right| = 0.
\end{equation}
Thus, for any $Q \in \mathcal{T}_{k+1}$, Lemma \ref{cvg_min_lem} implies that $\lim\limits_{N \rightarrow +\infty} \hspace{0.1cm} d\big(\widetilde{\theta}_{k, m, N}(Q), \mathcal{S}_k^m(Q)  \big) = 0$. We restrict ourselves to a subset with probability one of the original probability space on which this convergence holds and the random functions $\Phi_m\big(X_{t_k}; \cdot\big)$ are uniformly continuous (see assumption $\bm{\mathcal{H}_2^{\mathcal{NN}}}$). Then, there exists a sequence $\big(\alpha_{k, m, N}(Q)\big)_N$ lying within $\mathcal{S}_k^m(Q) $ such that,
$$\lim\limits_{N \rightarrow +\infty} \hspace{0.1cm} \Big|\widetilde{\theta}_{k, m, N}(Q) - \alpha_{k, m, N}(Q) \Big| = 0.$$
Thus, the uniform continuity of functions $\Phi_m\big(X_{t_k}; \cdot\big)$ combined with assumption $\bm{\mathcal{H}_3^{\mathcal{NN}}}$ yield,
$$\Big|\Phi_m\big(X_{t_k}; \widetilde{\theta}_{k, m, N}(Q) \big) - \Phi_m\big(X_{t_k}; \theta_{k, m}(Q) \big) \Big| = 
\Big|\Phi_m\big(X_{t_k}; \widetilde{\theta}_{k, m, N}(Q) \big) - \Phi_m\big(X_{t_k}; \alpha_{k, m, N}(Q) \big) \Big|  \xrightarrow[N \to + \infty]{} 0.$$

\noindent
Furthermore, since the set $\mathcal{T}_{k+1}$ has a finite cardinal (discrete setting) then, we have
\begin{equation}
\label{firs_lim_unif}
\lim\limits_{N \rightarrow +\infty} \hspace{0.1cm} \underset{Q \in \mathcal{T}_{k+1}}{\sup} \hspace{0.1cm} \Big|\Phi_m\big(X_{t_k}; \widetilde{\theta}_{k, m, N}(Q) \big) - \Phi_m\big(X_{t_k}; \theta_{k, m}(Q) \big) \Big| = 0.
\end{equation}

\noindent
It remains to handle the first term in the right hand side of inequality \eqref{sec_step_mc_nn}. Note that, if the following uniform convergence,
\begin{equation}
\label{unif_cvg_interest}
\lim\limits_{N \rightarrow +\infty} \hspace{0.1cm} \underset{\theta \in \Theta_m}{\sup} \hspace{0.1cm} \underbrace{\left|  \frac{1}{N} \sum_{p = 1}^{N} \Big|V^{m, N}_{k+1}\big(X_{t_{k+1}}^{[p]}, Q \big)  - \Phi_m\big(X_{t_k}^{[p]}; \theta\big) \Big|^2 -  \frac{1}{N} \sum_{p = 1}^{N} \Big|V^{m}_{k+1}\big(X_{t_{k+1}}^{[p]}, Q \big)  - \Phi_m\big(X_{t_k}^{[p]}; \theta\big) \Big|^2   \right|}_{:=\big|\Delta_{k, m, N}^Q(\theta)\big|}= 0
\end{equation}

\noindent
holds true, then the latter uniform convergence will entail the following one  owing to the uniform convergence \eqref{first_unif_cvg},
\begin{equation}
\label{last_unif_cvg}
\lim\limits_{N \rightarrow +\infty} \hspace{0.1cm} \underset{\theta \in \Theta_m}{\sup} \hspace{0.1cm} \left|  
    \frac{1}{N} \sum_{p = 1}^{N} \Big|V^{m, N}_{k+1}\big(X_{t_{k+1}}^{[p]}, Q \big)  - \Phi_m\big(X_{t_k}^{[p]}; \theta\big) \Big|^2 - \Big\| 
    V^{m}_{k+1}\big(X_{t_{k+1}}, Q \big)  - \Phi_m\big(X_{t_k}; \theta\big) \Big\|_2^2  \right| = 0
\end{equation}

\noindent
and the desired result follows. To achieve this, we start by proving the uniform convergence \eqref{unif_cvg_interest}. Then we show how its implication \eqref{last_unif_cvg} entails the desired result.

\vspace{0.3cm}
Using triangle inequality and the elementary identity, $a^2 - b^2 = (a-b)(a+b)$, we have,
\begin{align*}
\big|\Delta_{k, m, N}^Q(\theta)\big| &\le \frac{1}{N} \sum_{p = 1}^{N} \Big|V^{m, N}_{k+1}\big(X_{t_{k+1}}^{[p]}, Q \big) + V^{m}_{k+1}\big(X_{t_{k+1}}^{[p]}, Q \big) - 2 \cdot  \Phi_m\big(X_{t_k}^{[p]}; \theta\big) \Big| \cdot \Big| V^{m, N}_{k+1}\big(X_{t_{k+1}}^{[p]}, Q \big) - V^{m}_{k+1}\big(X_{t_{k+1}}^{[p]}, Q \big) \Big|\\
&\le \frac{2}{N} \sum_{p = 1}^{N} \Big(\overline{q} \big|S_{t_{k+1}}^{[p]} - K\big| + \kappa_m \big(1 + |X_{t_{k+1}}^{[p]}|^q\big) + \kappa_m \big(1 + |X_{t_{k}}^{[p]}|^q\big) \Big) \cdot \Big| V^{m, N}_{k+1}\big(X_{t_{k+1}}^{[p]}, Q \big) - V^{m}_{k+1}\big(X_{t_{k+1}}^{[p]}, Q \big) \Big|,
\end{align*}

\noindent
where in the last inequality we used assumption $\bm{\mathcal{H}_1^{\mathcal{NN}}}$ and the point \ref{born_vm_vnm} of Remark \ref{moment_ordre_vm_nn}. Let $\varepsilon > 0$. Then using the induction assumption and the law of large numbers, we get,
$$\limsup\limits_{N} \underset{\theta \in \Theta_m}{\sup} \hspace{0.1cm} \big|\Delta_{k, m, N}^Q(\theta)\big|  \le 2\varepsilon \cdot \mathbb{E}\Big(\overline{q} \big|S_{t_{k+1}} - K\big| + \kappa_m \big(1 + |X_{t_{k+1}}|^q\big) + \kappa_m \big(1 + |X_{t_{k}}|^q\big) \Big).$$

\noindent
Hence letting $\varepsilon \to 0$ entails the result \eqref{unif_cvg_interest}. Therefore, as already mentioned, the result \eqref{last_unif_cvg} also holds true. Thus, using Lemma \ref{cvg_min_lem}, we get that $\lim\limits_{N \rightarrow +\infty} \hspace{0.1cm} d\big(\theta_{k, m, N}(Q), \mathcal{S}_k^m(Q)  \big) = 0$. We restrict ourselves to a subset with probability one of the original probability space on which this convergence holds and the random functions $\Phi_m\big(X_{t_k}; \cdot\big)$ are uniformly continuous (see assumption $\bm{\mathcal{H}_2^{\mathcal{NN}}}$). Whence, for any $Q \in \mathcal{T}_{k+1}$, there exists a sequence $\big(\beta_{k, m, N}(Q)\big)_N$ lying within $\mathcal{S}_k^m(Q) $ such that,
$$\lim\limits_{N \rightarrow +\infty} \hspace{0.1cm} \Big|\theta_{k, m, N}(Q) - \beta_{k, m, N}(Q) \Big| = 0.$$

\noindent
Thus, the uniform continuity of functions $\Phi_m\big(X_{t_k}; \cdot\big)$ combined with assumption $\bm{\mathcal{H}_3^{\mathcal{NN}}}$ yield,
$$\Big|\Phi_m\big(X_{t_k}; \theta_{k, m, N}(Q) \big) - \Phi_m\big(X_{t_k}; \theta_{k, m}(Q) \big) \Big| = \Big|\Phi_m\big(X_{t_k}; \theta_{k, m, N}(Q) \big) - \Phi_m\big(X_{t_k}; \beta_{k, m, N}(Q) \big) \Big|  \xrightarrow[N \to + \infty]{} 0.$$

\noindent
Then, since the set $\mathcal{T}_{k+1}$ has a finite cardinal (discrete setting), we have
\begin{equation}
\label{snc_lim_unif}
\lim\limits_{N \rightarrow +\infty} \hspace{0.1cm} \underset{Q \in \mathcal{T}_{k+1}}{\sup} \hspace{0.1cm} \Big|\Phi_m\big(X_{t_k}; \theta_{k, m, N}(Q) \big) - \Phi_m\big(X_{t_k}; \theta_{k, m}(Q) \big) \Big| = 0.
\end{equation}

\noindent
Combining equations \eqref{firs_lim_unif} and \eqref{snc_lim_unif} in equation \eqref{sec_step_mc_nn} yield the desired result.
\end{proof}

\subsection{Deviation inequalities: linear regression setting}
To end this paper, we present some additional results related to linear regression approximation. These results focus on some deviation inequalities on the error between estimates \eqref{estim_second_approx}, \eqref{estim_orth_proj_dp} and the swing actual value function \eqref{eq_dp_swing}. We no longer consider the Hilbert assumption $\bm{\mathcal{H}_5^{LS}}$. Let us start with the first proposition of this section.

\begin{Proposition}
\label{prob_diff}
Let $\delta > 0$ and $k = 0, \ldots, n-2$. Under assumptions $\bm{\mathcal{H}_{3, \infty}}$ and $\bm{\mathcal{H}_{4, \infty}^{LS}}$, for all $s \ge 2$, there exists a positive constant $D_{s, k, m}$ such that, 
$$\mathbb{P}\left(\esssup_{Q \in \mathcal{Q}_k} \left|\frac{1}{N} \sum_{p = 1}^{N} e^m(X_{t_k}^{[p]}) V^m_{k+1}(X_{t_{k+1}}^{[p]}, Q) - \mathbb{E}\big(e^m(X_{t_k}) V^m_{k+1}( X_{t_{k+1}}, Q) \big)\right| \ge \delta \right) \le \frac{D_{s, k, m}}{\delta^s N^\frac{s}{2}}$$

\noindent
where $\mathcal{Q}_k$ is the set of all $\mathcal{F}_{t_k}^X$-measurable random variables lying within $\mathcal{T}_{k+1}$.

\end{Proposition}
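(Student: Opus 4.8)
The plan is to combine a Markov inequality at order $s$ with the Monte Carlo rate supplied by Lemma \ref{lem_zyg}, the whole difficulty being to make the resulting bound uniform in the (random) cumulative consumption $Q$. For a deterministic level $q \in \mathcal{T}_{k+1}$ I introduce the notation
$$R_N(q) := \frac{1}{N}\sum_{p=1}^N e^m(X_{t_k}^{[p]})\, V^m_{k+1}(X_{t_{k+1}}^{[p]}, q) - \mathbb{E}\big(e^m(X_{t_k})\, V^m_{k+1}(X_{t_{k+1}}, q)\big) \in \mathbb{R}^m,$$
so that the random variable appearing inside the probability is $\esssup_{Q \in \mathcal{Q}_k} |R_N(Q)|$. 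First I would apply Markov's inequality at order $s$, namely $\mathbb{P}\big(\esssup_{Q \in \mathcal{Q}_k}|R_N(Q)| \ge \delta\big) \le \delta^{-s}\,\mathbb{E}\big[(\esssup_{Q \in \mathcal{Q}_k}|R_N(Q)|)^s\big]$, which reduces the statement to the moment estimate $\mathbb{E}\big[(\esssup_{Q \in \mathcal{Q}_k} |R_N(Q)|)^s\big] \le D_{s,k,m}\,N^{-s/2}$.

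Second, I would interchange the essential supremum and the expectation through the bifurcation property, exactly as in the proofs of Propositions \ref{cvg_m_basis} and \ref{cvg_hilbert_bis}: for $Q_1, Q_2 \in \mathcal{Q}_k$ the selection $Q^\star = Q_1 \mathrm{1}_{\{|R_N(Q_1)| \ge |R_N(Q_2)|\}} + Q_2 \mathrm{1}_{\{|R_N(Q_1)| < |R_N(Q_2)|\}}$ again lies in $\mathcal{Q}_k$ and realises the pointwise maximum, so the family $\{|R_N(Q)|^s : Q \in \mathcal{Q}_k\}$ is directed upward. Hence
$$\mathbb{E}\Big[\big(\esssup_{Q \in \mathcal{Q}_k} |R_N(Q)|\big)^s\Big] = \sup_{Q \in \mathcal{Q}_k}\mathbb{E}\big(|R_N(Q)|^s\big) \le \sup_{q \in \mathcal{T}_{k+1}} \big|\big| \,|R_N(q)|\, \big|\big|_s^s,$$
the last inequality reducing the supremum over $\mathcal{F}_{t_k}^X$-measurable $Q$ to a supremum over deterministic levels $q$ in the compact interval $\mathcal{T}_{k+1}$ (the essential range of any such $Q$), via conditioning on the Monte Carlo sample.

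Third, for each fixed $q$ I would bound $\big|\big||R_N(q)|\big|\big|_s$ coordinate by coordinate. Writing $|R_N(q)| \le \sum_{j=1}^m |R_N(q)_j|$ and observing that $R_N(q)_j$ is a centered empirical mean of the i.i.d. variables $e_j(X_{t_k}^{[p]}) V^m_{k+1}(X_{t_{k+1}}^{[p]}, q)$, Lemma \ref{lem_zyg} applied with $p = s$ yields
$$\big|\big|\,|R_N(q)|\,\big|\big|_s \le \frac{B_s\, 2^{\frac{s-1}{s}}}{\sqrt{N}} \sum_{j=1}^m \Big(\mathbb{E}\big|e_j(X_{t_k}) V^m_{k+1}(X_{t_{k+1}}, q)\big|^s + \big|\mathbb{E}\big(e_j(X_{t_k}) V^m_{k+1}(X_{t_{k+1}}, q)\big)\big|^s\Big)^{1/s}.$$
Finiteness of these moments is guaranteed by $\bm{\mathcal{H}_{3, \infty}}$ and $\bm{\mathcal{H}_{4, \infty}^{LS}}$ together with the growth of $V^m_{k+1}$, and, crucially, Remark \ref{cont_high_order} ensures that the maps $q \mapsto \mathbb{E}|e_j(X_{t_k}) V^m_{k+1}(X_{t_{k+1}}, q)|^s$ and $q \mapsto |\mathbb{E}(e_j(X_{t_k}) V^m_{k+1}(X_{t_{k+1}}, q))|^s$ are continuous on the compact set $\mathcal{T}_{k+1}$, hence bounded. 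Taking the supremum over $q$ therefore produces a finite constant $C_{s,k,m}$ with $\sup_{q \in \mathcal{T}_{k+1}} \big|\big||R_N(q)|\big|\big|_s \le C_{s,k,m}/\sqrt{N}$, and setting $D_{s,k,m} = C_{s,k,m}^s$ closes the argument.

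The step I expect to be the most delicate is the interchange in the second paragraph: one must check that the selection $Q^\star$ is genuinely admissible (the indicator event involves the Monte Carlo sample through $R_N$) and that the passage from a random $Q$ to a deterministic level $q$ is legitimate. Everything else is routine once this uniform-in-$q$ moment control is secured, the continuity of the two moment functions on the compact interval $\mathcal{T}_{k+1}$ furnished by Remark \ref{cont_high_order} being precisely what prevents the constant issued from Lemma \ref{lem_zyg} from blowing up as $q$ varies.
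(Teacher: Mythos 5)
Your proposal reproduces the paper's strategy step for step: Markov's inequality at order $s$, the bifurcation property to interchange the essential supremum and the expectation, Lemma \ref{lem_zyg} for the $N^{-s/2}$ rate, and Remark \ref{cont_high_order} (continuity of the moment maps $q \mapsto \mathbb{E}\big|e_j(X_{t_k})V^m_{k+1}(X_{t_{k+1}},q)\big|^s$ on the compact set $\mathcal{T}_{k+1}$) to turn the pointwise constant from Lemma \ref{lem_zyg} into a uniform one. Your coordinatewise bound $|R_N(q)| \le \sum_{j=1}^m |R_N(q)_j|$ versus the paper's direct vector application of the lemma (followed by Jensen's inequality to merge the two moment terms) is a purely cosmetic difference, since the constant is allowed to depend on $m$.

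The one substantive divergence sits precisely at the step you flagged as delicate, and there your write-up has a genuine gap: the claim that the selection $Q^\star = Q_1 \mathrm{1}_{\{|R_N(Q_1)| \ge |R_N(Q_2)|\}} + Q_2 \mathrm{1}_{\{|R_N(Q_1)| < |R_N(Q_2)|\}}$ ``again lies in $\mathcal{Q}_k$'' is false as stated. The indicator events involve $R_N(Q_i)$, hence the time-$t_{k+1}$ marginals $X^{[p]}_{t_{k+1}}$ of the simulated paths, so $Q^\star$ is not $\mathcal{F}^X_{t_k}$-measurable and the family $\{|R_N(Q)|^s : Q \in \mathcal{Q}_k\}$ has not been shown to be directed upward \emph{within} $\mathcal{Q}_k$. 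The paper's remedy is a one-line enlargement that your argument is missing: since $\mathcal{F}^X_{t_k} \subset \mathcal{F}^X_{t_{k+1}}$, one has $\mathcal{Q}_k \subset \mathcal{Q}'_k$, where $\mathcal{Q}'_k$ is the class of $\mathcal{F}^X_{t_{k+1}}$-measurable random variables with values in $\mathcal{T}_{k+1}$; bounding the essential supremum over $\mathcal{Q}_k$ by the one over $\mathcal{Q}'_k$ can only increase the probability to be controlled, and within $\mathcal{Q}'_k$ the selection $Q^\star$ is admissible, because $R_N(Q)$ is $\mathcal{F}^X_{t_{k+1}}$-measurable whenever $Q$ is (this is the same measurability convention already used for $B_k(Q,q_i)$ in the proof of Proposition \ref{cvg_hilbert_bis}). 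With the bifurcation run in $\mathcal{Q}'_k$, the paper then performs exactly your final reduction — replacing the supremum over random $Q \in \mathcal{Q}'_k$ of the moment bound by the supremum over deterministic $q \in \mathcal{T}_{k+1}$, made finite by Remark \ref{cont_high_order} — so once you insert the enlargement $\mathcal{Q}_k \subset \mathcal{Q}'_k$ before invoking the bifurcation property, your argument coincides with the paper's.
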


\begin{proof}
Note that $\mathcal{Q}_k \subset \mathcal{Q}_k'$; with the latter set being the set of all $\mathcal{F}_{t_{k+1}}^X$-measurable random variables lying within $\mathcal{T}_{k+1}$. Then we have,
\begin{align*}
& \mathbb{P}\left(\esssup_{Q \in \mathcal{Q}_k} \Bigg|\frac{1}{N} \sum_{p = 1}^{N} e^m(X_{t_k}^{[p]}) V^m_{k+1}( X_{t_{k+1}}^{[p]}, Q) - \mathbb{E}\big(e^m(X_{t_k}) V^m_{k+1}(X_{t_{k+1}}, Q) \big)\right| \ge \delta \Bigg)\\
&\le \mathbb{P}\left(\esssup_{Q \in \mathcal{Q}_k'} \left|\frac{1}{N} \sum_{p = 1}^{N} e^m(X_{t_k}^{[p]}) V^m_{k+1}(X_{t_{k+1}}^{[p]}, Q) - \mathbb{E}\big(e^m(X_{t_k}) V^m_{k+1}(X_{t_{k+1}}, Q) \big)\right| \ge \delta \right)\\
&\le A_s \frac{\underset{Q \in \mathcal{Q}_k'}{\sup} \hspace{0.1cm} \Big\{ \mathbb{E}\Big(\big| e^m(X_{t_k}) V^m_{k+1}(X_{t_{k+1}}, Q) \big|^s \Big) + \mathbb{E}\Big(\big| e^m(X_{t_k}) V^m_{k+1}(X_{t_{k+1}}, Q) \big| \Big)^s \Big\}}{N^{s/2} \cdot \delta^s}\\
&\le 2A_s \frac{\underset{Q \in \mathcal{Q}_k'}{\sup} \hspace{0.1cm}  \mathbb{E}\Big(\big| e^m(X_{t_k}) V^m_{k+1}(X_{t_{k+1}}, Q) \big|^s \Big)}{N^{s/2} \cdot \delta^s},
\end{align*}

\noindent
where in the second-last inequality, we successively used Markov inequality, bifurcation property and Lemma \ref{lem_zyg} (enabled by assumptions $\bm{\mathcal{H}_{3, \infty}}$ and $\bm{\mathcal{H}_{4, \infty}^{LS}}$) with $A_s = B_s^s \cdot 2^{s-1}$ and $B_s$ being a positive constant which only depends on $a$. To obtain the last inequality, we used Jensen inequality. Besides, following the definition of $\mathcal{Q}_k'$ we have,
$$\underset{Q \in \mathcal{Q}_k'}{\sup} \hspace{0.1cm}  \mathbb{E}\Big(\big| e^m(X_{t_k}) V^m_{k+1}(X_{t_{k+1}}, Q) \big|^s \Big) \le \underset{Q \in \mathcal{T}_{k+1}}{\sup} \hspace{0.1cm}  \mathbb{E}\Big(\big| e^m(X_{t_k}) V^m_{k+1}(X_{t_{k+1}}, Q) \big|^s \Big).$$

\noindent
Then owing to Remark \ref{cont_high_order}, the right hand side of the last inequality is a supremum of a continuous function over a compact set; thus finite. Hence it suffices to set,
$$D_{s, k, m} := 2A_s \cdot \underset{Q \in \mathcal{T}_{k+1}}{\sup} \hspace{0.1cm}  \mathbb{E}\Big(\big| e^m(X_{t_k}) V^m_{k+1}(X_{t_{k+1}}, Q) \big|^s \Big) < + \infty.$$

\noindent
Which completes the proof.
\end{proof}

In the following proposition, we state a deviation inequality connecting the estimates of the orthogonal projection coordinates involved in the linear regression.

\begin{Proposition}
\label{tight_coeff_reg}
Consider assumptions $\bm{\mathcal{H}_{3, \infty}}$ and $\bm{\mathcal{H}_{4, \infty}^{LS}}$. For all $k=0, \ldots, n-2$, $\delta > 0$ and $s \ge 2$ there exists a positive constant $C_{s, k, m}$ such that,
$$\mathbb{P}\left(\esssup_{Q \in \mathcal{Q}_k} \hspace{0.1cm} \Big|\theta_{k, m, N}(Q) - \theta_{k, m}(Q)  \Big| \ge \delta \right) \le \frac{C_{s, k, m}}{b(s, \delta) \cdot N^\frac{s}{2}},$$

\noindent
where $b(s, \delta) = \delta ^s$ if $\delta \in (0,1]$ else $b(s, \delta) = \delta ^{s/2}$.
\end{Proposition}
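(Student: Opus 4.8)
The plan is to argue by backward induction on $k$, reducing the statement to deviation estimates for the empirical Gram matrix and for the empirical cross-moments, and keeping $\bm{\mathcal{H}_2^{LS}}$ tacitly in force (it is what makes $\theta_{k,m}(Q)=(A_m^{k})^{-1}\mathbb{E}(V_{k+1}^m(X_{t_{k+1}},Q)e^m(X_{t_k}))$ well defined). As a preliminary reduction I replace the essential supremum over the random variables $Q\in\mathcal{Q}_k$ by an ordinary supremum over deterministic $Q\in\mathcal{T}_{k+1}$, exactly as in the proof of Proposition \ref{prob_diff}: since $q\mapsto|\theta_{k,m,N}(Q+q)-\theta_{k,m}(Q+q)|$ is built from $\sigma(X_{t_k})$-measurable quantities, the bifurcation property lets me pass from $\esssup_{Q\in\mathcal{Q}_k}$ to $\sup_{Q\in\mathcal{T}_{k+1}}$. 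The base case is the top level, where $V^{m,N}_{n-1}=V^{m}_{n-1}=V_{n-1}$ holds exactly; there the only randomness in $\theta_{\cdot,m,N}-\theta_{\cdot,m}$ comes from the empirical Gram matrix and from the empirical cross-moment of a \emph{fixed} (non-simulated) value function, both being genuine i.i.d.\ averages controlled directly by Lemma \ref{lem_zyg} and Proposition \ref{prob_diff}.

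For the inductive step I write $A:=A_m^{k}$, $A_N:=A_{m,N}^{k}$, $b(Q):=\mathbb{E}(V_{k+1}^m(X_{t_{k+1}},Q)e^m(X_{t_k}))$ and $b_N(Q)$ for its empirical analogue built from $V_{k+1}^{m,N}$. From $A_N(\theta_{k,m,N}-\theta_{k,m})=(b_N-b)-(A_N-A)\theta_{k,m}$ I get
\[ \big|\theta_{k,m,N}(Q)-\theta_{k,m}(Q)\big|\le \big\|A_N^{-1}\big\|\Big(|b_N(Q)-b(Q)|+\|A_N-A\|\,M\Big),\quad M:=\sup_{Q'\in\mathcal{T}_{k+1}}|\theta_{k,m}(Q')|, \]
where $M<\infty$ because $Q\mapsto\theta_{k,m}(Q)$ is continuous on the compact $\mathcal{T}_{k+1}$ by Remark \ref{cont_high_order}. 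The numerator is split as $b_N-b=(b_N-\hat b_N)+(\hat b_N-b)$, with $\hat b_N$ the empirical mean of $V_{k+1}^m(X^{[p]}_{t_{k+1}},Q)e^m(X^{[p]}_{t_k})$. The pure Monte Carlo term $\hat b_N-b$ is controlled by Proposition \ref{prob_diff}, giving a $\delta^{-s}N^{-s/2}$ tail; the error term $b_N-\hat b_N$ is dominated, through the $1$-Lipschitz bound $|V_{k+1}^{m,N}-V_{k+1}^m|\le|e^m(X_{t_{k+1}})|\cdot\esssup_{Q'}|\theta_{m,N}(Q')-\theta_{m}(Q')|$ at the following coefficient level, by $\big(\tfrac1N\sum_p|e^m(X^{[p]}_{t_{k+1}})|\,|e^m(X^{[p]}_{t_k})|\big)$ times that next-level quantity. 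The averaging factor has all moments under $\bm{\mathcal{H}_{3,\infty}}$ and $\bm{\mathcal{H}_{4,\infty}^{LS}}$ and concentrates by Lemma \ref{lem_zyg}, so the induction hypothesis transfers the same $b(s,\cdot)N^{-s/2}$ rate to this term.

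The remaining factor $\|A_N^{-1}\|$ is handled by a good-event/product split that is also the source of the two regimes. On $G_N:=\{\|A_N-A\|\le\lambda_{\min}(A)/2\}$ Weyl's inequality gives $\lambda_{\min}(A_N)\ge\lambda_{\min}(A)/2$, hence $\|A_N^{-1}\|\le 2/\lambda_{\min}(A)$; the displayed bound then reduces to a constant times an $O(N^{-1/2})$ numerator, and Markov at order $s$ produces a $\delta^{-s}N^{-s/2}$ contribution (dominant for $\delta\le1$). On the complement I do not invert: I bound $\mathbb{P}(\{\Theta_k\ge\delta\}\cap G_N^c)$ by splitting the product $\|A_N^{-1}\|\cdot(\text{numerator})\ge\delta$ into a large-deviation event for $\|A_N-A\|$ (controlled by Lemma \ref{lem_zyg}) and a $\{\text{numerator}\ge\epsilon_0\sqrt{\delta}\}$ event whose Markov bound at order $s$, applied to the $O(N^{-1/2})$ numerator, yields the $\delta^{-s/2}$ factor (dominant for $\delta>1$). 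Collecting both contributions and writing $b(s,\delta)=\delta^s\wedge\delta^{s/2}$ gives the claimed $C_{s,k,m}/(b(s,\delta)N^{s/2})$.

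The hard part is precisely the control of $\|A_N^{-1}\|=\|(A_{m,N}^{k})^{-1}\|$ on the near-singular event $G_N^c$: the empirical Gram matrix is invertible only almost surely and carries no controlled high moments for its inverse, so one cannot simply bound $\mathbb{E}\|A_N^{-1}\|^{s}$ and obtain a clean $\delta^{-s}$ tail. It is exactly this absence of inverse-moment control that forces the weaker $\delta^{-s/2}$ behaviour in the large-$\delta$ regime, and the delicate point of the proof is to arrange the product split so that the contribution of $G_N^c$ still decays in $\delta$ while keeping the optimal $N^{-s/2}$ Monte Carlo rate; the repeated use of Lemma \ref{lem_zyg}, valid for all orders thanks to $\bm{\mathcal{H}_{3,\infty}}$ and $\bm{\mathcal{H}_{4,\infty}^{LS}}$, is what makes every large-deviation term of order $N^{-s/2}$.
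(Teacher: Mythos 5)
Your overall architecture --- backward induction, the resolvent-type decomposition $A_N(\theta_{k,m,N}-\theta_{k,m})=(b_N-b)-(A_N-A)\theta_{k,m}$ (algebraically the same as the paper's use of $A^{-1}-B^{-1}=B^{-1}(B-A)A^{-1}$), the split $b_N-b=(b_N-\hat b_N)+(\hat b_N-b)$ with Proposition \ref{prob_diff} for the pure Monte Carlo part and the Cauchy--Schwarz/induction transfer for the $V^{m,N}_{k+1}-V^{m}_{k+1}$ part, and a good event on which the empirical Gram inverse is controlled --- is exactly the paper's proof. The genuine gap is in your bad-event treatment, which you yourself flag as the hard part. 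With your fixed threshold $G_N=\{\|A_N-A\|\le\lambda_{\min}(A)/2\}$, the sub-event $\{\|A_N^{-1}\|\ge\sqrt{\delta}/\epsilon_0\}$ produced by your product split on $G_N^c$ reduces, via Weyl, to $\{\|A_N-A\|\ge\lambda_{\min}(A)-c\,\epsilon_0/\sqrt{\delta}\}$; for $\delta\ge1$ this is a deviation at an essentially fixed threshold, so Markov plus Lemma \ref{lem_zyg} gives a bound of order $N^{-s/2}$ with \emph{no decay in $\delta$}. Such a term is not dominated by $C_{s,k,m}/(\delta^{s/2}N^{s/2})$ uniformly in $\delta>1$, so your route does not establish the claimed inequality in the large-$\delta$ regime. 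Your (correct) observation that no inverse moments of $A_{m,N}^k$ are available is precisely why no union-bound split of $\|A_N^{-1}\|\cdot(\text{numerator})\ge\delta$ can extract $\delta$-decay from the Gram factor.

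The paper handles this differently, and misattributing the origin of the two regimes is where your sketch diverges. In the paper the Gram deviation is thresholded at $\varepsilon=\rho\delta$, so that $\mathbb{P}\big(\big(\Omega_k^{\varepsilon}\big)^c\big)\le U_{k,m}/(\rho^s\delta^s N^{s/2})$ already carries $\delta^{-s}$ decay (consistent with the good-event inverse bound for $\rho$ small, at least on the regime $\delta\le1$ where $b(s,\delta)=\delta^s$ is claimed). The weaker $\delta^{-s/2}$ regime comes not from the Gram-inverse tail but from the propagated product term: the paper bounds $\mathbb{P}(I_N^1\cdot I_N^2\ge\delta)\le\mathbb{P}(I_N^1\ge\sqrt{\delta})+\mathbb{P}(I_N^2\ge\sqrt{\delta})$, where $I_N^2=\esssup_{Q}|\theta_{k+1,m,N}(Q)-\theta_{k+1,m}(Q)|$, and the induction hypothesis at level $k+1$ applied at threshold $\sqrt{\delta}$ yields exactly the $\delta^{s/2}$ in the denominator; this is why $b(s,\delta)$ degrades from $\delta^s$ to $\delta^{s/2}$ as one descends the induction, independently of the Gram matrix. (One caveat you share with the paper: $I_N^1$ has a nonvanishing mean, so $\mathbb{P}(I_N^1\ge\sqrt{\delta})$ cannot be small for $\delta$ below $(\mathbb{E}\,I_N^1)^2$; a clean writing splits at $\max\big(\sqrt{\delta},\,\mathbb{E}\,I_N^1+1\big)$ and absorbs the bounded-$\delta$ range into the constant.) If you adopt the $\delta$-dependent Gram threshold and source the $\delta^{-s/2}$ regime from the $\sqrt{\delta}$-split of the inductive product rather than from the near-singular Gram event, your argument coincides with the paper's.
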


\begin{proof}
We proceed by a backward induction on $k$. Recall that, for any $Q \in \mathcal{T}_{n-1}$, $V_{n-1}^{m, N}(\cdot, Q) = V_{n-1}^m(\cdot, Q)$. Thus, it follows from triangle inequality,
\begin{align*}
\Big|\theta_{n-2, m, N}(Q) - \theta_{n-2, m}(Q) \Big| &= \Big| \big(A_{m, N}^{n-2}\big)^{-1} \frac{1}{N} \sum_{p = 1}^{N} e^m(X_{t_{n-2}}^{[p]}) V^m_{n-1}(X_{t_{n-1}}^{[p]}, Q) -  \big(A_{m}^{n-2}\big)^{-1} \mathbb{E}\left(e^m(X_{t_{n-2}}) V^m_{n-1}(X_{t_{n-1}}, Q) \right)\Big|\\
&\le \Big|\big(A_{m, N}^{n-2}\big)^{-1} \Big(\frac{1}{N} \sum_{p = 1}^{N} e^m(X_{t_{n-2}}^{[p]}) V^m_{n-1}(X_{t_{n-1}}^{[p]}, Q) - \mathbb{E}\big(e^m(X_{t_{n-2}}) V^m_{n-1}( X_{t_{n-1}}, Q) \big)  \Big) \Big| \\
&+ \left|\left(\big(A_{m, N}^{n-2}\big)^{-1} - \big(A_{m}^{n-2}\big)^{-1} \right) \cdot \mathbb{E}\big(e^m(X_{t_{n-2}}) V^m_{n-1}(X_{t_{n-1}}, Q) \big) \right|\\
&= \Big|\big(A_{m, N}^{n-2}\big)^{-1} \Big(\frac{1}{N} \sum_{p = 1}^{N} e^m(X_{t_{n-2}}^{[p]}) V^m_{n-1}(X_{t_{n-1}}^{[p]}, Q) - \mathbb{E}(e^m(X_{t_{n-2}}) V^m_{n-1}(X_{t_{n-1}}, Q))  \Big) \Big| \\
&+ \left|\left(\big(A_{m}^{n-2}\big)^{-1}\big(A_{m}^{n-2} - A_{m, N}^{n-2}\big)\big(A_{m, N}^{n-2}\big)^{-1} \right) \mathbb{E}\left(e^m(X_{t_{n-2}}) V^m_{n-1}(X_{t_{n-1}}, Q) \right) \right|,
\end{align*}

\noindent
where in the last equality we used the matrix identity $A^{-1} - B^{-1} = B^{-1} (B-A) A^{-1}$ for all non-singular matrices $A, B$. Hence taking the essential supremum and keeping in mind that the matrix norm $|\cdot|$ is submultiplicative yields,
\begin{align*}
& \esssup_{Q \in \mathcal{Q}_{n-2}} \hspace{0.1cm} \Big|\theta_{n-2, m, N}(Q) - \theta_{n-2, m}(Q) \Big| \\
&\le \Big|\big(A_{m, N}^{n-2}\big)^{-1} \Big| \cdot \esssup_{Q \in \mathcal{Q}_{n-2}} \hspace{0.1cm} \Big|\frac{1}{N} \sum_{p = 1}^{N} e^m(X_{t_{n-2}}^{[p]}) V^m_{n-1}(X_{t_{n-1}}^{[p]}, Q) - \mathbb{E}\big(e^m(X_{t_{n-2}}) V^m_{n-1}(X_{t_{n-1}}, Q)\big) \Big|\\
&\quad + C_{n-2} \cdot \Big|\big(A_{m}^{n-2}\big)^{-1}\big(A_{m}^{n-2} - A_{m, N}^{n-2}\big)\big(A_{m, N}^{n-2}\big)^{-1}\Big|,
\end{align*}

\noindent
where $C_{n-2} :=  \underset{Q \in \mathcal{T}_{n-1}}{\sup} \hspace{0.1cm} \big|\mathbb{E}\left(e^m(X_{t_{n-2}}) V^m_{n-1}(X_{t_{n-1}}, Q) \right)\big| < +\infty$. For any $\varepsilon > 0$ and $k = 0, \ldots, n-2$, denote by $\Omega_k^{\varepsilon} := \big\{\big|A_{m, N}^{k} - A_{m}^{k} \big| \le \varepsilon \big\}$. Then one may choose $\varepsilon$ such that $\big|(A_{m, N}^{k})^{-1}\big| \le 2 \big|(A_{m}^{k})^{-1} \big|$ on $\Omega_{k}^{\varepsilon}$. Thus there exists positive constants $K_1, K_2$ such that on $\Omega_{n-2}^{\varepsilon}$,
\begin{align*}
& \esssup_{Q \in \mathcal{Q}_{n-2}} \hspace{0.1cm} \Big|\theta_{n-2, m, N}(Q) - \theta_{n-2, m}(Q) \Big|\\
&\le  K_1 \cdot \esssup_{Q \in \mathcal{Q}_{n-2}} \hspace{0.1cm}\Big| \frac{1}{N} \sum_{p = 1}^{N} e^m(X_{t_{n-2}}^{[p]}) V^m_{n-1}(X_{t_{n-1}}^{[p]}, Q) - \mathbb{E}\big(e^m(X_{t_{n-2}}) V^m_{n-1}( X_{t_{n-1}}, Q) \big) \Big| + K_2 \cdot \varepsilon.
\end{align*}

\noindent
Therefore, the law of total probability yields,
\begin{align*}
&\mathbb{P}\Big(\esssup_{Q \in \mathcal{Q}_{n-2}} \hspace{0.1cm} \big|\theta_{n-2, m, N}(Q) - \theta_{n-2, m}(Q)  \big| \ge \delta \Big)\\ 
&\le \mathbb{P}\left(\esssup_{Q \in \mathcal{Q}_{n-2}} \hspace{0.1cm}\Big|\frac{1}{N} \sum_{p = 1}^{N} e^m(X_{t_{n-2}}^{[p]}) V^m_{n-1}(X_{t_{n-1}}^{[p]}, Q)  - \mathbb{E}\big(e^m(X_{t_{n-2}}) V^m_{n-1}( X_{t_{n-1}}, Q) \big) \Big| \ge \frac{\delta - K_2 \cdot \varepsilon}{K_1}\right) + \mathbb{P}\Big(\big(\Omega_{n-2}^{\varepsilon}\big)^{c}\Big)\\
&\le \frac{D_{s, n-2, m}}{(\delta - K_2 \cdot \varepsilon)^s N^{\frac{s}{2}}} + \frac{U_{n-2, m}}{\varepsilon^s N^{\frac{s}{2}}},
\end{align*}
where the majoration for the first probability in the second-last line comes from Proposition \ref{prob_diff} and constant $D_{a, n-2, m}$ embeds constant $K_1$. The majoration of $\mathbb{P}\Big(\big(\Omega_{n-2}^{\varepsilon}\big)^{c}\Big)$ is straightforward using successively Markov inequality and Lemma \ref{lem_zyg}. Then, choosing $\varepsilon = \rho \delta$ for some $\rho > 0$ sufficiently small yields,
 $$\mathbb{P}\left(\esssup_{Q \in \mathcal{Q}_{n-2}} \hspace{0.1cm}\big|\theta_{n-2, m, N}(Q) - \theta_{n-2, m}(Q)  \big| \ge \delta \right) \le \frac{C_{s, n-2, m}}{\delta^s N^{\frac{s}{2}}} \le     \left\{
    \begin{array}{ll}
        \frac{C_{s, n-2, m}}{\delta ^{s} N^{s/2}} \hspace{0.9cm} \text{if} \hspace{0.2cm} \delta \in (0, 1],\\
        \frac{C_{s, n-2, m}}{\delta ^{s/2} N^{s/2}} \hspace{0.6cm} \text{else}
    \end{array}
\right..$$
for some positive constant $C_{a, n-2, m}$. Now let us assume that the proposition holds for $k+1$ and show that it also holds for $k$. For any $Q \in \mathcal{T}_{k+1}$, it follows from triangle inequality that,
\begin{align*}
\big|\theta_{k, m, N}(Q) - \theta_{k, m}(Q) \big| &\le \big|(A_{m, N}^{k})^{-1}\big| \cdot \Big|\frac{1}{N} \sum_{p = 1}^{N} e^m(X_{t_k}^{[p]}) \Big(V^{m, N}_{k+1}( X_{t_{k+1}}^{[p]}, Q) - V^{m}_{k+1}(X_{t_{k+1}}^{[p]}, Q)\Big) \Big|\\
&\quad  +\big|(A_{m, N}^{k})^{-1}\big| \cdot  \Big|\frac{1}{N} \sum_{p = 1}^{N} e^m(X_{t_k}^{[p]})V^{m}_{k+1}(X_{t_{k+1}}^{[p]}, Q) - \mathbb{E}\big(e^m(X_{t_{k}}) V^m_{k+1}(X_{t_{k+1}}, Q) \big)\Big|\\
&\quad +\Big|(A_{m}^{k})^{-1}(A_{m}^{k} - A_{m, N}^{k})(A_{m, N}^{k})^{-1} \cdot \mathbb{E}\big(e^m(X_{t_{k}}) V^m_{k+1}(X_{t_{k+1}}, Q)\big) \Big|\\
&\le \big|(A_{m, N}^{k})^{-1}\big| \cdot \frac{1}{N} \sum_{p = 1}^{N} \big|e^m(X_{t_k}^{[p]})\big| \cdot \Big|V^{m, N}_{k+1}(X_{t_{k+1}}^{[p]}, Q) - V^{m}_{k+1}(X_{t_{k+1}}^{[p]}, Q) \Big|\\
&\quad + \big|(A_{m, N}^{k})^{-1}\big| \cdot  \Big|\frac{1}{N} \sum_{p = 1}^{N} e^m(X_{t_k}^{[p]})V^{m}_{k+1}(X_{t_{k+1}}^{[p]}, Q) - \mathbb{E}\left(e^m(X_{t_{k}}) V^m_{k+1}(X_{t_{k+1}}, Q) \right)\Big|\\
&\quad + \Big|(A_{m}^{k})^{-1}(A_{m}^{k} - A_{m, N}^{k})(A_{m, N}^{k})^{-1} \cdot \mathbb{E}\big(e^m(X_{t_{k}}) V^m_{k+1}(X_{t_{k+1}}, Q) \big) \Big|.
\end{align*}
But for all $1 \le p \le N$, Cauchy-Schwartz inequality yields,
\begin{align*}
\Big|V_{k+1}^{m, N}(X_{t_{k+1}}^{[p]}, Q) - V_{k+1}^{m}(X_{t_{k+1}}^{[p]}, Q) \Big| &\le \esssup_{q \in \mathbb{A}_{k+1}(Q)} \hspace{0.1cm} \langle \theta_{k+1, m, N}(Q+q) - \theta_{k+1, m}(Q+q), e^m(X_{t_{k+1}}^{[p]}) \rangle\\
&\le \big|e^m(X_{t_{k+1}}^{[p]}) \big| \cdot \esssup_{q \in \mathbb{A}_{k+1}(Q)} \hspace{0.1cm} \big|\theta_{k+1, m, N}(Q+q) - \theta_{k+1, m}(Q+q)\big|.
\end{align*}
Thus,
\begin{align*}
\big|\theta_{k, m, N}(Q) - \theta_{k, m}(Q) \big| &\le \left(  \frac{\big|(A_{m, N}^{k})^{-1}\big|}{N} \sum_{p = 1}^{N} \big|e^m(X_{t_k}^{[p]})\big| \cdot 
\big|e^m(X_{t_{k+1}}^{[p]}) \big| \right) \esssup_{q \in \mathbb{A}_{k+1}(Q)}  \big|\theta_{k+1, m, N}(Q+q) - \theta_{k+1, m}(Q+q)\big|\\
&\quad + \big|(A_{m, N}^{k})^{-1}\big| \cdot  \Big|\frac{1}{N} \sum_{p = 1}^{N} e^m(X_{t_k}^{[p]})V^{m}_{k+1}(X_{t_{k+1}}^{[p]}, Q) - \mathbb{E}\left(e^m(X_{t_{k}}) V^m_{k+1}(X_{t_{k+1}}, Q) \right)\Big|\\
&\quad + \Big|(A_{m}^{k})^{-1}(A_{m}^{k} - A_{m, N}^{k})(A_{m, N}^{k})^{-1} \cdot \mathbb{E}\big(e^m(X_{t_{k}}) V^m_{k+1}(X_{t_{k+1}}, Q) \big) \Big|.
\end{align*}
Therefore, on $\Omega_{k}^{\varepsilon}$, there exists some positive constants $K_1, K_2, K_3$ such that,
\begin{align*}
& \esssup_{Q \in \mathcal{Q}_{k}} \hspace{0.1cm} \big|\theta_{k, m, N}(Q) - \theta_{k, m}(Q) \big| \\
&\le K_1\underbrace{\left(\frac{1}{N} \sum_{p = 1}^{N} \big|e^m(X_{t_k}^{[p]})\big| \cdot \big|e^m(X_{t_{k+1}}^{[p]}) \big|  \right)}_{I_N^1} \cdot 
\underbrace{\esssup_{Q \in \mathcal{Q}_{k+1}} \hspace{0.1cm} \big|\theta_{k+1, m, N}(Q) - \theta_{k+1, m}(Q)\big|}_{I_N^2}\\
&\quad + K_2 \cdot \underbrace{ \esssup_{Q \in \mathcal{Q}_{k+1}} \hspace{0.1cm} \Big|\frac{1}{N} \sum_{p = 1}^{N} e^m(X_{t_k}^{[p]})V^{m}_{k+1}(X_{t_{k+1}}^{[p]}, Q) - \mathbb{E}\left(e^m(X_{t_{k}}) V^m_{k+1}(X_{t_{k+1}}, Q) \right)\Big|}_{I_N^3}+ K_3 \cdot \varepsilon,
\end{align*}
where to obtain the coefficient $K_3$ in the last inequality, we used the fact that,
$$\esssup_{Q \in \mathcal{Q}_{k+1}} \hspace{0.1cm} \mathbb{E}\left(e^m(X_{t_{k}}) V^m_{k+1}(X_{t_{k+1}}, Q) \right) \le \underset{Q \in \mathcal{T}_{k+1}}{\sup} \hspace{0.1cm} \mathbb{E}\left(e^m(X_{t_{k}}) V^m_{k+1}(X_{t_{k+1}}, Q) \right) < + \infty.$$
The term $I_N^3$ can be handled using Proposition \ref{prob_diff}. Then, it suffices to prove that,
$$\mathbb{P}\big( I_N^1 \cdot I_N^2 \ge \delta \big) \le \frac{K}{\delta^a \cdot N^{a/2}}$$
for some positive constant $K$. But we have,
\begin{equation}
\label{bonf_ine}
\mathbb{P}\big( I_N^1 \cdot I_N^2\ge \delta \big) = 1 - \mathbb{P}\big( I_N^1 \cdot I_N^2 \le \delta \big) \le 1 - \mathbb{P}\big( I_N^1  \le \sqrt{\delta}; I_N^2 \le \sqrt{\delta} \big) \le  \mathbb{P}\big( I_N^1 \ge \sqrt{\delta} \big) + \mathbb{P}\big(I_N^2 \ge \sqrt{\delta} \big).
\end{equation}
Moreover, by the induction assumption, we know that, there exists a positive constant $B_{a, k, m}$ such that,
$$\mathbb{P}\big(I_N^2 \ge \sqrt{\delta} \big) \le \frac{B_{s, k, m}}{\delta ^{s/2} N^{s/2}} \le     \left\{
    \begin{array}{ll}
        \frac{B_{s, k, m}}{\delta ^{s} N^{s/2}} \hspace{0.9cm} \text{if} \hspace{0.2cm} \delta \in (0, 1],\\
        \frac{B_{s, k, m}}{\delta ^{s/2} N^{s/2}} \hspace{0.6cm} \text{otherwise.}
    \end{array}
\right.$$
In addition, it follows from Markov inequality and Lemma \ref{lem_zyg}  that there exists a positive constant $M_{a, k, m}$ such that
$$\mathbb{P}\big( I_N^1 \ge \sqrt{\delta} \big) \le \frac{M_{s, k, m}}{\delta^s N^{s/2}} \le     \left\{
    \begin{array}{ll}
        \frac{M_{s, k, m}}{\delta ^{s} N^{s/2}} \hspace{0.9cm} \text{if} \hspace{0.2cm} \delta \in (0, 1],\\
        \frac{M_{s, k, m}}{\delta ^{s/2} N^{s/2}} \hspace{0.6cm} \text{otherwise.}
    \end{array}
\right.$$
Hence, there exists a positive constant $C_{s, k, m}$ such that,
$$\mathbb{P}\big( I_N^1 \cdot I_N^2 \ge \delta \big) \le  \left\{
    \begin{array}{ll}
        \frac{C_{s, k, m}}{\delta ^{s} N^{s/2}} \hspace{0.9cm} \text{if} \hspace{0.2cm} \delta \in (0, 1],\\
        \frac{C_{s, k, m}}{\delta ^{s/2} N^{s/2}} \hspace{0.6cm} \text{otherwise}
    \end{array}
\right.$$

\noindent
and this completes the proof.
\end{proof}

We now state the last result of this paper concerning a deviation inequality involving the actual swing value function.

\begin{Proposition}
Consider assumptions $\bm{\mathcal{H}_{3, \infty}}$ and $\bm{\mathcal{H}_{4, \infty}^{LS}}$. For all $k=0, \ldots, n-2$, $\delta > 0$ and $s \ge 2$ there exists a positive constant $C_{s, k, m}$ such that,
$$\mathbb{P}\left(\esssup_{Q \in \mathcal{Q}_k} \hspace{0.1cm} \Big|V_k^{m, N}\big(X_{t_k}, Q\big) - V_k^{m}\big(X_{t_k}, Q\big)   \Big| \ge \delta \right) \le \frac{C_{s, k, m}}{b(s, \delta) \cdot N^\frac{s}{2}}.$$
\end{Proposition}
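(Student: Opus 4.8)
The plan is to argue by backward induction on $k$, in the same spirit as the proofs of Propositions \ref{cvg_hilbert_bis} and \ref{tight_coeff_reg}. For $k=n-1$ the two value functions coincide on $\mathcal{T}_{n-1}$ (see \eqref{estim_second_approx}), so the statement holds trivially. Assuming it at level $k+1$, I would first reduce the value-function deviation to a coefficient deviation. Since $\Phi_m(X_{t_k};\theta)=\langle\theta,e^m(X_{t_k})\rangle$ is linear in $\theta$, the elementary inequality $|\sup_i a_i-\sup_i b_i|\le\sup_i|a_i-b_i|$ together with Cauchy--Schwarz gives, pointwise,
$$\big|V^{m,N}_k(X_{t_k},Q)-V^{m}_k(X_{t_k},Q)\big|\le\big|e^m(X_{t_k})\big|\cdot\esssup_{q\in Adm(t_k,Q)}\big|\theta_{k+1,m,N}(Q+q)-\theta_{k+1,m}(Q+q)\big|.$$
Taking the essential supremum over $Q\in\mathcal{Q}_k$ and using $\mathcal{F}^X_{t_k}\subset\mathcal{F}^X_{t_{k+1}}$ together with $Q+q\in\mathcal{T}_{k+1}$ (so that $Q+q$ ranges within $\mathcal{Q}_{k+1}$), I would obtain
$$\esssup_{Q\in\mathcal{Q}_k}\big|V^{m,N}_k(X_{t_k},Q)-V^{m}_k(X_{t_k},Q)\big|\le\big|e^m(X_{t_k})\big|\cdot\esssup_{Q\in\mathcal{Q}_{k+1}}\big|\theta_{k+1,m,N}(Q)-\theta_{k+1,m}(Q)\big|=:XY.$$

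Next I would bound the tail of the product $XY$. The coefficient factor $Y=\esssup_{Q\in\mathcal{Q}_{k+1}}|\theta_{k+1,m,N}(Q)-\theta_{k+1,m}(Q)|$ is exactly the quantity controlled by Proposition \ref{tight_coeff_reg}: for every order $s'\ge2$ one has $\mathbb{P}(Y\ge\eta)\le C_{s',k+1,m}\,b(s',\eta)^{-1}N^{-s'/2}$. I would combine this with the multiplicative factor $X=|e^m(X_{t_k})|$, which under $\bm{\mathcal{H}_{4,\infty}^{LS}}$ has finite moments of every order. The route I would actually take is Markov's inequality at order $s$ applied to $XY$, followed by Cauchy--Schwarz,
$$\mathbb{P}(XY\ge\delta)\le\frac{\mathbb{E}\big[X^sY^s\big]}{\delta^s}\le\frac{\big\|e^m(X_{t_k})\big\|_{2s}^{\,s}\,\|Y\|_{2s}^{\,s}}{\delta^s},$$
so that everything reduces to the uniform estimate $\|Y\|_{2s}=\mathcal{O}(N^{-1/2})$. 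Such an estimate follows from Proposition \ref{tight_coeff_reg} by integrating its tail across orders, the finiteness of the relevant suprema over the compact set $\mathcal{T}_{k+1}$ being guaranteed by the continuity and integrability recorded in Remark \ref{cont_high_order}. Since $\delta^{-s}\le b(s,\delta)^{-1}$ in both regimes (recall $b(s,\delta)=\delta^{s}$ for $\delta\le1$ and $\delta^{s/2}$ for $\delta>1$), the resulting bound $C\,\delta^{-s}N^{-s/2}$ immediately implies the stated inequality, and one sets $C_{s,k,m}$ to collect the constants.

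The step I expect to be the main obstacle is the treatment of the $N$-independent factor $|e^m(X_{t_k})|$. The tempting product splitting $\{XY\ge\delta\}\subset\{X\ge\sqrt\delta\}\cup\{Y\ge\sqrt\delta\}$, in the spirit of \eqref{bonf_ine}, handles $\mathbb{P}(Y\ge\sqrt\delta)$ cleanly through Proposition \ref{tight_coeff_reg} applied at order $2s$ (for which $b(2s,\sqrt\delta)$ reproduces exactly $b(s,\delta)$ in both regimes, while $N^{-s}\le N^{-s/2}$), but it breaks down on the term $\mathbb{P}(X\ge\sqrt\delta)$, which does not decay in $N$; this is precisely the delicate point already present for the average $I_N^1$ in the proof of Proposition \ref{tight_coeff_reg}. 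The factor $|e^m(X_{t_k})|$ must therefore be kept paired with the $\mathcal{O}(N^{-1/2})$ coefficient deviation through Hölder's inequality rather than isolated, so that its bounded, $N$-independent moments only multiply the decaying factor $\|Y\|_{2s}$. Carrying out the tail-to-moment conversion of Proposition \ref{tight_coeff_reg} carefully, so as not to lose the rate near the origin, is the crux of the argument; once $\|Y\|_{2s}=\mathcal{O}(N^{-1/2})$ is secured, the remaining estimates are routine applications of Markov's inequality and Lemma \ref{lem_zyg}.
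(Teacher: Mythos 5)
Your opening reduction is exactly the paper's: the inequality $|\sup_i a_i-\sup_i b_i|\le\sup_i|a_i-b_i|$ plus Cauchy--Schwarz, and the inclusion $Q+q\in\mathcal{Q}_{k+1}$, give
$\esssup_{Q\in\mathcal{Q}_k}\big|V^{m,N}_k(X_{t_k},Q)-V^{m}_k(X_{t_k},Q)\big|\le |e^m(X_{t_k})|\cdot Y$ with $Y:=\esssup_{Q\in\mathcal{Q}_{k+1}}|\theta_{k+1,m,N}(Q)-\theta_{k+1,m}(Q)|$. (The backward induction you announce is never actually used after this step --- Proposition \ref{tight_coeff_reg} carries all the $N$-dependence --- so you can simply drop it, as the paper does.) Where you genuinely diverge is in the treatment of the product. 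The paper applies the split of \eqref{bonf_ine} and bounds $\mathbb{P}\big(|e^m(X_{t_k})|\ge\sqrt{\delta}\big)$ by $K^1_{s,k,m}\,\delta^{-s/2}N^{-s/2}$, attributed to Markov's inequality; you are right that this cannot be correct as written, since $|e^m(X_{t_k})|$ does not depend on $N$, so its tail is a fixed positive constant and Markov yields only $\mathbb{E}|e^m(X_{t_k})|^{s}\,\delta^{-s/2}$ with no power of $N$. Your repair --- keeping the $N$-free factor paired with the decaying one via $\mathbb{P}(XY\ge\delta)\le\delta^{-s}\|X\|_{2s}^{s}\|Y\|_{2s}^{s}$ and extracting $\|Y\|_{2s}=\mathcal{O}(N^{-1/2})$ from the tails of Proposition \ref{tight_coeff_reg} --- is sound, and the tail-to-moment conversion you flag as the crux does close: writing $\mathbb{E}\,Y^{2s}=\int_0^\infty 2s\,\delta^{2s-1}\,\mathbb{P}(Y\ge\delta)\,d\delta$, cap the probability at $1$ on $\big(0,N^{-1/2}\big]$, use the tail bound at an order $s_1>2s$ on $\big(N^{-1/2},1\big]$ and at an order $s_2>4s$ on $(1,\infty)$ (both available under $\bm{\mathcal{H}_{3,\infty}}$ and $\bm{\mathcal{H}_{4,\infty}^{LS}}$); all three pieces are $\mathcal{O}(N^{-s})$. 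Since $\delta^{-s}\le b(s,\delta)^{-1}$ in both regimes, the stated inequality follows, and your side remark that $b(2s,\sqrt{\delta})=b(s,\delta)$ is also accurate.

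One caveat worth recording: the defect you diagnose is not hypothetical --- it is the step the paper actually takes --- and the same issue already infects the proof of Proposition \ref{tight_coeff_reg} itself, where $I_N^1$ is an empirical mean converging a.s.\ to a positive constant $\mu$, so $\mathbb{P}(I_N^1\ge\sqrt{\delta})$ cannot decay in $N$ for $\delta\le\mu^2$ and the claimed bound $M_{s,k,m}\,\delta^{-s}N^{-s/2}$ fails for small $\delta$. If you cite Proposition \ref{tight_coeff_reg} as a black box, your argument for the present statement is complete; for the whole chain to be rigorous, the cleanest fix is to propagate, along the backward induction of that proposition, the moment bounds $\big\|\esssup_{Q\in\mathcal{Q}_{k}}|\theta_{k,m,N}(Q)-\theta_{k,m}(Q)|\big\|_r=\mathcal{O}(N^{-1/2})$ for every $r\ge2$ (pairing the $N$-free factors by H\"older there too, exactly as in your argument, with Lemma \ref{lem_zyg} and Remark \ref{cont_high_order} supplying the uniform-in-$Q$ moment control), converting to tails by Markov only at the end. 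In short: same reduction as the paper, but at the key step your route is not merely an alternative --- it is the one that actually works.
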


\begin{proof}
Using the inequality, $|\underset{i \in I}{\sup} \hspace{0.1cm} a_i - \underset{i \in I}{\sup} \hspace{0.1cm} b_i| \hspace{0.1cm} \le \hspace{0.1cm} \underset{i \in I}{\sup} \hspace{0.1cm} |a_i-b_i|$ and then Cauchy-Schwartz' inequality, we have,
$$\esssup_{Q \in \mathcal{Q}_k} \hspace{0.1cm} \Big| V_k^{m, N}\big(X_{t_k}, Q\big) - V_k^{m}\big(X_{t_k}, Q\big) \Big| \le \big|e^m(X_{t_k}) \big| \cdot \esssup_{Q \in \mathcal{Q}_{k+1}} \hspace{0.1cm} \Big| \theta_{k + 1, m, N}(Q) - \theta_{k+1, m}(Q) \Big|.$$

\noindent
Thus, using the same argument as in \eqref{bonf_ine}, we get,
\begin{align*}
& \mathbb{P}\left( \esssup_{Q \in \mathcal{Q}_k} \hspace{0.1cm} \Big| V_k^{m, N}\big(X_{t_k}, Q\big) - V_k^{m}\big(X_{t_k}, Q\big) \Big| \ge \delta \right)\\  
&\le \mathbb{P}\left( \big|e^m(X_{t_k}) \big| \cdot \esssup_{Q \in \mathcal{Q}_{k+1}} \hspace{0.1cm} \Big| \theta_{k + 1, m, N}(Q) - \theta_{k+1, m}(Q) \Big| \ge \delta \right)\\
&\le \mathbb{P}\Big( \big|e^m(X_{t_k}) \big| \ge \sqrt{\delta} \Big) + \mathbb{P}\left( \esssup_{Q \in \mathcal{Q}_{k+1}} \hspace{0.1cm} \Big| \theta_{k + 1, m, N}(Q) - \theta_{k+1, m}(Q) \Big| \ge \sqrt{\delta} \right)\\
&\le \frac{K_{s, k, m}^1}{\delta^{s/2} \cdot N^{s/2}} + \frac{K_{s, k, m}^2}{b(s, \delta) \cdot N^{s/2}} \le \left\{
    \begin{array}{ll}
        \frac{K_{s, k, m}}{\delta ^{s} N^{s/2}} \hspace{0.9cm} \text{if} \hspace{0.2cm} \delta \in (0, 1]\\
        \frac{K_{s, k, m}}{\delta ^{s/2} N^{s/2}} \hspace{0.6cm} \text{otherwise}
    \end{array}
\right.
\end{align*}

\noindent
for some positive constant $K_{s, k, m}$, where the constant $K_{s, k, m}^1$ comes from Markov inequality (enabled by assumption $\bm{\mathcal{H}_{4, \infty}^{LS}}$). The existence of the positive constant $K_{s, k, m}^2$ results from Proposition \ref{tight_coeff_reg} (enabled by assumptions $\bm{\mathcal{H}_{3, \infty}}$ and $\bm{\mathcal{H}_{4, \infty}^{LS}}$). The coefficient $b(a, \delta)$ is also defined in Proposition \ref{tight_coeff_reg}. This completes the proof.
\end{proof}

\begin{remark}
The preceding proposition entails the following result as a straightforward corollary. For all $k = 0, \ldots, n-1$ and for any $Q \in \mathcal{T}_{k}$, we have,
\begin{equation*}
\mathbb{P}\left( \big|V_k^{m, N}\big(X_{t_k}, Q\big) - V_k^{m}\big(X_{t_k}, Q\big)   \big| \ge \delta \right) \le \frac{C_{s, k, m}}{b(s, \delta) \cdot N^\frac{s}{2}}.
\end{equation*}

\noindent
If we assume that $\underset{m \ge 1}{\sup} \hspace{0.1cm} C_{s, k, m} <  +\infty$, then for any $s \ge 2$, we have the following uniform convergence,
\begin{equation}
\label{prob_prem_diff}
\lim\limits_{N \rightarrow +\infty} \hspace{0.1cm} \underset{m \ge 1}{\sup} \hspace{0.1cm} \underset{Q \in \mathcal{T}_{k}}{\sup} \hspace{0.1cm} \mathbb{P}\left( \big|V_k^{m, N}\big(X_{t_k}, Q\big) - V_k^{m}\big(X_{t_k}, Q\big)   \big| \ge \delta \right) = 0.
\end{equation}

\noindent
But it follows from triangle inequality that,
\begin{align*}
& \mathbb{P}\left( \big|V_k^{m, N}\big(X_{t_k}, Q\big) - V_k\big(X_{t_k}, Q\big)   \big| \ge \delta \right)\\
&= 1 - \mathbb{P}\left( \big|V_k^{m, N}\big(X_{t_k}, Q\big) - V_k\big(X_{t_k}, Q\big)   \big| \le \delta \right)\\
&\le 1 - \mathbb{P}\left(\Big\{ \big|V_k^{m, N}(X_{t_k}, Q) - V_k^{m}(X_{t_k}, Q)   \big| \le \delta/2 \Big\} \cap \Big\{ \big|V_k^{m}(X_{t_k}, Q) - V_k(X_{t_k}, Q)   \big| \le \delta/2\Big\}\right)\\
&\le \mathbb{P}\Big( \big|V_k^{m, N}(X_{t_k}, Q) - V_k^{m}(X_{t_k}, Q)   \big| \ge \delta/2 \Big) + \mathbb{P}\Big( \big|V_k^{m}(X_{t_k}, Q) - V_k(X_{t_k}, Q)   \big| \ge \delta/2\Big)\\
&\le \mathbb{P}\Big( \big|V_k^{m, N}(X_{t_k}, Q) - V_k^{m}(X_{t_k}, Q)   \big| \ge \delta/2 \Big) + 4 \cdot \frac{\big\|V_k^{m}(X_{t_k}, Q) - V_k(X_{t_k}, Q) \big\|_2^2}{\delta^2},
\end{align*}

\noindent
where in the last inequality, we used Markov inequality. Then using Proposition \ref{cvg_m_basis} and result \eqref{prob_prem_diff} yields,
\begin{equation*}
\lim\limits_{m \rightarrow +\infty} \hspace{0.1cm} \lim\limits_{N \rightarrow +\infty} \underset{Q \in \mathcal{T}_k}{\sup} \hspace{0.1cm} \mathbb{P}\left( \big|V_k^{m, N}\big(X_{t_k}, Q\big) - V_k\big(X_{t_k}, Q\big)   \big| \ge \delta \right) = 0.
\end{equation*}

\noindent
The latter result implies that for a well-chosen and sufficiently large regression basis, the limit,
$$\lim\limits_{N \rightarrow +\infty} \underset{Q \in \mathcal{T}_k}{\sup} \hspace{0.1cm} \mathbb{P}\left( \big|V_k^{m, N}\big(X_{t_k}, Q\big) - V_k\big(X_{t_k}, Q\big)   \big| \ge \delta \right)$$

\noindent
may be arbitrary small insuring in some sense the theoretical effectiveness of the linear regression procedure in the context of swing pricing.
\end{remark}

\section*{Acknowledgments}
The author would like to thank Gilles Pagès and Vincent Lemaire for fruitful discussions. The author would also like to express his gratitude to Engie Global Markets for funding his PhD thesis.

\vspace{0.5cm}

\footnotesize
\noindent
\textbf{Funding}. The PhD thesis of the author is funded by the French ANRT (Association Nationale Recherche Technologie) and Engie Global Markets.

\subsection*{Declarations}

\textbf{Conflict of Interest}. The authors declare that they have no competing interests as defined by Springer, or other interests that might be perceived to influence the results and/or discussion reported in this paper.

\vspace{0.2cm}
\noindent
\textbf{Availability of data and material}. Not applicable.

\bibliographystyle{alpha}
\bibliography{biblio.bib}

\appendix

\section{Appendix}

\subsection{Some useful results}
We present some materials used in this paper. The following lemma allows to show the continuity of the supremum of a continuous function when the supremum is taken over a set depending of the variable of interest.

\begin{lemma}
\label{conti_sup}
Consider a continuous function $f : \mathbb{R} \to \mathbb{R}$ and let $A$ and $B$ be two non-increasing and continuous real-valued functions defined on $\mathbb{R}$ such that for all $Q \in \mathbb{R}, A(Q) \le B(Q)$. Then the function
$$g: Q \in \mathbb{R} \mapsto \underset{q \in [A(Q), B(Q)]}{\sup} \hspace{0.1cm} f(q)$$

\noindent
is continuous.
\end{lemma}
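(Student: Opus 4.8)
The plan is to realize $g$ as the value function of a parametric maximization problem and invoke Berge's maximum theorem (Proposition \ref{max_th}), exactly in the spirit of Proposition \ref{adm_set_continuous}. Define the correspondence $\Gamma : Q \in \mathbb{R} \mapsto [A(Q), B(Q)]$. Since $A(Q) \le B(Q)$ with both endpoints finite, $\Gamma(Q)$ is a non-empty compact interval, so $\Gamma$ is compact-valued. If I can show $\Gamma$ is continuous (both lower and upper hemicontinuous), then, because $f$ is continuous and does not itself depend on $Q$, Berge's theorem yields at once that $g(Q) = \sup_{q \in \Gamma(Q)} f(q)$ is continuous. This reduces the statement to a hemicontinuity verification identical in structure to the one carried out for $\Gamma_k$ in Proposition \ref{adm_set_continuous}.

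To verify hemicontinuity I would use the sequential characterization (Proposition \ref{seq_carac_hemicont}). Fix $Q_0$ and a sequence $Q_n \to Q_0$. For upper hemicontinuity, take any $y_n \in \Gamma(Q_n)$; since $A, B$ are continuous they are bounded near $Q_0$, so $(y_n)$ is bounded, and along any convergent subsequence $y_{n_k} \to y$, passing to the limit in $A(Q_{n_k}) \le y_{n_k} \le B(Q_{n_k})$ gives $A(Q_0) \le y \le B(Q_0)$, i.e. $y \in \Gamma(Q_0)$; compactness of the target interval then lets me conclude via the converse direction, as in Proposition \ref{adm_set_continuous}. For lower hemicontinuity, given $y \in \Gamma(Q_0)$ I build an explicit feasible approximating sequence by truncation, $y_n := \min(\max(y, A(Q_n)), B(Q_n))$, which lies in $\Gamma(Q_n)$; since $A(Q_n) \to A(Q_0) \le y$ and $y \le B(Q_0)$ with $B(Q_n) \to B(Q_0)$, one checks directly that $y_n \to y$.

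Alternatively, a fully self-contained argument avoids Berge and proves continuity of $g$ at $Q_0$ along $Q_n \to Q_0$ by splitting into two inequalities. For $\liminf_n g(Q_n) \ge g(Q_0)$, pick a maximizer $q^\star \in \Gamma(Q_0)$ of $f$ and use the truncated sequence $y_n$ above, so that $g(Q_n) \ge f(y_n) \to f(q^\star) = g(Q_0)$. For $\limsup_n g(Q_n) \le g(Q_0)$, pick maximizers $q_n^\star \in \Gamma(Q_n)$ with $g(Q_n) = f(q_n^\star)$ (they exist since $f$ is continuous on a compact interval), then from any subsequence realizing the $\limsup$ extract a further subsequence $q_{n_k}^\star \to q^\star \in \Gamma(Q_0)$ using boundedness and the limit relations for $A, B$, and conclude $\limsup_n g(Q_n) = \lim_k f(q_{n_k}^\star) = f(q^\star) \le g(Q_0)$.

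The main obstacle is the upper bound $\limsup_n g(Q_n) \le g(Q_0)$: because the feasible set moves with $Q$, one cannot pass the supremum to the limit directly, and the argument genuinely needs the compactness of the intervals together with the upper hemicontinuity of $\Gamma$, since the maximizers could a priori drift toward the moving boundary. The lower bound is comparatively soft, handled by the explicit truncation, which uses only continuity of the endpoints; I would also note that the non-increasing hypothesis on $A$ and $B$ plays no role here, continuity together with $A \le B$ being all that the proof requires.
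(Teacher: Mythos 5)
Your proposal is correct, but it takes a genuinely different route from the paper. The paper proves Lemma \ref{conti_sup} by hand: it establishes right- and left-continuity separately, using the non-increasing hypothesis on $A$ and $B$ in an essential structural way to split each side into two cases (e.g., $A(Q+h) \le A(Q) \le B(Q+h) \le B(Q)$ versus $A(Q+h) \le B(Q+h) \le A(Q) \le B(Q)$ for $h>0$), decomposing the supremum over unions of intervals, and squeezing maximizers on the shrinking intervals $[A(Q+h), A(Q)]$ and $[B(Q+h), B(Q)]$ via the extreme value theorem. You instead realize $g$ as the value function of the correspondence $\Gamma(Q) = [A(Q), B(Q)]$ and either invoke Berge's maximum theorem (Proposition \ref{max_th}, already in the paper's appendix and used in Proposition \ref{cont_val_func}) after checking hemicontinuity, or give a direct $\liminf$/$\limsup$ argument with maximizers and the truncation sequence $y_n = \min(\max(y, A(Q_n)), B(Q_n))$. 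Both of your routes are sound, and your closing observation is accurate and worth emphasizing: monotonicity of $A$ and $B$ plays no role in your argument, so you prove a strictly more general statement than the paper does, whereas the paper's case analysis is unusable without it. One point to make explicit in the Berge route: the converse direction of the sequential characterization of upper hemicontinuity (Proposition \ref{seq_carac_hemicont}) requires a compact codomain, which $\mathbb{R}$ is not; you gesture at the fix (local boundedness of $A$ and $B$ near $Q_0$), but you should say plainly that hemicontinuity is local, so one restricts $Q$ to a compact neighborhood of $Q_0$ and takes as codomain the compact interval $\bigl[\inf A, \sup B\bigr]$ over that neighborhood, exactly reproducing the compact setting of Proposition \ref{adm_set_continuous}. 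With that sentence added, either of your arguments is a complete and arguably cleaner replacement for the paper's proof; what the paper's elementary approach buys in exchange is only self-containedness at the level of first-year real analysis, at the cost of generality and a lengthier case analysis.
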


\begin{proof}
To prove this lemma, we proceed by proving the function $g$ is both left and right continuous. Let us start with the right-continuity. Let $Q \in \mathbb{R}$ and $h$ a positive real number. Since $A$ and $B$ are non-increasing functions, two cases can be distinguished

\vspace{0.2cm}
\noindent
$\doubleunderline{A(Q + h) \le A(Q) \le B(Q + h) \le B(Q).}$

\vspace{0.2cm}
\noindent
Using the definition of $g$, we have,
\begin{equation}
\label{decompo_cont_g}
g(Q + h) = \max\left(\underset{q \in [A(Q + h), A(Q)]}{\sup} \hspace{0.1cm} f(q), \underset{q \in [A(Q), B(Q + h)]}{\sup} \hspace{0.1cm} f(q) \right).
\end{equation}

\noindent
Since $f$ is continuous on the compact set $[A(Q + h), A(Q)]$, it attains its maximum on a point $\alpha(Q, h) \in [A(Q + h), A(Q)]$. Owing to the squeeze theorem, the latter implies that $ \lim\limits_{h \rightarrow 0} \alpha(Q, h) = A(Q)$ since $A$ is a continuous function. Thus it follows from the continuity of $f$
$$\lim\limits_{\substack{h \rightarrow 0 \\ >}} \hspace{0.1cm}\underset{q \in [A(Q + h), A(Q)]}{\sup} \hspace{0.1cm} f(q) = \lim\limits_{\substack{h \rightarrow 0 \\ >}} \hspace{0.1cm} f(\alpha(Q, h)) = f(A(Q)).$$

\noindent
Moreover, since $B(Q + h) \le B(Q)$, we have $\underset{q \in [A(Q), B(Q + h)]}{\sup} \hspace{0.1cm} f(q) \le \underset{q \in [A(Q), B(Q)]}{\sup} \hspace{0.1cm} f(q) = g(Q)$. Thus by the continuity of the maximum function and taking the limit in \eqref{decompo_cont_g} yields 
\begin{align*}
\lim\limits_{\substack{h \rightarrow 0 \\ >}} \hspace{0.1cm} g(Q + h) \le \lim\limits_{\substack{h \rightarrow 0 \\ >}} \hspace{0.1cm} \max\left(\underset{q \in [A(Q + h), A(Q)]}{\sup} \hspace{0.1cm} f(q), g(Q)\right) &=  \max\left(\lim\limits_{\substack{h \rightarrow 0 \\ >}} \hspace{0.1cm} \underset{q \in [A(Q + h), A(Q)]}{\sup} \hspace{0.1cm} f(q), g(Q)\right).\\
&= \max\big(f(A(Q)), g(Q)\big) \le g(Q).
\end{align*}

\noindent
It remains to prove that $\lim\limits_{\substack{h \rightarrow 0 \\ >}} \hspace{0.1cm} g(Q + h) \ge g(Q)$ to get the right-continuity. But since $A(Q + h) \le A(Q)$
\begin{equation}
\label{eq_ref_cont_sup}
g(Q) \le \underset{q \in [A(Q + h), B(Q)]}{\sup} \hspace{0.1cm} f(q) = \max\Big(g(Q+h), \underset{q \in [B(Q + h), B(Q)]}{\sup} \hspace{0.1cm} f(q) \Big).
\end{equation}

\noindent
As above, using the continuity of $f$ on the compact set $[B(Q + h), B(Q)]$ yields 
$$\lim\limits_{\substack{h \rightarrow 0 \\ >}} \hspace{0.1cm}\underset{q \in [B(Q + h), B(Q)]}{\sup} \hspace{0.1cm} f(q) = f(B(Q)).$$

\noindent
Therefore taking the limit in \eqref{eq_ref_cont_sup} yields
\begin{align*}
    g(Q) \le \max\left(\lim\limits_{\substack{h \rightarrow 0 \\ >}} \hspace{0.1cm} g(Q+h), f(B(Q)) \right) = \max\left(\lim\limits_{\substack{h \rightarrow 0 \\ >}} \hspace{0.1cm} g(Q+h), \lim\limits_{\substack{h \rightarrow 0 \\ >}} \hspace{0.1cm} f(B(Q + h)) \right) \le \lim\limits_{\substack{h \rightarrow 0 \\ >}} \hspace{0.1cm} g(Q+h).
\end{align*}

\noindent
where in the last inequality we used the fact that, $f(B(Q + h)) \le g(Q+h)$. This gives the right-continuity in this first case. Let us consider the second case.

\vspace{0.2cm}
\noindent
$\doubleunderline{A(Q + h) \le B(Q + h) \le A(Q) \le B(Q)}$

\vspace{0.2cm}
\noindent
Since $B(Q+h) \le A(Q)$, it follows from the definition of $g$ that,
\begin{align}
\label{eq_annexe_cont_sup_1}
    \lim\limits_{\substack{h \rightarrow 0 \\ >}} \hspace{0.1cm} g(Q+h) \le  \max\left(\lim\limits_{\substack{h \rightarrow 0 \\ >}} \hspace{0.1cm} \underset{q \in [A(Q + h), A(Q)]}{\sup} \hspace{0.1cm} f(q), g(Q) \right) = \max\big(f(A(Q)), g(Q) \big) = g(Q).
\end{align}

\noindent
where we used as above the continuity of $f$ on the compact set $[A(Q + h), A(Q)]$. Moreover, notice that
$$g(Q) \le \underset{q \in [A(Q + h), B(Q)]}{\sup} f(q) = \max \Big(g(Q+h), \underset{q \in [B(Q + h), B(Q)]}{\sup} \hspace{0.1cm} f(q) \Big)$$

\noindent
Then, taking the limit in the last inequality yields,
\begin{align}
\label{eq_annexe_cont_sup_2}
    g(Q) \le \max\Big(\lim\limits_{\substack{h \rightarrow 0 \\ >}} \hspace{0.1cm} g(Q+h), \lim\limits_{\substack{h \rightarrow 0 \\ >}} \hspace{0.1cm} \underset{q \in [B(Q + h), B(Q)]}{\sup} \hspace{0.1cm} f(q) \Big) \nonumber &=\max\Big(\lim\limits_{\substack{h \rightarrow 0 \\ >}} \hspace{0.1cm} g(Q+h), f(B(Q)) \Big) \nonumber\\
    &= \max\Big(\lim\limits_{\substack{h \rightarrow 0 \\ >}} \hspace{0.1cm} g(Q+h), \lim\limits_{\substack{h \rightarrow 0 \\ >}} \hspace{0.1cm} f(B(Q+h)) \Big) \nonumber\\
    &\le \lim\limits_{\substack{h \rightarrow 0 \\ >}} \hspace{0.1cm} g(Q+h).
\end{align}

\noindent
Thus, from equations \eqref{eq_annexe_cont_sup_1} and \eqref{eq_annexe_cont_sup_2} one may deduce that $\lim\limits_{\substack{h \rightarrow 0 \\ >}} \hspace{0.1cm} g(Q+h) = g(Q)$. So that $g$ is a right-continuous function. Proving the left-continuity can be handled in the same way. The idea is the following. We start with $h$ a negative real number and consider the two following cases: $A(Q) \le A(Q+h) \le B(Q) \le B(Q+h)$ and $A(Q) \le B(Q) \le A(Q+h) \le B(Q+h)$ and proceed as for the right-continuity. Which will give $\lim\limits_{\substack{h \rightarrow 0 \\ <}} \hspace{0.1cm} g(Q+h) = g(Q)$. Therefore $g$ is a continuous function on $\mathbb{R}$.
\end{proof}

The following theorem also concerns the continuity of function in a parametric optimization.

\begin{theorem}
\label{inf_comp}
If $X, Y$ are topological spaces and $Y$ is compact, then for any continuous function $f : X \times Y \to\mathbb{R}$, the function $g(x) := \underset{y \in Y}{\inf} f(x,y)$ is well-defined and continuous.
\end{theorem}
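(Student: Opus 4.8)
The plan is to prove that $g$ is well-defined and then to establish its continuity by showing it is simultaneously upper and lower semicontinuous, with the compactness of $Y$ entering crucially only in the lower semicontinuous part. First, for well-definedness, I would fix $x \in X$ and observe that $y \mapsto f(x,y)$ is continuous on the compact space $Y$; hence its image is a compact, and therefore bounded, subset of $\mathbb{R}$, so the quantity $g(x) = \inf_{y \in Y} f(x,y)$ is a genuine real number (indeed attained). This makes $g : X \to \mathbb{R}$ well-defined.

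For upper semicontinuity, I would fix $c \in \mathbb{R}$ and show that $\{x : g(x) < c\}$ is open. If $g(x_0) < c$, then there exists $y_0 \in Y$ with $f(x_0, y_0) < c$, and by continuity of $f$ at $(x_0, y_0)$ there is an open product neighborhood $U \times V$ of $(x_0, y_0)$ on which $f < c$. Then every $x \in U$ satisfies $g(x) \le f(x, y_0) < c$, so $U$ is contained in the sublevel set, which is thus open. This direction requires no compactness, only the pointwise witness $y_0$.

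The main obstacle is lower semicontinuity, i.e. showing $\{x : g(x) > c\}$ is open. Given $x_0$ with $g(x_0) > c$, I would choose an intermediate threshold $c'$ with $c < c' < g(x_0)$; this intermediate value is the delicate ingredient, since an infimum of quantities that are merely $> c$ can collapse to $c$, whereas a uniform bound by $c'$ survives the infimum. For each $y \in Y$ we have $f(x_0, y) \ge g(x_0) > c'$, so continuity of $f$ furnishes open neighborhoods $U_y \ni x_0$ and $V_y \ni y$ with $f > c'$ on $U_y \times V_y$. The family $\{V_y\}_{y \in Y}$ covers the compact space $Y$, so finitely many $V_{y_1}, \ldots, V_{y_n}$ already cover it; setting $U = \bigcap_{i=1}^{n} U_{y_i}$ gives an open neighborhood of $x_0$. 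For any $x \in U$ and any $y \in Y$, choosing $i$ with $y \in V_{y_i}$ yields $(x,y) \in U_{y_i} \times V_{y_i}$, hence $f(x,y) > c'$; taking the infimum over $y$ gives $g(x) \ge c' > c$. Thus $U$ lies in the superlevel set, which is therefore open.

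Combining the two semicontinuity statements yields the continuity of $g$. I expect the two points needing care to be precisely the extraction of a finite subcover, which is the sole place where compactness of $Y$ is used, and the deliberate insertion of the strict intermediate level $c'$, which converts the pointwise strict inequalities $f(x,y) > c'$ into a strict lower bound on the infimum $g(x)$.
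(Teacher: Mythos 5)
Your proof is correct and follows essentially the same route as the paper: showing that the sublevel sets $\{x : g(x)<c\}$ and superlevel sets $\{x : g(x)>c\}$ are open is exactly the paper's verification that $g^{-1}((-\infty,a))$ and $g^{-1}((b,\infty))$ are open on a subbase of $\mathbb{R}$, and your finite subcover of box neighborhoods is precisely the paper's tube-lemma argument (your direct witness argument for the sublevel sets replaces the paper's appeal to the openness of the projection $\pi_X$, but with the same content). One minor refinement worth noting: the paper passes from $f(x,y)>b$ for all $y\in Y$ to $g(x)>b$, which strictly speaking relies on the infimum being attained, whereas your intermediate threshold $c<c'<g(x_0)$ makes the strict inequality survive the infimum without invoking attainment.
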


\begin{proof}
Note that $g(x) > -\infty$ since for any fixed $x\in X, f(x,\cdot):Y\to\mathbb{R}$ is a continuous function defined on a compact space, and hence the infimum is attained. Then using that the sets $(-\infty,a)$ and $(b,\infty)$ form a subbase for the topology of $\mathbb{R}$, it suffices to check that $g^{-1}((-\infty,a))$ and $g^{-1}((b,\infty))$ are open. Let $\pi_X$ be the canonical projection $\pi_X:X\times Y\to X$, which we recall is continuous and open. It is easy to see that $g^{-1}((-\infty,a)) = \pi_X \circ f^{-1}((-\infty,a))$. Thus since $f$ and $\pi_X$ are continuous, $g^{-1}((-\infty,a))$ is open.

\vspace{0.2cm}
We now need to show that $g^{-1}((b,\infty))$ is open. We rely on the compactness of $Y$. Observe that,
$$g(x) > b \implies f(x,y) > b~ \forall y \implies \forall y, (x,y) \in f^{-1}((b,\infty)).$$ 

\noindent
Since $f$ is continuous, then $f^{-1}((b,\infty))$ is open. The latter implies that for all $ x\in g^{-1}((b,\infty))$ and for all $y\in Y$ there exists a \q{box} neighborhood $U_{(x,y)}\times V_{(x,y)}$ contained in $f^{-1}((b,\infty))$. Now using compactness of $Y$, a finite subset $\{(x,y_i)\}$ of all these boxes cover $\{x\}\times Y$ and we get,
$$\displaystyle \{x\}\times Y \subset \left( \cap_{i = 1}^k U_{(x,y_i)}\right)\times Y \subset f^{-1}((b,\infty))$$

\noindent
and hence $\displaystyle g^{-1}((b,\infty)) = \cup_{x\in g^{-1}((b,\infty))} \cap_{i = 1}^{k(x)} U_{x,y_i}$ is open. Which completes the proof.
\end{proof}

\begin{Proposition}[Gram determinant]
\label{gram_det}
Let $F$ be a linear subspace with dimension $n$ of a pre-Hilbert space $E$. Consider $(x_1, \ldots, x_n)$ as a basis of $F$ and $x \in E$. Let $p(x)$ denotes the orthogonal projection of $x$ onto $F$. Then,
$$G(x, x_1, \ldots, x_n) = \big\| x - p(x) \big\|^2 \cdot G(x_1, \ldots, x_n)$$

\noindent
where $G(x_1, \ldots, x_n)$ denotes the Gram determinant associated to $(x_1, \ldots, x_n)$.
\end{Proposition}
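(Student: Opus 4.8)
The plan is to reduce the identity to an elementary row operation on the Gram matrix that leaves its determinant unchanged. First I would use the orthogonal decomposition $x = p(x) + \big(x - p(x)\big)$, where, by the defining property of the orthogonal projection onto $F$, the residual $x - p(x)$ is orthogonal to $F$; in particular $\langle x - p(x), x_j \rangle = 0$ for every $j = 1, \ldots, n$. Since $p(x) \in F$ and $(x_1, \ldots, x_n)$ is a basis of $F$, I would expand $p(x) = \sum_{i=1}^n \lambda_i x_i$ for suitable real scalars $\lambda_1, \ldots, \lambda_n$.

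Next I would consider the Gram matrix $M$ of the family $(x, x_1, \ldots, x_n)$, whose first row collects the inner products $\langle x, x \rangle, \langle x, x_1 \rangle, \ldots, \langle x, x_n \rangle$ and whose $(i+1)$-th row collects $\langle x_i, x \rangle, \langle x_i, x_1 \rangle, \ldots, \langle x_i, x_n \rangle$. The key step is the row operation $R_1 \leftarrow R_1 - \sum_{i=1}^n \lambda_i R_{i+1}$. Because such an operation does not modify the determinant, and because the inner product is bilinear, the new first row has entry $\langle x - p(x), x \rangle$ in the first column and entry $\langle x - p(x), x_j \rangle$ in the column associated with $x_j$.

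Then I would invoke the orthogonality relations: each entry $\langle x - p(x), x_j \rangle$ vanishes, while the leading entry equals $\langle x - p(x), x - p(x) \rangle = \big|\big| x - p(x) \big|\big|^2$, using once more that $\langle x - p(x), p(x) \rangle = 0$. Hence, after this operation, the first row reads $\big(\big|\big| x - p(x) \big|\big|^2, 0, \ldots, 0\big)$. Expanding the determinant along this first row (Laplace expansion) isolates the cofactor of the leading entry, which is precisely the Gram determinant $G(x_1, \ldots, x_n)$ of the submatrix obtained by deleting the first row and first column. This yields the claimed identity $G(x, x_1, \ldots, x_n) = \big|\big| x - p(x) \big|\big|^2 \cdot G(x_1, \ldots, x_n)$.

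As for the main difficulty, there is essentially none beyond bookkeeping, since the argument is purely linear-algebraic. The only points requiring care are that the row operation must use exactly the coefficients $\lambda_i$ of $p(x)$ in the basis, and that the symmetry and bilinearity of the real inner product are what allow the row operation to amount to replacing $x$ by $x - p(x)$ inside each inner product of the first row. One should also note that the statement tacitly assumes $(x_1, \ldots, x_n)$ is a genuine basis, so that $G(x_1, \ldots, x_n) \neq 0$ and the decomposition $p(x) = \sum_i \lambda_i x_i$ is well defined.
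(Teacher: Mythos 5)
Your proof is correct and follows essentially the same route as the paper's: replace $x$ by $x - p(x)$ via a determinant-preserving elementary operation on the Gram matrix, then use the orthogonality $\langle x - p(x), x_j\rangle = 0$ to factor out $\|x - p(x)\|^2$. The only cosmetic difference is that you spell out the single row operation and the Laplace expansion explicitly, while the paper states the replacement $G(x, x_1, \ldots, x_n) = G(x - p(x), x_1, \ldots, x_n)$ directly.
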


\begin{proof}
Note that $p(x)$ is a linear combination of $(x_i)_{1 \le i \le n}$. Since the determinant is stable by elementary operation, we have
$$G(x, x_1, \ldots, x_n) = G\big(x - p(x), x_1, \ldots, x_n\big).$$

\noindent
But $x - p(x)$ is orthogonal to each $x_i$ so that,
$$G\big(x - p(x), x_1, \ldots, x_n\big) =  \big\| x - p(x) \big\|^2 \cdot G(x_1, \ldots, x_n).$$

\noindent
this completes the proof.
\end{proof}

\subsection{Correspondences}
\textit{This section concerns correspondence and the well known Berge's maximum theorem. For a thorough analysis of the concept of correspondence, one may refer to Chapter 2 and 6 in \cite{berge1997topological}.}
\label{corresp}

\begin{definition}[Correspondence]
  Let $X$ and $Y$ be two non-empty sets. 

\begin{itemize}
	\item a \textbf{correspondence} $\Gamma$ from $X$ to $2^Y$ (noted: $\Gamma: X \rightrightarrows 2^Y$) is a mapping that associates for all $x \in X$ a subset $\Gamma(x)$ of $Y$. Moreover for all subset $S \subseteq X$, $\Gamma(S) := \cup_{x \in S}^{}\hspace{0.1cm}\Gamma(x)$.
	
	\item a correspondence $\Gamma$ is \textbf{single-valued} if $Card(\Gamma(x)) = 1$ for all $x \in X$
	
	\item a correspondence $\Gamma$ is \textbf{compact-valued} (or \textbf{closed-valued}) if for all $x \in X$, $\Gamma(x)$ is a compact (or closed) set.
\end{itemize}  
  
\end{definition}

Notice that a single-valued correspondence can be thought of as a function mapping $X$ into $Y$. Thus as correspondences appear to be a generalization of functions 
some properties or definitions in functions has their extension in correspondences. Specially the continuity for a classic numerical function is a particular case of 
the hemicontinuity for a correspondence. We only present the sequential characterization.

\begin{Proposition}[Sequential characterization of hemicontinuity]
\label{seq_carac_hemicont}
Let $(X, d_X)$ and $(Y, d_Y)$ be two metric spaces and $\Gamma: X \rightrightarrows 2^Y$ a correspondence.

\begin{itemize} 
    \item $\Gamma$ is lower hemicontinuous at $x \in X$ if and only if for all sequence $(x_n)_{n \in \mathbb{N}} \in X^{\mathbb{N}}$ that converges towards $x$, for all $y \in \Gamma(x)$ there exists a subsequence $(x_{n_k})_{k \in \mathbb{N}}$ of $(x_n)_{n \in \mathbb{N}}$ and a sequence $(y_k)_{k \in \mathbb{N}}$ such that $y_k \in \Gamma(x_{n_k})$ for all $k \in \mathbb{N}$ and $y_k \to y$.
    \vspace{0.2cm}
    
    \item if $\Gamma$ is upper hemicontinuous at $x \in X$ then for all sequence $(x_n)_{n \in \mathbb{N}} \in X^{\mathbb{N}}$ and all sequence $(y_n)_{n \in \mathbb{N}}$ such that for all $n \in \mathbb{N}, y_n \in \Gamma(x_n)$, there exists a convergent subsequence of $(y_n)_{n \in \mathbb{N}}$ whose limit lies in $\Gamma(x)$. If $Y$ is compact then, the converse holds true.
\end{itemize}
\end{Proposition}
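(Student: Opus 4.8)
The plan is to convert the open-neighbourhood definitions of lower and upper hemicontinuity into statements about metric balls (legitimate since $(X,d_X)$ and $(Y,d_Y)$ are metric), and then run standard extraction and contradiction arguments. I would treat the two bullets separately, and within each prove the ``only if'' and the ``if'' directions in turn, using the distance-to-a-set function of $(Y,d_Y)$ as the main quantitative device.

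For the lower hemicontinuous bullet (a genuine equivalence requiring no compactness), I would first handle the direct implication. Assume $\Gamma$ is lower hemicontinuous at $x$, fix $x_n \to x$ and $y \in \Gamma(x)$. For each integer $m \ge 1$ the ball $V_m := B_Y(y, 1/m)$ is open and meets $\Gamma(x)$ since it contains $y$, so lower hemicontinuity yields an open $U_m \ni x$ with $\Gamma(z) \cap V_m \ne \emptyset$ for every $z \in U_m$. Because $x_n \to x$, there is an index $N_m$ with $x_n \in U_m$ for all $n \ge N_m$, and I may arrange $N_1 < N_2 < \cdots$ so that $n_k := N_k$ defines a genuine subsequence; for each $k$, $x_{n_k} \in U_k$ produces some $y_k \in \Gamma(x_{n_k}) \cap B_Y(y, 1/k)$, whence $d_Y(y_k, y) < 1/k \to 0$. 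For the converse I would argue by contradiction: if $\Gamma$ fails to be lower hemicontinuous at $x$, there is an open $V$ with $\Gamma(x) \cap V \ne \emptyset$ such that every ball $B_X(x, 1/n)$ contains a point $x_n$ with $\Gamma(x_n) \cap V = \emptyset$; then $x_n \to x$, and picking $y \in \Gamma(x) \cap V$ the sequential hypothesis produces a subsequence $x_{n_k}$ and $y_k \in \Gamma(x_{n_k})$ with $y_k \to y \in V$. Since $V$ is open, $y_k \in V$ for large $k$, contradicting $\Gamma(x_{n_k}) \cap V = \emptyset$.

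For the upper hemicontinuous bullet I would again split directions. For the direct implication, fix $x_n \to x$ and $y_n \in \Gamma(x_n)$ and use the open enlargements $V_\varepsilon := \{ y \in Y : d_Y(y, \Gamma(x)) < \varepsilon \}$ of $\Gamma(x)$: upper hemicontinuity gives, for each $\varepsilon = 1/m$, an open $U_m \ni x$ with $\Gamma(z) \subseteq V_{1/m}$ for $z \in U_m$, and since $x_n \to x$ this forces $d_Y(y_n, \Gamma(x)) \to 0$. Extracting a convergent subsequence and locating its limit in $\Gamma(x)$ is the step that leans on compactness: selecting nearest points $z_n \in \Gamma(x)$ with $d_Y(y_n, z_n) = d_Y(y_n, \Gamma(x))$ and passing to $z_{n_k} \to z \in \Gamma(x)$ gives $y_{n_k} \to z$ by the triangle inequality. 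For the converse under $Y$ compact I would use the contradiction symmetric to the lower case: if $\Gamma$ is not upper hemicontinuous at $x$ there is an open $V \supseteq \Gamma(x)$ together with $x_n \to x$ and $y_n \in \Gamma(x_n) \setminus V$; the sequential hypothesis yields $y_{n_k} \to y \in \Gamma(x) \subseteq V$, yet $Y \setminus V$ is closed so $y \notin V$, a contradiction.

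The main obstacle is the upper hemicontinuous direction, precisely because its defining condition quantifies over all open supersets of $\Gamma(x)$ while the sequential statement asserts the existence of a convergent subsequence: producing that subsequence needs sequential compactness (of $Y$, which holds in the intended application where $Y$ is the compact interval $[q_{\min},q_{\max}]$), and placing its limit inside $\Gamma(x)$ needs $\Gamma(x)$ to be closed or compact. The distance-to-set function $d_Y(\cdot,\Gamma(x))$ and the nearest-point selection above are what make this quantitative and rigorous. By contrast, the lower hemicontinuous equivalence is purely a matter of translating between shrinking balls and the defining open sets and carries no compactness requirement.
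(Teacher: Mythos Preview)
The paper does not actually supply a proof of this proposition: it is stated in the appendix as a standard fact, with a reference to Berge's monograph for details. So there is no ``paper's own proof'' to compare against, and your proposal stands on its own.

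Your argument is correct and follows the standard route. The lower-hemicontinuity equivalence is handled cleanly via shrinking balls and a contradiction, exactly as one finds in the textbooks. For upper hemicontinuity you correctly isolate the delicate point: the direct implication, as the paper states it, is not true without an extra compactness hypothesis. Indeed, with $X=Y=\mathbb{R}$ and $\Gamma(x)=\mathbb{R}$ for all $x$, the correspondence is trivially upper hemicontinuous, yet $x_n=1/n\to 0$ and $y_n=n\in\Gamma(x_n)$ admit no convergent subsequence. Your proof obtains $d_Y(y_n,\Gamma(x))\to 0$ from upper hemicontinuity alone, but then---as you rightly note---extracting a convergent subsequence and placing its limit in $\Gamma(x)$ requires $\Gamma(x)$ (or $Y$) to be compact. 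This is an implicit hypothesis the paper's statement omits; in the only place the paper uses the proposition (the proof of Proposition~\ref{adm_set_continuous}), $Y=[q_{\min},q_{\max}]$ is compact and the correspondence is compact-valued, so the gap is harmless there. Your flagging of this point is a genuine improvement over the paper's bare statement.

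One very minor quibble: your nearest-point selection $z_n\in\Gamma(x)$ with $d_Y(y_n,z_n)=d_Y(y_n,\Gamma(x))$ presupposes that the infimum is attained, i.e.\ that $\Gamma(x)$ is compact (closed is not enough in a general metric space). Since you are already assuming compactness at that step, this is fine, but it is worth saying explicitly.
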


An important result relating correspondence and parametric optimization is the Berge's maximum theorem.

\begin{Proposition}[Berge's maximum theorem]
\label{max_th}
Let $\mathcal{Q}$ and $Y$ be two topological spaces, $\Gamma: \mathcal{Q} \rightrightarrows 2^Y$ a compact-valued and continuous correspondence and $\phi$ a continuous function on the product space $Y \times \mathcal{Q}$. Define for all $Q\in \mathcal{Q}$
$$\sigma(Q) := \argmax_{q \in \Gamma(Q)} \hspace{0.1cm} \phi(q, Q) \hspace{0.6cm} \phi^{*}(Q) := \underset{q \in \Gamma(Q)}{\max} \hspace{0.1cm} \phi(q, Q).$$

\noindent
Then,
\begin{itemize}
    \item The correspondence $\sigma: \mathcal{Q} \rightrightarrows Y$  is compact-valued, upper hemicontinuous, and closed.
    \item The function $\phi^{*}: \mathcal{Q} \to \mathbb{R}$ is continuous.
\end{itemize}
\end{Proposition}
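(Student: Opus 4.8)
The plan is to prove the four assertions in the natural order: well-definedness of $\phi^{*}$ and $\sigma$, compact-valuedness of $\sigma$, continuity of $\phi^{*}$, and finally upper hemicontinuity (and closedness) of $\sigma$, the last being the delicate point. Throughout I would use the open-set definitions of hemicontinuity recorded earlier in the appendix.

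First I would fix $Q \in \mathcal{Q}$ and observe that $\Gamma(Q)$ is compact and non-empty while $q \mapsto \phi(q, Q)$ is continuous; by the Weierstrass extreme value theorem the supremum is attained, so $\phi^{*}(Q) = \max_{q \in \Gamma(Q)} \phi(q,Q)$ is a finite real number and $\sigma(Q) \neq \emptyset$. Since $\sigma(Q) = \Gamma(Q) \cap \{q : \phi(q,Q) = \phi^{*}(Q)\}$ is the intersection of the compact set $\Gamma(Q)$ with the closed set $\phi(\cdot, Q)^{-1}(\{\phi^{*}(Q)\})$, it is compact. This settles the first two points.

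Next I would prove continuity of $\phi^{*}$ by establishing upper and lower semicontinuity separately, each using one half of the continuity of $\Gamma$. For upper semicontinuity at $Q_0$, fix $\varepsilon > 0$ and set $G = \{(q, Q) : \phi(q, Q) < \phi^{*}(Q_0) + \varepsilon\}$, which is open since $\phi$ is continuous and contains every $(q, Q_0)$ with $q \in \Gamma(Q_0)$. Covering the compact set $\Gamma(Q_0)$ by finitely many basic boxes $A_i \times B_i \subseteq G$ (a tube-lemma argument), I would set $W = \bigcup_i A_i \supseteq \Gamma(Q_0)$ and $B = \bigcap_i B_i \ni Q_0$; upper hemicontinuity of $\Gamma$ then provides a neighbourhood on which $\Gamma(Q) \subseteq W$, and for $Q$ in its intersection with $B$ every $q \in \Gamma(Q)$ satisfies $\phi(q,Q) < \phi^{*}(Q_0) + \varepsilon$, whence $\phi^{*}(Q) \le \phi^{*}(Q_0) + \varepsilon$. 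For lower semicontinuity I would pick a maximiser $q_0 \in \sigma(Q_0)$, use continuity of $\phi$ to find a box $A \times B \ni (q_0, Q_0)$ on which $\phi > \phi^{*}(Q_0) - \varepsilon$, and invoke lower hemicontinuity of $\Gamma$ to guarantee $\Gamma(Q) \cap A \neq \emptyset$ near $Q_0$; any point of this intersection witnesses $\phi^{*}(Q) > \phi^{*}(Q_0) - \varepsilon$. Combining the two halves gives continuity of $\phi^{*}$.

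The main obstacle is the upper hemicontinuity of $\sigma$, where I must prevent maximisers from escaping a prescribed open set. Given open $V \supseteq \sigma(Q_0)$, I would consider the compact set $K = \Gamma(Q_0) \setminus V$; if $K = \emptyset$ the claim follows directly from upper hemicontinuity of $\Gamma$, so assume $K \neq \emptyset$. On $K$ one has $\phi(\cdot, Q_0) < \phi^{*}(Q_0)$, so by compactness $\max_K \phi(\cdot, Q_0) < \phi^{*}(Q_0) - \varepsilon$ for some $\varepsilon > 0$. As in the semicontinuity step I would cover $K$ by finitely many boxes contained in $\{(q,Q) : \phi(q,Q) < \phi^{*}(Q_0) - \varepsilon/2\}$, obtaining an open $A \supseteq K$ and $B \ni Q_0$; since $\Gamma(Q_0) \subseteq V \cup A$, upper hemicontinuity of $\Gamma$ yields a neighbourhood on which $\Gamma(Q) \subseteq V \cup A$. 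Intersecting with $B$ and with the neighbourhood on which $\phi^{*}(Q) > \phi^{*}(Q_0) - \varepsilon/2$ (available from the already-proved continuity of $\phi^{*}$), I get that any $q \in \Gamma(Q) \setminus V$ lies in $A$ and hence satisfies $\phi(q,Q) < \phi^{*}(Q_0) - \varepsilon/2 < \phi^{*}(Q)$, so $q \notin \sigma(Q)$; thus $\sigma(Q) \subseteq V$ and $\sigma$ is upper hemicontinuous. Closedness of $\sigma$ I would deduce by writing its graph as $\{(Q,q) : q \in \Gamma(Q)\} \cap \{(Q,q) : \phi^{*}(Q) - \phi(q,Q) = 0\}$, a closed set since the graph of the upper hemicontinuous compact-valued $\Gamma$ is closed and the second factor is the zero set of a continuous function. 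The one point demanding care throughout is the interplay between the two halves of hemicontinuity of $\Gamma$ and the finite-subcover arguments, which is precisely where compact-valuedness of $\Gamma$ is essential.
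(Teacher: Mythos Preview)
The paper does not actually prove this proposition: Berge's maximum theorem is stated in the appendix as a classical result, with a reference to Berge's monograph, and is then used as a black box in the proof of Proposition~\ref{cont_val_func}. So there is no ``paper's own proof'' to compare against.

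Your argument is the standard textbook proof and is correct in substance. The structure---Weierstrass for well-definedness, closed-intersect-compact for compact-valuedness of $\sigma$, upper (resp.\ lower) semicontinuity of $\phi^{*}$ from upper (resp.\ lower) hemicontinuity of $\Gamma$ via tube-lemma coverings, and the ``strictly submaximal on the compact complement $K=\Gamma(Q_0)\setminus V$'' trick for upper hemicontinuity of $\sigma$---is exactly the classical route. One small caveat: in your closedness step you invoke that an upper hemicontinuous compact-valued correspondence has a closed graph. This requires $Y$ to be Hausdorff (so that compact sets are closed and limits are unique); the proposition as stated in the paper says only ``topological spaces''. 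This is a cosmetic gap in the statement rather than in your proof, and it is irrelevant for the paper's applications, which all live in metric spaces.
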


\end{document}